\documentclass[a4paper,UKenglish,cleveref, autoref, thm-restate,numberwithinsect]{lipics-v2021}
%This is a template for producing LIPIcs articles. 
%See lipics-v2021-authors-guidelines.pdf for further information.
%for A4 paper format use option "a4paper", for US-letter use option "letterpaper"
%for british hyphenation rules use option "UKenglish", for american hyphenation rules use option "USenglish"
%for section-numbered lemmas etc., use "numberwithinsect"
%for enabling cleveref support, use "cleveref"
%for enabling autoref support, use "autoref"
%for anonymousing the authors (e.g. for double-blind review), add "anonymous"
%for enabling thm-restate support, use "thm-restate"
%for enabling a two-column layout for the author/affilation part (only applicable for > 6 authors), use "authorcolumns"
%for producing a PDF according the PDF/A standard, add "pdfa"

\pdfoutput=1 %uncomment to ensure pdflatex processing (mandatatory e.g. to submit to arXiv)
\hideLIPIcs  %uncomment to remove references to LIPIcs series (logo, DOI, ...), e.g. when preparing a pre-final version to be uploaded to arXiv or another public repository

%\graphicspath{{./graphics/}}%helpful if your graphic files are in another directory

\usepackage{algorithmic}
\usepackage[linesnumbered,ruled,vlined]{algorithm2e} % For algorithms
\usepackage{tabularx}
\usepackage{diagbox}
\usepackage{footmisc}
\usepackage{mathtools}

\usepackage{graphics}
\usepackage{caption}

\usepackage{footmisc}

\newcommand{\NP}{\mbox{\sf NP}}
\newcommand{\union}{\cup}
\newcommand{\prob}[2][]{\text{\bf Pr}_{#1}\left (#2\right)}
\newcommand{\E}{\ensuremath{\mathbb E}}
\renewcommand{\phi}{\varphi}
\newcommand{\eps}{\epsilon}
\newcommand{\poly}{\operatorname{poly}}
\newenvironment{proofof}[1]{\noindent{\bf Proof of #1.}}%
        {\hspace*{\fill}$\Box$\par\vspace{4mm}}

\newenvironment{properties}[2][0]
{
\begin{enumerate} \setcounter{enumi}{#1}}{\end{enumerate}}

\newcommand{\jjr}[1]{{#1}}

\bibliographystyle{plainurl}% the mandatory bibstyle

\title{Bayesian Calibrated Click-Through Auctions\footnote{Authors are listed in $\alpha$-$\beta$ order.}} %TODO Please add

%\title{Bayesian Calibrated Click-Through Auctions} %TODO Please add

%\titlerunning{Dummy short title} %TODO optional, please use if title is longer than one line

\author{Junjie Chen}{City University of Hong Kong, Hong Kong, China}{junjchen9-c@my.cityu.edu.hk}{}{}%TODO mandatory, please use full name; only 1 author per \author macro; first two parameters are mandatory, other parameters can be empty. Please provide at least the name of the affiliation and the country. The full address is optional. Use additional curly braces to indicate the correct name splitting when the last name consists of multiple name parts.

\author{Minming Li}{City University of Hong Kong, Hong Kong, China}{minming.li@cityu.edu.hk}{}{}

\author{Haifeng Xu}{University of Chicago, USA}{haifengxu@uchicago.edu}{}{Supported  by NSF Award  CCF-2303372, Army Research Office Award W911NF-23-1-0030,   Office of Naval Research Award N00014-23-1-2802 and the AI2050 program at Schmidt Sciences (Grant G-24-66104).}

\author{Song Zuo}{Google Research, USA}{songzuo.z@gmail.com}{}{}

\authorrunning{J. Chen, M. Li, H. Xu and S. Zuo} %TODO mandatory. First: Use abbreviated first/middle names. Second (only in severe cases): Use first author plus 'et al.'

\Copyright{Junjie Chen, Minming Li, Haifeng Xu and Song Zuo} %TODO mandatory, please use full first names. LIPIcs license is "CC-BY";  http://creativecommons.org/licenses/by/3.0/

\ccsdesc[100]{Theory of computation~Algorithmic game theory and mechanism design} %TODO mandatory: Please choose ACM 2012 classifications from https://dl.acm.org/ccs/ccs_flat.cfm 

\keywords{information design; ad auctions; online advertising; mechanism design} %TODO mandatory; please add comma-separated list of keywords

\category{Track A: Algorithms, Complexity and Games} %optional, e.g. invited paper

\relatedversion{} %optional, e.g. full version hosted on arXiv, HAL, or other respository/website
%\relatedversiondetails[linktext={opt. text shown instead of the URL}, cite=DBLP:books/mk/GrayR93]{Classification (e.g. Full Version, Extended Version, Previous Version}{URL to related version} %linktext and cite are optional

%\supplement{}%optional, e.g. related research data, source code, ... hosted on a repository like zenodo, figshare, GitHub, ...
%\supplementdetails[linktext={opt. text shown instead of the URL}, cite=DBLP:books/mk/GrayR93, subcategory={Description, Subcategory}, swhid={Software Heritage Identifier}]{General Classification (e.g. Software, Dataset, Model, ...)}{URL to related version} %linktext, cite, and subcategory are optional

%\funding{(Optional) general funding statement \dots}%optional, to capture a funding statement, which applies to all authors. Please enter author specific funding statements as fifth argument of the \author macro.

%\acknowledgements{I want to thank \dots}%optional

\nolinenumbers %uncomment to disable line numbering

%Editor-only macros:: begin (do not touch as author)%%%%%%%%%%%%%%%%%%%%%%%%%%%%%%%%%%
\EventEditors{Karl Bringmann, Martin Grohe, Gabriele Puppis, and Ola Svensson}
\EventNoEds{4}
\EventLongTitle{51st International Colloquium on Automata, Languages, and Programming (ICALP 2024)}
\EventShortTitle{ICALP 2024}
\EventAcronym{ICALP}
\EventYear{2024}
\EventDate{July 8--12, 2024}
\EventLocation{Tallinn, Estonia}
\EventLogo{}
\SeriesVolume{297}
\ArticleNo{12}
%%%%%%%%%%%%%%%%%%%%%%%%%%%%%%%%%%%%%%%%%%%%%%%%%%%%%%

\begin{document}

\maketitle

%TODO mandatory: add short abstract of the document
%\begin{abstract}
%{We study information design in \emph{click-through auctions}, where the bidders/advertisers bid for winning an opportunity to show their ads but only pay for realized clicks. Bidders have private values, whereas the seller has private information about each bidder's \emph{click-through rates} (CTRs). This work delves into the theoretical realm of the seller's revenue-maximization problem by partially revealing CTR information.  Our first result is an FPTAS to compute an approximately optimal mechanism under a constant number of bidders. The design of this algorithm leverages Bayesian bidder values which help to ``smooth'' the seller's revenue function and lead to better tractability. The design of this FPTAS is complex and primarily algorithmic. Our second main result pursues the design of ``simple'' mechanisms  that are approximately optimal yet more implementable. We primarily focus on the two-bidder situation, which is already notoriously challenging as demonstrated in recent works. When bidders' CTR distribution is symmetric, we develop a simple \emph{prior-free} signaling scheme,  whose construction relies on a parameter termed \emph{optimal signal ratio}. The constructed scheme provably obtains a good approximation as long as the maximum and minimum of bidders' value density functions do not differ much. }
%\end{abstract}

\begin{abstract}
    We study information design in \emph{click-through auctions}, in which the bidders/advertisers bid for winning an opportunity to show their ads but only pay for realized clicks. The payment may or may not happen, and its probability is called the \emph{click-through rate} (CTR). 
This auction format is widely used in the industry of online advertising. 
Bidders have private values, whereas the seller has private information about each bidder's CTRs. We are interested in the seller's problem of partially revealing CTR information to maximize revenue. Information design in click-through auctions turns out to be intriguingly different from almost all previous studies in this space since any revealed information about CTRs will never affect bidders' bidding behaviors --- they will always bid their true value per click --- but only affect the auction's \emph{allocation} and \emph{payment} rule. In some sense, this makes information design effectively a constrained mechanism design problem.

 Our first result is an FPTAS to compute an approximately optimal mechanism under a constant number of bidders. The design of this algorithm leverages Bayesian bidder values which help to ``smooth'' the seller's revenue function and lead to better tractability. The design of this FPTAS is complex and primarily algorithmic. Our second main result pursues the design of ``simple'' mechanisms  that are approximately optimal yet more practical. We primarily focus on the two-bidder situation, which is already notoriously challenging as demonstrated in recent works. When bidders' CTR distribution is symmetric, we develop a simple \emph{prior-free} signaling scheme,  whose construction relies on a parameter termed \emph{optimal signal ratio}. The constructed scheme provably obtains a good approximation as long as the maximum and minimum of bidders' value density functions do not differ much. 
\end{abstract}

\section{Introduction}
Many of the phenomenal Internet Technology companies are powered by online advertising \cite{edelman2007internet}. When {an} Internet user browses a webpage, an ad auction {may be} run to determine which ads to be displayed to this user. {Such ad auctions can be either done by the webpage owner or through ad exchange platforms.}
{Large website owners (sellers) sometimes may know the users much better than individual advertisers (bidders) do.} 
% Ad auction platforms like Google and Facebook have significantly more information about each Internet user than any single advertiser.
It is thus natural for the {seller} to utilize such information advantage to {improve its ad revenue}. 
% For example, as a common practice and arguably an essential piece of today's ad auctions, platforms use their massive data to predict Internet user's {\em click-through rate} (CTR), and use such information to run their auctions. 
{As an example, the seller's knowledge about the user can be used to better} predict an Internet user's {\em click-through rate} (CTR), {which is then used in ad auctions to deliver ad impressions more efficiently.}

Interestingly, a recent line of works, starting from Bro Miltersen and Sheffet~\cite{bro2012send}, Emek et al.~\cite{emek2014signaling}, comes to realize that it may not always be the seller's best interest to {use} as much information as possible. That is, {partially use} certain amount -- and the right type --- of information may be more beneficial for the seller's revenue. %\jj{moved to section 4: The underlying intuition is simple: revenue extraction in auctions needs competition and too ``fine-grained'' targeting information may lead to a thin market without much competition whereas too obscure information may lead to little interest from every bidder, both leading to bad revenue. }
This insight has motivated many studies on the algorithmic problem of optimal \emph{information design} --- i.e., determining what information to be shared with which bidders --- in order to maximize revenue in different auction formats \cite{bro2012send,emek2014signaling,bergemann2022optimal,li2019revenue,fu2012ad,badanidiyuru2018targeting}.

\jjr{While there are two main types of bidding strategies in online advertising: autobidding \cite{deng2022fairness,balseiro2021landscape,balseiro2021robust} and manual bidding, to better understand the optimal design of information in an auction environment, we follow the rich literature and turn to the cleaner manual bidding where the platform incentivizes buyers to tell the truth.} %\jj{delete:This paper also subscribes to this rich literature of information design in auctions.} 
However, different from almost all previous works focusing on  stylized auction formats such as the second price auction for independent-value bidders \cite{bro2012send,emek2014signaling,bergemann2022optimal,badanidiyuru2018targeting} and common-value bidders \cite{li2019revenue} and Myerson's optimal auction \cite{fu2012ad}, we focus on a different auction format, i.e., the \emph{click-through auction}, which is the main auction format employed by the current online advertising industry \cite{gspguide}. The click-through auction is also widely known as the generalized second-price auction \cite{edelman2007internet} or position auctions \cite{varian2007position}. We use the term ``click-through auction'' as coined by Bergemann et al.~\cite{bergemann2021calibrated} to emphasize the sale of \emph{clicks}, since this  factor turns out to make information design in click-through auctions significantly different from that in almost all other previously studied auctions.    In a click-through auction for selling a single ad position, each bidder $i$ submits a bid $b_i$ expressing his value for each \emph{click} of his ad. Additionally, the seller {will estimate} a \emph{click through rate} (CTR)   $r_i$ for bidder $i$. The auction runs by ranking bidders according to the \emph{product score} $b_i r_i$, denoted as $b_{(1)} r_{(1)} \geq b_{(2)} r_{(2)} \geq \cdots$, and allocates the item to the bidder with the highest product score $b_{(1)} r_{(1)}$. Crucially, since the auction only sells clicks, the winner does not need to pay \emph{unless a click truly happens} in which case the winner pays the second highest score divided by his own CTR, i.e., $b_{(2)} r_{(2)}/r_{(1)}$. When there is only a single ad slot to allocate, it is straightforward to verify that this auction is truthful, \emph{regardless of the CTR values (even mistakenly estimated)}.\footnote{Such truthfulness holds only when there is a single slot, which is the case we focus on in this paper. Click-through auctions are well-known to be non-truthful and difficult to analyze when there are multiple slots \cite{edelman2007internet}. } As a result, the seller's information advantage of knowing the CTRs \emph{cannot} be exploited to influence the advertisers' bidding behaviors since it is always their best interest to bid the true value per click $v_i$. In some sense, information design here is effectively a naturally restricted format of mechanism design for multiple items with additive bidder values (which is generally a very difficult question \cite{daskalakis2016does}).  This  crucially differs from the information design task in all previously studied auction formats, in which the seller's information about the item (e.g., the CTR) can be revealed to alter bidders' bidding behaviors. This is intrinsically because bidders pay for realized clicks only thus the probability of receiving a click will not matter in click-through auctions. Note that, even in per-impression ad auctions as studied in \cite{badanidiyuru2018targeting},  bidders pay for impressions but care about clicks or conversions,  thus their values do depend on the probability of receiving a click, i.e., the CTR.

Most relevant to ours is the recent work by Bergemann et al.~\cite{bergemann2021calibrated}, who study the same information design question of {optimally} revealing the CTR information in click-through auctions. They focused on a simplified setup with \emph{complete information} about the bidders, i.e., the bidders' values are assumed to be fully known to the seller. A natural \emph{calibration} constraint, originating from Foster and Vohra~\cite{foster1997calibrated}, is imposed on their information design, which simply means the disclosed CTR estimation to each bidder has to be consistent with (i.e., equal in expectation) the true CTR of the bidder. They thus call the new model \emph{calibrated click-through auctions}. Notably, the seller is allowed to privately communicate information with each bidder, which is known as private signaling.  %\jj{delete: They explicitly construct an optimal information structure for symmetric environments, while the characterization of an optimal policy in asymmetric environments remains unknown, even with only two bidders and even with complete information about bidder values.} 
The main contribution of this paper is to extend the setup of \cite{bergemann2021calibrated} to the more natural and also more widely studied setup of \emph{Bayesian} bidders with independent values. For this reason, We call our model the \emph{Bayesian calibrated click-through auction}.  %\jj{delete:Interestingly, for this more general setup we obtain positive results that were not possible in the previous complete-information setup of \cite{bergemann2021calibrated}.  }

To our knowledge, Bergemann et al.~\cite{bergemann2021calibrated} is the first study of information design in auctions in which the   revealed information \emph{cannot} influence bidders' behaviors, but only affect the mechanism itself. This gives rise to an intriguing information design problem --- in some sense, it is even in contrast to one's first impression about information design, also known as \emph{persuasion} \cite{kamenica2019bayesian}, which seeks to exploit  information advantage to influence others' behaviors. In contrast, information design in click-through auctions only affects the {final allocation and payment} of the mechanism but has no effect on bidders' bidding behaviors. From this perspective, information design here is effectively a form of mechanism design. Indeed, a similar phenomenon was observed by Daskalakis et al.~\cite{daskalakis2016does}, who study the \emph{co-design} of information structure and the auction mechanism. They observe that at the optimal co-design, there is no signaling to bidders whatsoever, and the seller will only use the underlying state to decide the item allocation and payment. Our problem is similar, except that we restrict ourselves to the class of click-through auctions. This restriction is motivated by its practical applications.   % In fact, the optimal co-design   turns out to be equivalent to running a multi-item auction with additive bidders, which is a notoriously difficult open problem in auction theory.     

Before proceeding to describe our results, we briefly highlight the challenges of information design in our problem.  Indeed,  the difficulty of information design  in strategic games is well documented in previous works \cite{dughmi2014hardness,bhaskar2016hardness}. Even just for auctions, Emek et al.~\cite{emek2014signaling} shows that computing an optimal information design in a second price auction is $\NP$-hard in general for Bayesian bidder values, though it does become polynomial-time solvable in cases with complete information of bidder values. Unfortunately, such tractability does not transfer to the click-through auctions: in a general environment with complete information about bidder values, even the problem of optimal information design in \emph{two}-bidder click-through auctions is left as an open question in \cite{bergemann2021calibrated}. Bergemann et al.~\cite{bergemann2021calibrated} show that when the distributions of CTRs are symmetric, an optimal signaling policy can be characterized. However, back to our Bayesian generalization of their setup, their optimal design for the symmetric information environment is no longer applicable because the design of their signaling scheme relies crucially on knowing the exact identity of the winner for any signal realization, which unfortunately becomes uncertain in our Bayesian setup with random bidder values.
This barrier brings challenges, but also brings opportunities to adopt more algorithmic approaches. Next, we elaborate on our findings. 

%In the case of symmetric environments with two buyers, we are able to provide a simple approximate scheme that  achieves good approximation with conditions. For more than two buyers even in a general environment, we provide an FPTAS that is almost optimal. One interesting open problem left may be to characterize the optimal policy for more than two buyers.}

%\song{[Song: the paragraph above looks nice. Two comments: 1. the sentence ``the winner of auction cannot be immediately known as in \cite{bergemann2021calibrated}'' can be improved. You can say something like given pCTR, who is the winner still depends on the realization of their values, etc. 2. The current paragraph somehow sounds like we cannot achieve the optimal solution, which is not accurate. We have an FPTAS, while the current wording may make the readers think we don't have such an solution unless in two-buyer case. A careful rewording might be helpful.]}

%\hf{Let's use calibrated Bayesian click-through auctions?}

\subsection{Results} 
{We study the theoretical aspects of  Bayesian calibrated click-through auctions and develop two encouraging positive results.

Our first result, Theorem~\ref{claimriunderline}, exhibits a Fully Polynomial Time Approximation Scheme (FPTAS) for computing an approximately revenue-optimal signaling scheme in an  \emph{arbitrary} information environment, assuming no point mass in bidders' value distributions. This FPTAS applies to any constant number of bidders. Interestingly, our FPTAS bypasses the notorious challenge of the complete-information general setup of \cite{bergemann2021calibrated}  by relaxing it to the more natural Bayesian valuation setup with smooth value distributions. We use signals of the form $k\epsilon$ for some integer $k$. The key challenge is to enforce the calibration constraint --- i.e., the posterior mean of the CTR conditioned on the signal has to equal the signal itself -- in the rounding process. We develop a nontrivial technique to address the challenge. 
Notably, this FPTAS applies to any auction mechanism, as long as its revenue as a function of signals is Lipschitz continuous. 

Our second main result, Theorem~\ref{distributionfreeappro}, examines the popular recent computational model of unknown value distributions, also known as \emph{prior-free} setups \cite{devanur2013prior}. We give explicit and efficient construction of  simple \emph{prior-free}  signaling schemes for \emph{symmetric} information environments (as studied also by Bergemann et al.~\cite{bergemann2021calibrated}) with strong approximation guarantees.  The constructed signaling scheme is based on a parameter  coined the \emph{optimal signal ratio}, which we find is at most $1$. This allows us to send larger signals (i.e., larger CTR estimations) to the bidder with a high click-through rate than the bidder with a low click-through rate. The proof mainly consists of (i) establishing a connection between the optimal signaling scheme under the unknown value distribution and the optimal signaling scheme under a uniform value distribution and (ii) providing the  approximation ratio of the constructed signaling scheme under a uniform value distribution, which is proved to be $0.995$. En route to proving (ii), we prove a more general result (Proposition \ref{theroemboudningapproximation}) which shows that, if the optimal signal ratio is convex in CTR (which is true for various commonly used distributions), then a signaling scheme with better approximation can be devised.  This result may be of independent interest.}

\subsection{Additional Related Works}
Due to the space limit, we briefly discuss the related works here, while additional discussions in detail are in Appendix \ref{appendix_additional_related_works}. The most relevant literature to our work is information design in auctions.  Information design has been adopted to second price auctions \cite{bro2012send,emek2014signaling,cheng2015mixture,badanidiyuru2018targeting} and Myerson auctions \cite{fu2012ad} in various setups. Recently, Bergemann et al.~\cite{bergemann2021calibrated} provided characterizations for a symmetric calibrated click-through auction but with  complete information of bidder values. Another line of related literature is the sale of information, which  selectively reveals information for revenue improvement. There are a series of works in this line \cite{babaioff2012optimal,chen2020selling,bergemann2018design,cai2020sell,liu2021optimal,zheng2021optimal,li2021selling,bergemann2022selling}. Our work is also related to Bayesian persuasion \cite{kamenica2011bayesian,arieli2019private,babichenko2022multi,dughmi2017algorithmicnoex,babichenko2017algorithmic,xu2020tractability,gradwohl2021algorithms}. We also refer interested readers to a comprehensive survey by Dughmi ~\cite{dughmi2017algorithmic}. \jjr{Finally, our work is partly related to the literature on automated bidding in auctions~\cite{kolumbus2022and,kolumbus2022auctions,feng2023strategic,mehta2023auctions,balseiro2019learning}.}

\section{Preliminaries}
We consider the (arguably more natural) Bayesian version of the \emph{calibrated click-through  auction} of \cite{bergemann2021calibrated}, which is directly motivated by advertising auctions.  At a high level, bidders %\hf{please use bidder and seller throughout the paper. These are the terminology for auctions.  } 
in this auction  are ranked by the products of their \emph{private values} (i.e., the willingness-to-pay per click) and the seller's prediction of \emph{click-through-rate} (\emph{abbr.} CTR).  Bergemann et al.~\cite{bergemann2021calibrated} consider a basic setup in which bidders' private values are perfectly known to the seller. In this paper, we generalize their problem to the Bayesian setup by assuming that bidders' values are independently drawn from distributions. While each bidder knows his own private value, the seller only knows the distributions of bidder values (though our second main result further removes this assumption). This generalization is more aligned with the rich literature of Bayesian auction design, starting from the seminal work of \cite{myerson1981optimal}. Similar to \cite{bergemann2021calibrated}, we also aim to design a revenue-maximizing auction. Different from classic auction design, the seller in a click-through auction has an additional knob to tune, i.e., informing bidders about their CTRs through partial information disclosure (also known as \emph{information design} \cite{kamenica2019bayesian}). So the optimal auction requires designs of both \emph{incentives} and \emph{information}.  

We now describe the general auction setting. There are  $n$ bidders. The private value $v_i$ of bidder (she) $i=1, 2, 3, \dots, n$ is independently drawn from some known distribution $F_i(v_i)$ (with a density function $f_i(v_i)$), over some bounded interval $[a, b]$ with $a, b \in R_{\ge 0} \union \{\infty\}$ and $a < b$. Note that the density functions $f_1(v_1), f_2(v_2), \dots, f_n(v_n)$ are not necessarily identical. As seen, our problem is a natural Bayesian variant of \cite{bergemann2021calibrated}, whose settings can be obtained by allowing $f_i(v_i)$ to be some Dirac $\delta$-function. i.e., a point distribution.

%The  setting in \cite{bergemann2021calibrated} can be considered as one special case of ours by letting $f_i(v_i)$ be some Dirac $\delta$-function. i.e., a point distribution. \hf{Instead, our problem is a natural Bayesian variant of \cite{bergemann2021calibrated}. }

When auctions start, the seller privately observes a CTR vector $r = (r_1, r_2, \dots, r_n) \in [\underline{r}, 1]^n$ for $n$ bidders, which is drawn from a commonly known joint distribution $\lambda(r)$. $\underline{r} >0$ is a small positive value close to $0$. The CTR $r_i$ represents the probability of bidder $i$'s ad being clicked, which may be estimated by the seller with some machine learning method. For bidder $i$, assume the marginal probability $\lambda(r_i) \ge \xi$, where $\xi$ is a small positive value since it does not make sense to consider a CTR with a marginal probability arbitrarily close to $0$ in practice.

To maximize his revenue, the seller may privately disclose some information about $r_i$ to bidder $i$. Suppose the seller has the power of commitment\footref{moredis_ppc_commite_truthful}, and he designs a signaling scheme denoted as $\pi$, which is also observed by the bidders.
Hence, conditioning on observing CTR vector $r$, the seller sends signal $s =(s_1, s_2, \dots, s_n)$ with probability $\pi(s| r)$, where $s_i$ is the signal privately sent to bidder $i$. In the rest of the paper, we assume that $r_i \in \mathcal{R}$ with $\mathcal{R}$ being a discrete and finite set of CTR values. 
% Note that we do not assume $r_i$'s are independent.

\noindent\textbf{Calibrated Click-Through Auction.} The calibrated click-through auction was previously studied by Bergemann et al.~\cite{bergemann2021calibrated}. Initially, each bidder privately observes her value $v_i$ and the seller privately observes the CTR vector $r$. The seller then sends a signal $s_i$ to each bidder $i$ privately. Upon receiving the signal $s_i$, bidder $i$ submits her bid $b_i$ to the auction and the auction determines the winner $i^*$ by selecting the one with the highest product, i.e., $b_{i^*}s_{i^*} = \max_j b_js_j$, and charges the winner for each realized click by 
\[p_{i^*} = \max_{j\neq i^*} \frac{b_js_j}{s_{i^*}}.\]
The winner only pays when a click is received\footnote{\jjr{More discussions about {\it pay-per-click}, commitment power and truthfulness are in Appendix} \ref{append_payperclick} \label{moredis_ppc_commite_truthful}}. Hence, the seller's expected revenue is  $r_{i^*} p_{i^*}$. 
One can easily prove that the auction is truthful\footref{moredis_ppc_commite_truthful} for any $s_1, \ldots, s_n$, as it follows the ``minimum-bid-to-win'' payment rule.
% This auction is truthful:
% Assume $v_1 s_1 > v_2 s_2>\cdots >v_n s_n$. Bidder $1$ indeed does not have incentives to misreport his value. If some other bidder $i \neq 1$ bids $v'_i$ higher than her true private value $v_i$ and wins the auction, i.e., $v'_is_i > v_1s_1$,  then the price that  bidder $i$ needs to pay per click is larger than her maximum willingness-to-pay, i.e., $\frac{v_1s_1}{s_i}>v_i$, leading to negative utility to the bidder. %Hence, the auction is truthful.
Therefore, without loss of generality, we will assume $b_i \equiv v_i$ in the rest of the paper.

Following  \cite{bergemann2021calibrated}, we adopt the concept \textit{calibration} to  information design. That is, given any private signal  $s_i = s'_i$, bidder $i$'s posterior estimation of CTR $r_i$ should be equal to $s'_i$, i.e.,
\begin{equation}\label{calibratedconstraint}
    E[r_i| s_i = s'_i] = s'_i.
\end{equation}
{The calibration constraint is a consequence of the  revelation principle. Given any signal realization $s_i = s'_i $, bidder $i$ will interpret $s'_i$ through Bayes updates by computing the expected CTR value $   E[r_i| s_i = s'_i]$. The calibration constraint simply requires that the signal itself directly reflects this expected value so that bidders do not need to compute the expected CTR value by themselves. Such revelation-principle-style simplification is widely adopted in information design \cite{kamenica2011bayesian,dughmi2017algorithmicnoex,kamenica2019bayesian}, where signals can without loss of generality be \emph{persuasive}  action recommendations. Analogously, in our auction setup, such direct information scheme will directly signal the posterior mean of the signal, i.e., obeying the calibration constraint \eqref{calibratedconstraint}. 
}
Hence,  any signal $s_i$   must be within $[\underline{r}, 1] \subseteq [0, 1]$ by \eqref{calibratedconstraint}. One important observation in \cite{bergemann2021calibrated} %\hf{when say "in" some paper, should use \cite{bergemann2021calibrated}. If saying by which authors, we use \cite{bergemann2021calibrated}. Please make sure the entire paper's references are correct. } 
is that if the CTRs and signals are discrete, we can rewrite (\ref{calibratedconstraint})  as 
\begin{equation}\label{calibrationconstraintdiscrete}
    \sum_{(r, s): s_i =s_i'} \lambda(r) \pi(s|r) (r_i - s_i') = 0.
\end{equation}

\noindent\textbf{Bayesian Calibrated Click-Through Auction}  We focus on the seller's information design problem, i.e., to design a signaling scheme $\pi$ in order to maximize the seller's expected revenue. The problem is formulated as below,
\begin{equation}\label{ctrsecondproblem}
\begin{split}
  \max_{\pi, s}  & \qquad \sum_r \lambda(r) \sum_s \pi(s | r) \int_v f (v) R(r, v, s) dv \\
  \textrm{subject to}
  & \qquad E[r_i| s_i = s_i'] = s_i' ~~\forall s_i' \in [0, 1], \forall i\\
  & \qquad\pi(s | r) \in [0, 1], \sum_s \pi(s | r) = 1, ~ \forall r, s \in [0, 1]^n,
\end{split}
\end{equation}
where $f(v) = f_1(v_1) f_2(v_2) \cdots f_n(v_n)$ is the density function of $v= (v_1, v_2, \dots, v_n)$, and the revenue $R(r, v, s) = r_{i^*} \frac{\max_{j \neq i^*} v_j s_j}{s_{i^*}}$. 
%It is important to note that different from traditional signaling scheme design which usually only requires to determine the probabilities of sending signals, i.e., $\pi(s|r)$, 
Since both the signals $s$ and the probabilities of sending signals $\pi(s|r)$ are variables, the objective of  \eqref{ctrsecondproblem} is non-convex. %Hence the program is non-convex.

{Given CTR vector $r$ and signal $s$, denote the seller's expected revenue as $R(r, s) = \int_v f (v) R(r, v, s) dv$. Specifically,  $R(r, s)$ for $2$ bidders is
\begin{equation}\label{expectedrevenuegivenrsgeneral}
\begin{split}
R(r, s) = &\int_v f (v) R(r, v, s) dv \\
= &\int_a^b r_1  \frac{v_2 s_2}{s_1}\prob{v_1s_1 \ge v_2s_2| v_2} f_2(v_2) dv_2  + \int_a^b r_2  \frac{v_1s_1}{s_2} \prob{v_2s_2 \ge v_1s_1|v_1} f_1(v_1) dv_1,
\end{split}
\end{equation}
where $\prob{v_1s_1 \ge v_2s_2| v_2}$ is the probability of bidder $1$ winning given that bidder $2$'s value is $v_2$ (similarly for $\prob{v_2s_2 \ge v_1s_1|v_1}$). 

By   (\ref{expectedrevenuegivenrsgeneral}),  we can see that our problem differs from \cite{bergemann2021calibrated} in a crucial way}: With any fixed $s$, the winner is also fixed in \cite{bergemann2021calibrated} since they have complete information on $v$, while in our case, either bidder can be the winner with some probability due to the uncertainty in their valuations. This difference somewhat ``smooths" our objective function while at the same time brings new challenges.%To be more specific, the first part of \eqref{expectedrevenuegivenrsgeneral} calculates the expected revenue when bidder $1$ wins the auction, while the second part calculates the expected revenue when bidder $2$ wins.

%\song{[Song: What do we want to say here because of the difference? Do we want to say that makes our expected revenue function more complicated? or what]}

\section{Click-Through Auctions in General Environments}\label{sectiongeneralenvironmentdiscrete}

In this section, we present an FPTAS for \eqref{ctrsecondproblem} achieving $1 - O(\epsilon)$ approximation when the number of bidders is a constant. 
% In this section, we consider optimizing the Problem (\ref{ctrsecondproblem}) in general environments, where the density functions $f_i(v_i)$ are not necessarily identical.  We develop an FPTAS to solve Problem \eqref{ctrsecondproblem}, and the obtained signaling scheme achieves $1-O(\eps)$ approximation in %$poly(\frac{1}{\epsilon}, n)$ time  
% polynomial time for the case of two bidders. 
This result hinges on a minor continuity assumption on bidders' value distribution. That is, we assume every distribution $F_i$ has no point mass\footnote{The no-point-mass assumption alleviates the tie-breaking problem arising in the case of deterministic bidders' values. A discussion on this assumption is in Appendix \ref{discussiononllipscontinityassumption}.} and has finite second moment, i.e., $\E_{v_i\sim F_i}[v_i^2]<\infty$. Under these two mild technical assumptions, we prove the following theorem. 

%\hf{
%Maybe state the following theorem instead. 

\begin{theorem}\label{claimriunderline}
    For any small  $\eps>0$, there is an algorithm that computes a (multiplicative) $1 - O(\eps)$ approximate   signaling scheme in   $\poly(|\mathcal{R}|^n, (\frac{n}{\eps})^n)$ time  where $n$ is the number of bidders and $ |\mathcal{R}|$ is the size of set $\mathcal{R}$ of CTR values.
\end{theorem}

To prove Theorem~\ref{claimriunderline}, we resort to discretizing the strategy space.  Our starting point is a reformulation of Program \eqref{ctrsecondproblem} as a linear program with infinite dimension, and then convert the infinite-dimensional program into a finite-dimensional one by discretizing the signal space. The main technical challenge is to 
prove that the solution to the discrete program approximates the optimal solution to the original infinite-dimensional one. This is more involved than typical rounding approaches and  hinges on the following two key properties: %\hf{prove two conclusions. "things" is informal}:
\begin{enumerate}
  \item[a).] For any signaling scheme, the value of the discrete objective is sufficiently close to the original objective. This is a consequence of the following Lipschitz continuity of revenue function $R(r, s)$, whose proof is deferred to Appendix \ref{appendixproofoftheorem31claim}. 

\begin{lemma}\label{continousfunctionRrs}
$R(r,s)$ is Lipschitz continuous in $s \in [\underline{r}, 1]^n$ with some constant $C_n$, for any given CTR $r = (r_1, r_2, \dots, r_n)$. 
\end{lemma}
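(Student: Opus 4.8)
The plan is to work directly from the explicit formula~\eqref{expectedrevenuegivenrsgeneral} for $R(r,s)$ and bound the partial derivatives $\partial R/\partial s_1$ and $\partial R/\partial s_2$ uniformly on the compact box $[\underline r,1]^2$; a uniform bound on the gradient immediately yields Lipschitz continuity with constant $C_2 = \sqrt{2}\,\sup\|\nabla_s R\|$. Write $R(r,s) = r_1 \cdot A(s) + r_2 \cdot B(s)$ where $A(s) = \int_a^b \frac{v_2 s_2}{s_1}\,\Pr[v_1 s_1 \ge v_2 s_2 \mid v_2]\, f_2(v_2)\,dv_2$ is bidder $1$'s contribution and $B(s)$ is the symmetric term. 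Since $r_1, r_2 \in [\underline r, 1] \subseteq [0,1]$, it suffices to bound the gradients of $A$ and $B$ separately; by symmetry I only discuss $A$.

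First I would rewrite $\Pr[v_1 s_1 \ge v_2 s_2 \mid v_2] = \Pr[v_1 \ge v_2 s_2/s_1] = 1 - F_1(v_2 s_2/s_1)$, so that
\[
A(s) = \frac{s_2}{s_1}\int_a^b v_2\bigl(1 - F_1(v_2 s_2/s_1)\bigr) f_2(v_2)\, dv_2 .
\]
Differentiating under the integral sign (justified by dominated convergence, using the finite-second-moment hypothesis to produce an integrable dominating function), the derivative of the integrand in $s_1$ or $s_2$ produces three kinds of terms: (i) terms from differentiating the prefactor $s_2/s_1$, which on $[\underline r,1]^2$ are bounded by $1/\underline r^2$ times $\int v_2 f_2(v_2)\,dv_2 = \E[v_2] < \infty$; (ii) terms from differentiating $1 - F_1(v_2 s_2/s_1)$, which bring down a factor $f_1(v_2 s_2/s_1)$ times $\pm v_2 s_2/s_1^2$ or $\pm v_2/s_1$ — here I would use that $F_1$ has a density so $F_1$ is Lipschitz, but more carefully, the contribution $\int v_2 \cdot \frac{v_2 s_2}{s_1^2} f_1(v_2 s_2/s_1) f_2(v_2)\,dv_2$ needs the second moment of $v_2$ and boundedness of the density $f_1$ is \emph{not} assumed, so instead I substitute $u = v_2 s_2/s_1$ to convert $f_1(v_2 s_2/s_1)\,dv_2$ into $f_1(u)\,du$ and absorb the density into a clean integral against $dF_1$. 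After the substitution the bound becomes a finite combination of $\E_{F_1}[u^2]$, $\E_{F_1}[u]$, $\E_{F_2}[v_2^2]$, $\E_{F_2}[v_2]$, all finite by hypothesis, times powers of $1/\underline r$. Collecting the three types of terms gives $\sup_{s\in[\underline r,1]^2}\bigl(|\partial_{s_1} A| + |\partial_{s_2} A|\bigr) \le C_2'$ for an explicit constant $C_2'$ depending only on $\underline r$ and the first two moments of $F_1, F_2$.

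I would handle $B(s)$ identically with the roles of the two bidders swapped, set $C_2 = 2(C_2' + C_2'')$ or the analogous combined constant, and conclude that $R(r,s)$ is $C_2$-Lipschitz in $s$ on $[\underline r,1]^2$ uniformly in $r$. The main obstacle I anticipate is the term of type (ii): naively the derivative of $1-F_1(v_2 s_2/s_1)$ is only controlled if the density $f_1$ is bounded, which is not assumed, so the change-of-variables trick that rewrites the offending integral as an expectation with respect to $dF_1$ (rather than against $f_1$ as a function) is the crucial move, and it is precisely there that the finite-second-moment assumption enters. A secondary technical point is justifying differentiation under the integral sign near the boundary $s_1 \to \underline r$; this is where the lower bound $\underline r > 0$ on the CTRs (and hence on the signals, by calibration, since $s_i \in [\underline r,1]$) is essential to keep all the $1/s_1$ factors bounded.
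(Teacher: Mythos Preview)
Your overall strategy --- bound the gradient of $R$ in $s$ uniformly on the compact box $[\underline r,1]^2$ --- is sound and produces exactly the same collection of terms that the paper bounds via direct finite-difference estimates and the triangle inequality. The one genuine gap is your handling of the type-(ii) term. The substitution $u = v_2 s_2/s_1$ turns
\[
\int v_2^2\, f_1\!\Bigl(\tfrac{v_2 s_2}{s_1}\Bigr) f_2(v_2)\,dv_2
\]
into, up to bounded prefactors, $\int u^2 f_1(u)\, f_2(u s_1/s_2)\,du$: the density $f_1$ is now integrated against $du$ as you intended, but the factor $f_2(u s_1/s_2)$ remains as a pointwise multiplier, not a measure. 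You have only traded an uncontrolled $f_1$ for an uncontrolled $f_2$; the result is \emph{not} a finite combination of moments of $F_1$ and $F_2$, and the claimed bound does not follow without some boundedness of a density.

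The paper's own proof in fact does use $f_i^{\max}:=\sup f_i<\infty$, producing bounds such as $\frac{E[v_2^2]\,f_1^{\max}}{\underline r}\,C_s\|s-s'\|$, and attributes finiteness of $f_i^{\max}$ (somewhat loosely) to the no-point-mass assumption. So for this lemma you should simply take bounded densities as part of the standing hypotheses and bound the offending integral directly by $\|f_1\|_\infty\,\E_{F_2}[v_2^2]$; under only ``no point mass'' plus ``finite second moment'' there is no substitution or rearrangement that eliminates the need for it.
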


  \item[b).] For any solution to the original infinite dimensional program, there exists a corresponding feasible solution to the discrete program that is sufficiently close to the original solution. 
\end{enumerate}
These two properties together can ensure the existence of a solution to the discrete program yielding the revenue that approximates the optimal solution of the original infinite dimensional program. Thus by solving the discrete program, we can get an approximately optimal solution.

{

{\em The major challenge} here is to prove property b), as any na\"{i}ve rounding of a signaling scheme from the continuous signal space to a discrete space would break the calibration constraints and hence end up with an infeasible rounded signaling scheme to the discrete program. To circumvent the difficulty, we introduce a novel technique to retain all the calibration constraints.

At a high level, we will reserve a small amount of probability mass from the given signaling scheme at the beginning and only apply rounding to the remaining probability mass. After the rounding step, we will redistribute the reserved probability mass to carefully fix all the calibration constraints broken by the rounding step. In particular, one has to be extremely careful to avoid the straightforward discretization structure of the signal space, because otherwise a large fraction of probability mass reservation will be needed for the fixing stage, which leads to a significant revenue loss and fails the $1 - O(\epsilon)$ approximation. The detailed proof of Theorem \ref{claimriunderline} is involved and relegated to Appendix \ref{appendix_proofof_theorem_section3main}.
}

\medskip
It is worthwhile  to compare our FPTAS in Theorem \ref{claimriunderline}  with a recent FPTAS  for  multi-channel Bayesian persuasion by Babichenko et al.~\cite{babichenko2022multi}, and   highlight their key differences.  While both rely on the continuity of the sender's utility functions, their     techniques are significantly different. Specifically,  the FPTAS of \cite{babichenko2022multi} applies  to a constant number of states but many receivers (i.e., bidders) whereas our FPTAS applies to many states but a constant number of bidders. Their FPTAS discretizes the space of  distributions over the states (thus requires a constant number of states) whereas our FPTAS discretizes the space of posterior CTR mean, which is why we have to adjust the scheme to satisfy the calibration constraints. Our choice of discretizing posterior means is due to its direct usage in click-through auctions. Such a special structure and the resultant challenges of fixing calibration constraint violation --- which is the core difficulty in our proof ---  is not present in the setup of \cite{babichenko2022multi}.

%From the above analysis, we can easily generalize \hf{Need to think more carefully about each of the sentence in your paper writing. It is not clear what " From the above analysis, we can easily generalize..." mean if you think twice about this sentence. I think you meant to say "The above analysis  easily generalizes..."}the discretization technique 

%The above analysis  easily generalizes to the case of any constant number of bidders, in which by solving the linear program, an approximation of $1-O(\eps)$ can still be achieved. {The remaining thing we need to show is that the expected revenue $R(r, s)$ is still Lipschitz-continuous in signal $s$ for any constant number of bidders, whose proof is relegated to Appendix \ref{Proof_of_Corollary_32constant} for completeness.} 
%\begin{corollary}\label{corollaryconstantnumberofbuyers}
%    For any instance of some constant $n$ bidders and a small positive number $\eps>0$, we can obtain a signaling scheme $\overline{\pi}$, which achieves  (multiplicatively) $1 - O(\eps)$ approximation to the optimal revenue. The  time complexity of the algorithm is $poly(|\mathcal{R}|, \frac{1}{\eps})$.
%\end{corollary}

We remark  that the discretization technique applied to Theorem \ref{claimriunderline} is in fact regardless of the expected revenue $R(r, s)$. To retain the calibration constraint and preserve the approximation simultaneously, one basic requirement is that $R(r, s)$ needs to be Lipschitz-continuous in signal $s$. Therefore, our algorithm can be treated as a framework that can accommodate different auctions (with Lipschitz continuous expected revenue $R(r, s)$) and maintain the calibration of signals. Based on this observation, we consider one simple modified auction: a click-through auction with a reserved price. Reserve prices are widely studied in the literature and commonly used in the industry \cite{paes2016field,ostrovsky2011reserve}. {Similarly, we only need to show $R(r, s)$ of this modified auction is Lipschitz-continuous for Corollary \ref{corollaryclickthroughreservedprice} to be true, which is deferred to Appendix \ref{Proof_of_Corollary_corollaryclickthroughreservedprice}.}

%\begin{corollary}\label{corollaryclickthroughreservedprice}
%    Consider the click-through auction with some fixed reserve price $p$, where the price that winner $i^*$ needs to pay per click is defined as $p_{i^*} =\max\{p,  \max_{j\neq i^*} \frac{v_js_j}%{s_{i^*}} \}$. For any such instance of any constant number of bidders and a small positive number $\eps>0$, we can obtain a signaling scheme $\overline{\pi}$, which achieves  (multiplicatively) $1 - O(\eps)$ approximation to the optimal revenue. The  time complexity of the algorithm is $poly(|\mathcal{R}|, \frac{1}{\eps})$.
%\end{corollary}

\begin{corollary}\label{corollaryclickthroughreservedprice}
    In a click-through auction where winner pays at least some fixed reserve price $p$, for any constant number of bidders and any small $\eps>0$, there is an algorithm that computes  (multiplicatively) $1 - O(\eps)$ approximate signaling scheme with  time complexity $poly(|\mathcal{R}|, \frac{1}{\eps})$.
\end{corollary}

\section{Click-Through Auctions in Symmetric Environments}\label{section4symmetricenvironments}
While Theorem \ref{claimriunderline} provides an algorithm  that can compute an approximately optimal signaling scheme in fairly general setups, the algorithm relies on solving large-scale linear programs  thus may be too costly in reality and also lacks interpretability. In this section, we pursue the design of  ``simple'' signaling schemes through the approximation lens. Given the challenge of the problem in general, we shall restrict our attention to a fundamental special case in this section ---  i.e.,   $2$ bidders and symmetric environments ---which is also a major focus of our {preceding} work by Bergemann et al.~\cite{bergemann2021calibrated}. %\jj{delete: For this basic case, our main result is an explicit construction of a simple and \emph{prior-free} signaling scheme, which is close to the optimal when bidder values' probability density function does not fluctuate much. }
\jjr{
For this basic case, we first characterize that without calibration constraint, the optimal information design is governed by a single parameter called \emph{optimal signal ratio}, which is at most $1$. Such an observation motivates an explicit construction of a simple and \emph{prior-free} signaling scheme, which is close to the optimal when bidder values' probability density function does not fluctuate much.}

An environment is \textit{symmetric} \cite{bergemann2021calibrated} if 1) the distribution of the products $v_1r_1, v_2r_2, \dots, v_nr_n$ are exchangeable, and 2) the values $v_i$'s are i.i.d. drawn from a distribution $F(v)$. Hence, the symmetric environment further implies that the distribution of $r_1, r_2, \dots, r_n$ are also exchangeable, i.e., $\lambda(r_1, r_2, \dots, r_n) = \lambda(perm(r_1, r_2, \dots, r_n))$ where $perm(\cdot)$ is any permutation function. For the purpose of easy exposition, we allow $r_i$ to be drawn from a slightly enlarged interval $[0, 1]$, with which the results obtained hold for $r_i\in[\underline{r}, 1]$ as in our original setting.

%In the rest of the paper, we will assume the distribution $F(v)$ has a 

\begin{definition}
The distribution $F(v)$ (with density $f(v)$) has a monotone hazard rate (MHR) if the ``hazard rate'' of the distribution $\frac{f(v)}{1-F(v)}$ is increasing in $v$. 
\end{definition}

 MHR is a standard assumption widely used in economics  \cite{badanidiyuru2018targeting,hartline2009simple,chawla2007algorithmic}. Many distributions are MHR, such as uniform distribution, exponential distribution, gamma distribution, etc. Note that MHR implies regularity  proposed by  Myerson~\cite{myerson1981optimal}.

%The primary result of this section is  a ``simple" signaling scheme that achieves good approximation. More interestingly, the scheme itself requires no knowledge about the value distribution $F(v)$ (though its guarantees rely on properties of $F$. Later, Sec \ref{knowndistributionapproximation} provides a better guarantee for known $F$).

\begin{theorem}\label{distributionfreeappro}  
There exists a prior-free signaling scheme {that can be found in polynomial time and} guarantees a (multiplicative) $0.995 \cdot (\underline{f}/ \overline{f})^2$-approximation for any continuous MHR distribution $f(v)$,
%supported on an interval $[0, c]$ for some $0<c<\infty$, 
where $\overline{f}$ and $\underline{f}$ are the respective maximum and minimum values of $f(v)$ for $v\in [0, c]$ with any $c>0$.
% Given that the bidder's value distribution is an MHR distribution $F(v)$ with continuous density $f(v)$ supported on an interval $[0, c]$ for any $0<c<\infty$, there exists a signaling scheme whose approximation is at least (multiplicative) $0.995 \cdot ( \underline{f}/ \overline{f})^2$-approximation, where $\overline{f}$ and $\underline{f}$ are the maximum and minimum values of $f(v)$, respectively.
\end{theorem}

%\song{
%\begin{corollary}
%  For uniform distribution on $[0, c]$, the simple signaling scheme given in Theorem~\ref{distributionfreeappro} achieves a $0.995$-approximation.
%\end{corollary}
%}

%  One simple example is that if $F(v)$ is uniform distribution, Theorem \ref{distributionfreeappro} achieves a $0.995$ approximation. 

{{The approximation ratio in Theorem \ref{distributionfreeappro} achieves its best  ratio $0.995$ for uniform distribution.} {More generally, the approximation ratio depends on the term $\underline{f}/\overline{f}$ which intuitively captures 
%it has a ``smoothness" term $\underline{f}/\overline{f}$ which measures 
how much the distribution deviates from being uniform. Notably, both the algorithm and the signaling scheme  require no knowledge about the value distribution $F$,  i.e., $F$ could be unknown to the seller (though Section \ref{knowndistributionapproximation} provides a better guarantee when $F$ is known and satisfy certain  properties).} 
We remark that the signaling scheme in Theorem \ref{distributionfreeappro} is \emph{simple} in the sense that its construction is parameterized by a single parameter called \emph{optimal signal ratio}. %which thus is a single-dimensional optimization problem. %, \jjr{which is calculated by assuming a uniform distribution}. 
\jjr{The information design problem thus can be easily optimized by empirically selecting a value that gives the best performance.} %  {which may lead to  empirical performance better than our prior-free scheme}. % a proper signal ratio, which can be computed efficiently. 
More importantly, Theorem \ref{distributionfreeappro} also provides a robust solution to the problem, which can be very useful when it is hard or costly to accurately estimate the distributions of bidder values.
}
%  Furthermore, %x Theorem \ref{distributionfreeappro} is useful in practice. In some cases, 
%  when it is costly for the seller to accurately estimate the distributions of bidders, but one can still give a signaling scheme that may achieve a good approximation whenever the $\overline{f}$ and $\underline{f}$ do not differ much.

\subsection{{Key Primitive of the Construction: the Optimal Signal Ratio}}

Before presenting the proof, we discuss the main ingredient for the construction of the signaling scheme, the \emph{optimal signal ratio}. Given a CTR vector $r = (h, l)$ (without loss of generality, assume $h \ge l$) and a signal pair $s = (s_1, s_2)$, the expected revenue by (\ref{expectedrevenuegivenrsgeneral}) can be rewritten as ({Recall $f(v) = 0, \forall v \notin [0, c]$. Note that although the seller requires no knowledge about $F(v)$ to construct the scheme, the bidders' value should follow some prior $F(v)$.})
\begin{small}
\begin{equation}\label{expectedrevenuegivenrs}
\begin{split}
R(r, s)
=  &\int_0^c (h  \frac{v_2 s_2}{s_1})\int_{\frac{v_2 s_2}{s_1}}^c f(v_1) dv_1 f(v_2) dv_2 + \int_0^c (l  \frac{v_1s_1}{s_2} )\int_{\frac{v_1 s_1}{s_2}}^c f(v_2) dv_2 f(v_1) dv_1.
\end{split}
\end{equation}
\end{small}Clearly, from (\ref{expectedrevenuegivenrs}), the expected revenue is determined by the signals through their ratio $x = \frac{s_2}{s_1}$, which we call the \emph{signal ratio}. 
%For the $x$ maximizes (\ref{expectedrevenuegivenrs}), we call it an \emph{optimal signal ratio}. 
We define the \emph{optimal signal ratio} to be the signal ratio maximizing (\ref{expectedrevenuegivenrs}).
In the rest of the paper, without loss of generality, we assume $h=1$. Then the optimal signal ratio only depends on the smaller CTR $l$, denoted as $x(l)$ with $l \in [0, 1]$. 
\begin{lemma}\label{lemmatemperary}
    The optimal signal ratio $l < x(l) \le 1$ for $l \in [0, 1)$ and $x(1) = 1$.
\end{lemma}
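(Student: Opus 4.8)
The plan is to work with the one‑variable revenue function. Normalizing $h=1$ as in the statement, the revenue \eqref{expectedrevenuegivenrs} depends on $s=(s_1,s_2)$ only through $x=s_2/s_1$, and I would write it as $R(x)=A(x)+l\,A(1/x)$, where $A(t)=\mathbb{E}_{v_1,v_2\sim F}\bigl[t v_2\,\mathbf{1}\{v_1\ge t v_2\}\bigr]$ is the revenue contributed by the high‑CTR bidder winning when the ``score ratio'' equals $t$ (the low‑CTR bidder's contribution becomes $l\,A(1/x)$ after exchanging $v_1\leftrightarrow v_2$, using that values are i.i.d.). Since $A$ is continuous with $A(0^+)=A(+\infty)=0$ and $A>0$ on $(0,\infty)$, one has $R(0^+)=R(+\infty)=0$ and $R>0$, so $x(l)$ is attained at an interior critical point of $R$.

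\textbf{The bound $x(l)>l$.} For $x\le1$ all inner integrals in \eqref{expectedrevenuegivenrs} run over the full range, and after differentiating and integrating by parts once (to eliminate the term containing $f'$) I expect the identity, valid for $x\in(0,1]$ with $\bar F:=1-F$,
\[
R'(x)=\int_0^c v\bigl(\bar F(xv)f(v)-l\,\bar F(v)f(xv)\bigr)\,dv\;+\;(l-x)\int_0^c v^2 f(xv)f(v)\,dv .
\]
On $(0,l]$ the second integral is nonnegative since $l-x\ge0$, and the integrand of the first is nonnegative because MHR makes the hazard rate $f/\bar F$ nondecreasing, so (using $xv\le v$ and $l<1$) $\tfrac{f(v)}{\bar F(v)}\ge\tfrac{f(xv)}{\bar F(xv)}\ge l\,\tfrac{f(xv)}{\bar F(xv)}$, i.e.\ $\bar F(xv)f(v)\ge l\,\bar F(v)f(xv)$ pointwise. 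At $x=l$ the second integral vanishes while the first is strictly positive, so $R'(l)>0$. Hence $R$ is nondecreasing on $(0,l]$ and $R(l)<R(l+\eps)$ for small $\eps$, which forces the maximizer to satisfy $x(l)>l$.

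\textbf{The bounds $x(l)\le1$ and $x(1)=1$.} Here I would analyze $R$ on $[1,\infty)$, where $R'(x)=\tfrac1{x^2}\bigl(x^2A'(x)-l\,A'(1/x)\bigr)$. The key estimate is that, under MHR, $x^2A'(x)\le A'(1/x)$ for every $x\ge1$: after the substitution $w=ux$ one reaches
\[
x^2A'(x)-A'(1/x)=x^2\int_0^{c/x}u\Bigl(\bar F(xu)f(u)-\bar F(u)f(xu)+u\,f(u)f(xu)(1-x)\Bigr)\,du ,
\]
and for $x\ge1$ each summand in the parentheses is $\le0$ (the last because $1-x\le0$, the first because $\bar F(xu)f(u)\le\bar F(u)f(xu)$ by the nondecreasing hazard rate). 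When $l=1$ this already gives $R'(x)\le0$ on $[1,\infty)$, and the $x\leftrightarrow1/x$ symmetry of $R=A(x)+A(1/x)$ gives $R'(x)\ge0$ on $(0,1]$, whence $x(1)=1$. For $l<1$, decompose $x^2A'(x)-l\,A'(1/x)=\bigl(x^2A'(x)-A'(1/x)\bigr)+(1-l)A'(1/x)$; the first summand is $\le0$, and for the second I would show $A$ is nonincreasing on $[1,\infty)$ (so $A'(x)\le0$ there) via the identity $A'(1)=-\tfrac12\int_0^c v^2\,(f/\bar F)'(v)\,\bar F(v)^2\,dv\le0$, obtained by integrating by parts twice and using $(f/\bar F)'\ge0$. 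Given $A'(x)\le0$ for $x\ge1$: if $A'(1/x)\le0$ then $(1-l)A'(1/x)\le0$; if $A'(1/x)>0$ then $x^2A'(x)\le0<l\,A'(1/x)$. Either way $R'(x)\le0$ on $[1,\infty)$, so $x(l)\le1$.

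\textbf{Where the difficulty lies.} The lower bound $x(l)>l$ is the easy half: it is driven entirely by the pointwise hazard‑rate comparison, namely $\bar F(xv)f(v)\gtrless\bar F(v)f(xv)$ according to whether $x\lessgtr1$. The delicate part is the behavior of $A$ (equivalently $R$) on $[1,\infty)$ needed for $x(l)\le1$: the sign of $R'$ there is a genuine competition between a positive and a negative contribution, and propagating ``$A'(1)\le0$'' to ``$A'\le0$ on all of $[1,\infty)$'' seems to require the monotone‑hazard‑rate assumption in an essential way (either through the $A'(1)$ identity together with a monotonicity argument for $A'$, or through a direct unimodality argument for $A$); I expect this to be where the bulk of the work goes.
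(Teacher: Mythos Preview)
Your lower bound $x(l)>l$ is correct and is exactly the paper's argument: compute $R'(x)$ on $(0,1]$, split it as
\[
R'(x)=\int_0^c v\bigl(\bar F(xv)f(v)-l\,\bar F(v)f(xv)\bigr)\,dv+(l-x)\int_0^c v^2 f(xv)f(v)\,dv,
\]
and use the MHR pointwise comparison $\bar F(xv)f(v)\ge \bar F(v)f(xv)$ (valid for $x\le 1$) together with $l\ge x$ to get $R'(x)>0$ on $(0,l]$.

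The gap is in your treatment of $x(l)\le 1$. Your case analysis for $R'(x)\le 0$ on $[1,\infty)$ hinges on the claim ``$A'(x)\le 0$ for all $x\ge 1$,'' which you do not prove; you only establish $A'(1)\le 0$ and then appeal to an unspecified monotonicity/unimodality argument. Your own inequality $x^2A'(x)\le A'(1/x)$ does not by itself force $A'(x)\le 0$ when $A'(1/x)>0$, so the missing step is genuinely needed. Even if that claim is ultimately true under MHR, it is the wrong target: the paper avoids it entirely.

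The paper's route for $x(l)\le 1$ is a direct value comparison rather than a derivative sign. In your notation $R(x)=A(x)+l\,A(1/x)$, one has
\[
R(1/x)-R(x)=(1-l)\bigl(A(1/x)-A(x)\bigr),
\]
so it suffices to show $A(x)\le A(1/x)$ for every $x\ge 1$. After the substitution $t=xv$ this reduces to the \emph{same} pointwise MHR inequality you already used for the lower bound, namely $\bar F(t)f(t/x)\ge \bar F(t/x)f(t)$ when $t/x\ge t$. Hence $R(x)\le R(1/x)$ for all $x\ge 1$, so the global maximizer lies in $[0,1]$; combined with your $R'(l)>0$ and $R'(1)\le 0$ (from your $A'(1)\le 0$ identity, which is correct), this yields $x(l)\le 1$ and $x(1)=1$ with no further work. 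In short: replace the attempt to control $R'$ on $[1,\infty)$ by the one-line comparison $R(x)\le R(1/x)$, and the ``bulk of the work'' you anticipated disappears.
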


%Lemma \ref{lemmatemperary} is  directly implied by Lemma \ref{lemmas2s1lessthan1} and  \ref{propertyxlconstruction} in Appendix \ref{app_propoertyofoptimalsignal}, where more discussion about the properties of $x(l)$ can be found.  Lemma \ref{lemmatemperary} implies that the optimal signal ratio $x(l)$ is above the line $x=l$; see $x(l)^{k_1}$ in Figure \ref{figobservationaaaproofA} for an example. One interesting implication of Lemma \ref{lemmatemperary} is that without the calibration constraint, it would be optimal for the seller to only send  pairs of signals with optimal signal ratio where a larger (\emph{resp.} smaller) signal is observed by the bidder with a higher (\emph{resp.} lower) CTR, which motivates the construction of the signaling scheme.

\jjr{
Lemma \ref{lemmatemperary} is  directly implied by Lemma \ref{lemmas2s1lessthan1} and  \ref{propertyxlconstruction} in Appendix \ref{app_propoertyofoptimalsignal}, where more discussion about $x(l)$ can be found.  There are two interesting implications from Lemma \ref{lemmatemperary}. The first interesting implication is that without the calibration constraint, it would be optimal for the seller to only send  pairs of signals with the optimal signal ratio where a larger (\emph{resp.} smaller) signal is observed by the bidder with a higher (\emph{resp.} lower) CTR, which motivates our construction of the signaling scheme. The second implication is that by $l<x(l)$ in Lemma \ref{lemmatemperary}, information design induces more intensive competition between buyers by partial-information revelation (with signal ratio $x(l)$) than  full revelation (with signal ratio $l$).  This again explains the  observations in \cite{badanidiyuru2018targeting} that revenue extraction in auctions needs competition and too ``fine-grained'' targeting information may lead to a thin market. More detailed discussions on the second implication are in Appendix \ref{appen_discuss_implic_osigratio}.}

{As an   application of   Lemma \ref{lemmatemperary}, we observed that $x(l) = 1$ for $l \in [0, 1]$ for any \emph{exponential distribution}. It  turns out that in this special case, the revenue-maximizing calibrated signaling scheme   is simply to reveal no information, i.e., always sending one calibrated signal pair with $s_1=s_2=E[r_i]$. This leads to the following proposition, whose proof is deferred to  Appendix \ref{proof_of_propositionexponentialdistribution}.   }
% One observation from Lemma \ref{lemmatemperary} is that if $x(l) = 1$ for $l \in [0, 1]$, the calibrated signaling scheme that maximizes the revenue can be obtained by revealing no information, i.e., the seller always sends one calibrated signal pair with $s_1=s_2=E[r_i]$ with probability $1$. Interestingly, we find that for the exponential distribution, it is indeed the case that $x(l) \equiv 1$. The proof is in Appendix \ref{proof_of_propositionexponentialdistribution}.

\begin{proposition}\label{propositionexponentialdistribution}  
If the value distribution is an exponential distribution with density  $f(v) = \lambda e^{-\lambda v}, v\ge0, \lambda>0$, the  signaling scheme revealing no information is   revenue-optimal.
\end{proposition}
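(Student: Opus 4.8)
The plan is to leverage Lemma~\ref{lemmatemperary}, which establishes that the optimal signal ratio $x(l)=1$ for all $l\in[0,1]$ whenever the value distribution is exponential. First I would observe that if the optimal (unconstrained) signal ratio for every CTR pair $(h,l)$ is $x=1$, then for \emph{any} fixed CTR vector $r=(r_1,r_2)$, the pointwise-optimal choice of signals — ignoring the calibration constraint — is $s_1=s_2$. Concretely, I would argue that the function $R(r,s)$ from \eqref{expectedrevenuegivenrs} depends on $s$ only through the ratio $x=s_2/s_1$, and that $x\mapsto R(r,(1,x))$ is maximized at $x=1$; this is exactly the content of Lemma~\ref{lemmatemperary} specialized to the exponential case (together with the scale-invariance noted after \eqref{expectedrevenuegivenrs}), though one should double-check that $x(l)=1$ holds for \emph{both} orderings of the CTRs, which follows by symmetry of the ratio.

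The core of the proof is then a simple averaging/convexity argument showing no calibrated scheme can beat the no-information scheme. The no-information scheme sends the single signal pair $(\mu,\mu)$ with $\mu=E[r_i]$ with probability one; since $E[r_1\mid s_1=\mu]=E[r_1]=\mu$ (using exchangeability in the symmetric environment) this is calibrated, and it achieves revenue $\sum_r \lambda(r)R(r,(\mu,\mu))$. For an arbitrary calibrated scheme $\pi$, the expected revenue is $\sum_r\lambda(r)\sum_s\pi(s\mid r)R(r,s)$. The key inequality to establish is $R(r,s)\le R(r,(t,t))$ for every $r$ and every signal pair $s$, where $t$ is chosen appropriately so that the bound telescopes correctly under the calibration constraint — this is where I expect the main obstacle to lie. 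The naive statement ``$R(r,s)\le R(r,(s_1,s_1))$ pointwise'' is not by itself enough, because after averaging over $s$ one needs the scale of the diagonal signal to match $\mu$; so the argument must be done more carefully, for instance by showing that $R(r,(a,a))$ as a function of the common scale $a$ behaves in a way compatible with the calibration constraints $E[r_i\mid s_i]=s_i$, or alternatively by a direct comparison in the space of posterior distributions.

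The cleanest route I would take is the following. Fix the optimal calibrated scheme and consider the induced joint distribution over $(r,s)$. Because $R(r,s)$ depends on $s$ only through $x=s_2/s_1$ and is maximized (for each $r$) at $x=1$, replacing every signal pair $(s_1,s_2)$ by the pair $(\sqrt{s_1 s_2},\sqrt{s_1 s_2})$ — or more simply, any diagonal pair — can only increase $R(r,s)$ for each realized $r$; hence the revenue weakly increases. The subtlety is that this replacement must preserve calibration: I would handle this by noting that once all signals are diagonal, the scheme effectively only communicates a scalar ``CTR estimate'' $t$ to both bidders simultaneously, and for such schemes calibration forces $E[r_i\mid t]=t$. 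Then a final pooling step — merging all diagonal signals into the single signal $\mu=E[r_i]$ — uses concavity of $t\mapsto R(r,(t,t))$ in $t$, or more robustly uses that the revenue from a diagonal-signal scheme is a \emph{linear} functional of the allocation which, by the calibration/martingale structure, is dominated by the no-information outcome (a garbling argument: revealing less is better here since $x=1$ is always optimal). I would present this pooling step explicitly and flag that it is the one place requiring a short computation (checking that $R(r,(t,t))$ does not reward splitting the posterior). Modulo that computation, the proposition follows: the no-information scheme is calibrated and weakly dominates every other calibrated scheme, hence is revenue-optimal.
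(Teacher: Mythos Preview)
Your approach is correct in spirit and would work, but you over-engineer it and miss the one-line observation that dissolves the ``main obstacle'' you flag. You already argue that $R(r,s)$ depends on $s$ only through the ratio $s_2/s_1$; but this immediately implies that $R(r,(t,t))$ is \emph{constant} in $t$. Hence there is nothing to telescope, no concavity of $t\mapsto R(r,(t,t))$ to verify, and no pooling/garbling step to justify. The proof is simply: for any calibrated scheme,
\[
\sum_r \lambda(r)\sum_s \pi(s\mid r)\,R(r,s)\;\le\;\sum_r \lambda(r)\,\max_s R(r,s)\;=\;\sum_r \lambda(r)\,R\bigl(r,(\mu,\mu)\bigr),
\]
and the no-information scheme (calibrated since $E[r_i]=\mu$ by exchangeability) attains the right-hand side.

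The paper's proof is exactly this direct route: it computes explicitly $R\bigl((1,l),s\bigr)=\tfrac{1+l}{\lambda}\cdot\tfrac{s_1 s_2}{(s_1+s_2)^2}$ for the exponential density, from which both the optimal ratio $x(l)\equiv 1$ and the scale-invariance of $R(r,(t,t))$ are read off, and then observes that the calibrated ratio-$1$ signal is the no-information signal. One small correction: Lemma~\ref{lemmatemperary} by itself only gives $x(l)\le 1$, not $x(l)=1$; the equality for the exponential case comes from the explicit computation done in the proof, not from the lemma.
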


\begin{algorithm}[t]
\SetAlgoNoLine
\begin{enumerate}%[itemsep= 0.5 pt,topsep = 0.5 pt, leftmargin=0pt]
\item  Given CTR vector $r=(1, l)$, we compute a list of candidate signals $\sigma_0, \sigma_1, \dots, \sigma_{K-1}$, where $K = \lfloor \log_{x(l)} l \rfloor$ such that $x(l)^{K+1} \le l\le x(l)^K$. The signal $\sigma_i$ is computed as
\begin{equation}
    \sigma_K = 1, \sigma_0 = x(l)^K, {\sigma_i} = \sigma_{i-1}\cdot x(l), \forall i \in [K].
\end{equation}
\item  Define $p(r, s) = \lambda(r)\pi(s|r)$ as the probability mass of sending signal $s$ conditioning on CTR $r$. Let $(l, 1)$ and $(1, l)$ send signals $(\sigma_k, \sigma_{k+1})$ and $(\sigma_{k+1}, \sigma_{k})$, respectively, with the same probability mass 
\[p\Big((l, 1), (\sigma_k, \sigma_{k+1})\Big) = p\Big((1, l), (\sigma_{k+1}, \sigma_{k})\Big) = p_k, \forall k \{ 1, 2, \dots, K-1\}\]
where $p_k$ is computed as 
\begin{equation}\label{xkprobabilitymass}
p_k = p_{k-1}\cdot \frac{1-\sigma_k}{\sigma_k - l} = p_0 \prod_{i=1}^k \frac{1-\sigma_i}{\sigma_i -l}, \forall k \in \{1, 2, \dots, K-1\}.
\end{equation}
\item  By the above construction, we know  $p\Big((l, 1), (\sigma_{K-1}, 1)\Big) = p\Big((1, l), (1, \sigma_{K-1})\Big) = p_{K-1}$.  To maintain calibration constraint, the seller will additionally send signal $(\sigma_0, \sigma_0)$ with $p\Big((l, 1), (\sigma_0, \sigma_0)\Big) = p\Big((1, l), (\sigma_0, \sigma_0)\Big) = z$, where 
\begin{equation}\label{relationzx0}
z = p_0\cdot \frac{\sigma_0 - l}{l+1 - 2\sigma_0}.
\end{equation}
\item Choose a proper $p_0$ (other parameters $z$, $p_i$ are then determined) so that 
\[z+p_0+p_1+\cdots+p_{K-1}=\lambda(r)=\frac{1}{2}\]
\end{enumerate}
\caption{Construction of \emph{Simple} Signaling Scheme}
\label{alg_construction_sgscheme}
\end{algorithm}

\subsection{Construction of the \textit{Simple} Signaling Scheme}\label{sectionsconstructingsignalingscheme}

{
In this part, we present the construction of a simple signaling scheme. Recall that given the smaller CTR $l$, the maximum expected revenue is achieved when $\frac{s_2}{s_1} = x(l)$, i.e., the optimal signal ratio. In other words, one can construct an approximately optimal signaling scheme by sending signal pairs of the optimal signal ratio $x(l)$ as frequently as possible, while maintaining the calibration constraints.

Based on the above intuition, we then present the construction of the simple signaling scheme $\pi$ depicted in Algorithm %\hf{ALGORITHM?}
\ref{alg_construction_sgscheme}. Since the CTRs of bidders are exchangeable, we can separately consider the signaling scheme for each pair of CTR vectors. Without loss of generality, consider a pair of CTR vectors $(l, 1)$ and $(1, l)$ such that $\lambda(r = (l, 1)) = \lambda(r = (1, l)) = \frac{1}{2}$. In particular, we assume $x(l) < 1$ (otherwise, optimality can be easily achieved by revealing no information).
With properly chosen parameter $p_0$, the constructed signaling scheme is calibrated and valid (see Appendix \ref{app_verifyingsignalingscheme}). Note that by (\ref{relationzx0}), the probability mass $z=0$ if $\sigma_0 = l$, which implies all the signals sent are of the optimal signal ratio. Therefore,  the optimal expected revenue is achieved. 

}

%\song{[Song: Do we need another equation for $z$ and $p_i$'s summing up to $1$ to fully determine their values?]}

{

\begin{remark}
As mentioned previously, the construction only depends on one parameter, the optimal signal ratio $x(l)$. Besides, the scheme has a clear structure, i.e., computing a geometric series of signals $\{\sigma_i\}$ and then assigning probabilities, both of which can be done efficiently.
\end{remark}

%The following Table~\ref{tab:my_label} provides a concrete example that describes the constructed signaling scheme.
{We present the following  concrete example that describes the constructed signaling scheme.}
% In the following, we give a concrete example to clarify some notations defined. As can be seen in Table 1, the constructed signaling scheme is symmetric.

\begin{table*}[t]
    \centering
        \caption{Example of a signaling scheme. Each entry corresponds to $\lambda(r)\pi(s | r)$, i.e., the probability mass of sending signals $s = (\sigma_i, \sigma_j)$ when observing the CTR vector $r$.}
        %\vspace{-0.2cm}
\begin{tabular}{|c|c|c|c|c|c|c|c|c|c|}
\hline \diagbox{r}{s}&($\sigma_0$, $\sigma_0$)&($\sigma_0$, $\sigma_1$)&($\sigma_1$, $\sigma_2$)&($\sigma_2$, $\sigma_3$)&($\sigma_3$, $\sigma_4$)&($\sigma_1$, $\sigma_0$)&($\sigma_2$, $\sigma_1$)&($\sigma_3$, $\sigma_2$)&($\sigma_4$, $\sigma_3$)\\
\hline 
(0.6, 1)&$z$&$p_0$&$p_1$&$p_2$&$p_3$&0&0&0&0\\
\hline (1, 0.6)&$z$&0&0&0&0&$p_0$&$p_1$&$p_2$&$p_3$\\
%\hline (1, 0.6)&5&3\\
\hline
\end{tabular}

    \label{tab:my_label}
    %\vspace{-0.2cm}
\end{table*}

\begin{example}
Let $f(v)$ be the uniform distribution over $[0, 1]$. The click-through rates are $h = 1$ and $l = 0.6$, with 
$\lambda((h, l)) = \lambda((l, h)) = 0.5$. By solving (\ref{expectedrevenuegivenrs}), we obtain the optimal signal ratio $x = \frac{9}{10}$. We consider $k=4$ levels of signals within $[0.6, 1)$ as $\sigma_0 = (\frac{9}{10})^4, \sigma_1 = (\frac{9}{10})^3, \sigma_2 = (\frac{9}{10})^2, \sigma_3 = \frac{9}{10}, \sigma_4 = (\frac{9}{10})^0 = 1$. The signaling scheme is given by Table \ref{tab:my_label}, where the probabilities $z$ and $p_0, p_1, p_2, p_3$ are determined according to \eqref{xkprobabilitymass} and \eqref{relationzx0} to keep the construction a calibrated signaling scheme.
%\song{[Song: The example looks great. A few comments: It seems that $s_1$ and $s_2$ are already used for the signals sent to bidder $1$ and bidder $2$. So we may need to use a different notation here to avoid confusion. One suggest in this example is that we can use $s^4$, $s^3$, $s^2$, $s$ (or $\sigma$ in stead of $s$) to replace $s_0$, $s_1$, $s_2$, $s_3$. So the reader needs to understand fewer notations in order to understand the example. The $p$'s are rather complicated in this case, so we can first emphasize that $p$'s and $z$ are chosen to make the signaling scheme valid and calibrated. We can then give an approximation value of $z$ so that the readers know $z$ is a small value. For their exact values, we can put them in less important places.
%Also, as a comparison, we can provide an upper bound of the revenue, i.e., the optimal revenue without calibration constraint.]}

%\vspace{-5mm}
Note that all signal pairs except $(\sigma_0, \sigma_0)$ follows the optimal signal ratio $x = \frac9{10}$. Hence the revenue suboptimality only happens when sending the signal $s = (\sigma_0, \sigma_0)$, of which the probability mass is $z \approx 0.016$. 
It turns out that the expected revenue of this calibrated signaling scheme is $0.2698$. In contrast, by relaxing the calibration constraint, the maximum revenue one can achieve is $0.27$ (i.e., always sending signals of the optimal signal ratio $9/10$). Since $0.27$ is clearly an upper bound of the optimal revenue for any calibrated signaling scheme, the revenue loss of our construction is less than $0.075\%$.

% The $p_i$'s and $z$ are chosen carefully to make the signaling scheme valid and calibrated (which we will show in Section \ref{sectionsconstructingsignalingscheme}). In this example, the probability mass  $z \approx 0.016$ is rather small.  In the signaling scheme, the signal $(\sigma_i, \sigma_{i+1})$ is of the optimal signal ratio $\frac{\sigma_i}{\sigma_{i+1}} = \frac{9}{10}$ , which is called optimal signal, while the signal $(\sigma_0, \sigma_0)$ is of ratio $1 \neq \frac{9}{10}$, which is called non-optimal ratio. Without calibration constraint, the seller will only send some signal of the optimal signal ratio and obtain expected revenue $0.27$ (which is naturally an upper bound of the optimal revenue under the calibration constraint), while the constructed signaling scheme can achieve expected revenue $0.2698$, which is rather close to the upper bound. By the construction of the signaling scheme in Section \ref{sectionsconstructingsignalingscheme}, we can assign values to $z$ and $p_i$'s and have a calibrated signaling scheme.% $z = \frac{4559247}{272720270}$, $p_0 = \frac{11694753}{136360135}$, $p_1 = \frac{24568047}{136360135}, p_2 = \frac{22228233}{136360135}, p_3 = \frac{7409411}{136360135}$. 
%Given these values, we can verify that this signaling scheme is calibrated. 

We remark that the revenue loss decreases very quickly as the number of signals $k$ increases. In this example, if $l = 0.2$ instead, then the optimal signal ratio $x(l) = \frac45$ and we can choose $k=7$, yielding a much smaller probability mass of sending suboptimal signals $z \approx 1.5 \times 10^{-5}$.

% Consider another example that $h=1$ and $l = 0.2$. In this case, the optimal signal ratio is $x = \frac{4}{5}$, and we can construct at most $k=7$ signals within $[0.2, 1)$. By similar construction as above, we  observe that the probability mass $z \approx 1.5\times 10^{-5}$, is much smaller than in the case of $k=4$.
\end{example}
}

\subsection{Proof of Theorem \ref{distributionfreeappro}}\label{distributionfreepoorf}

{We first show a special case of Theorem \ref{distributionfreeappro}, which is also a cornerstone for the main proof. That is, 
if the value distribution is   uniform   (i.e., $ \underline{f}/ \overline{f} = 1$), we can design a $0.995$-approximate scheme.
\begin{proposition}\label{theoremunformdistribution}
Given a uniform distribution  supported on $[0, c]$, the constructed signaling scheme can achieve at least (multiplicative) $0.995$-approximation.
\end{proposition}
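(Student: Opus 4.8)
The goal is to bound the approximation ratio of the signaling scheme from CONSTRUCTION~\ref{alg_construction_sgscheme} when $F$ is uniform on $[0,c]$. By scale invariance of the objective~\eqref{expectedrevenuegivenrs} under rescaling of values (the revenue function depends on signals only through the ratio $x = s_2/s_1$, and rescaling $v$ merely rescales revenue uniformly), we may assume $c=1$. The first step is to reduce the problem to a single pair of CTR vectors: by exchangeability of the CTR distribution, revenue and optimal revenue both decompose over unordered pairs $\{(l,1),(1,l)\}$, so it suffices to prove the $0.995$ bound for one such pair with arbitrary $l \in [0,1)$ (the case $x(l)=1$, e.g.\ $l$ near $1$, gives exact optimality via no information, so we focus on $x(l)<1$). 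The second step is to pin down the benchmark: since the calibration-free optimum sends only signal pairs of ratio $x(l)$, the optimal revenue for any calibrated scheme is at most $R\big((1,l),(1,x(l))\big)$, the per-pair revenue at the optimal signal ratio; call this $R^*(l)$. All signal pairs in the construction except $(\sigma_0,\sigma_0)$ have ratio exactly $x(l)$, so the constructed scheme's per-pair revenue is $(\tfrac12 - z)\cdot 2 R^*(l) + (\text{revenue from the }(\sigma_0,\sigma_0)\text{ mass } 2z)$, and since $(\sigma_0,\sigma_0)$ has ratio $1$ its revenue is nonnegative; hence the approximation ratio is at least $1 - 2z/\lambda(r) \cdot(\text{something})$, i.e.\ it is controlled entirely by the ``wasted'' probability mass $z$ relative to $\lambda(r) = \tfrac12$.

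The third and technical step is therefore to upper bound $z/(z+p_0+\cdots+p_{K-1})$ uniformly over $l \in [0,1)$. Using the recursion $p_k = p_{k-1}\cdot\frac{1-\sigma_k}{\sigma_k - l}$ from~\eqref{xkprobabilitymass} and $z = p_0\cdot\frac{\sigma_0 - l}{l+1-2\sigma_0}$ from~\eqref{relationzx0}, one sets $p_0 = 1$ as a normalization and computes $\frac{z}{z+\sum_k p_k}$ as an explicit function of $l$ and $x(l)$, with $K = \lfloor \log_{x(l)} l\rfloor$ and $\sigma_i = x(l)^{K-i}$. The key analytic inputs are: (i) $x(l) \le 1$ with $x(l) > l$ (Lemma~\ref{lemmatemperary}), so the $\sigma_i$ form a geometric sequence in $[l,1]$ bounded away from $l$, forcing each ratio $\frac{1-\sigma_k}{\sigma_k - l}$ to be bounded and the partial sums $\sum p_k$ to be reasonably large compared with $z$; and (ii) an explicit handle on $x(l)$ for the uniform distribution, obtained by solving the first-order condition of~\eqref{expectedrevenuegivenrs} in $x$ (this integral is an elementary polynomial in $x$ when $f\equiv 1$), which should give a closed-form or tightly-bounded $x(l)$. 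I would then either (a) bound $z/(z+\sum p_k)$ by a monovariate estimate valid for all $l$, reducing to a one-dimensional calculus exercise whose maximum is checked to be below $0.005$, or (b) split into a regime where $K$ is large (many signal levels, $z$ exponentially small, trivial) and a regime where $l$ is bounded away from both $0$ and $1$ so $K$ is a bounded integer and the ratio is an explicit function on a compact set.

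\textbf{Main obstacle.} The crux is step three: showing $z$ is uniformly small relative to $\tfrac12$ over the whole range $l\in[0,1)$. The difficulty is twofold. First, $x(l)$ for the uniform distribution is only implicitly defined by a polynomial equation, so one needs either an exact factorization or sharp two-sided bounds on $x(l)$ (and on $\frac{1-x(l)}{x(l)-l}$, which governs the geometric growth of the $p_k$) that are tight enough to yield the stated constant $0.995$ rather than some weaker bound. Second, as $l\to 0$ the number of levels $K=\lfloor\log_{x(l)}l\rfloor\to\infty$ and the worst case is delicate near the ``boundary'' values of $l$ where $K$ jumps by one (the floor creates a mild discontinuity in the construction); one must verify the bound does not degrade at these transition points. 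I expect the cleanest route is to prove a clean lower bound $\sum_{k=0}^{K-1} p_k \ge c_0\cdot p_0$ for an absolute constant $c_0$ using the geometric structure, combined with $z = p_0\cdot\frac{\sigma_0-l}{l+1-2\sigma_0} \le p_0\cdot\frac{x(l)-l}{1-l}$ (using $\sigma_0 \le x(l)$ and $l+1-2\sigma_0 \ge 1-l$ when $\sigma_0 \le l \cdot(\text{something})$ — this inequality itself needs care since $\sigma_0$ could exceed $l$ only slightly), so that $z/(z+\sum p_k)$ is bounded; the numerical constant $0.995$ then comes from optimizing the resulting explicit expression, most plausibly attained in the moderate-$l$ regime and verified by a short interval analysis.
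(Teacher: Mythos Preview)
Your overall architecture is right---reduce to a single CTR pair, benchmark against the calibration-free optimum $R^*(l)$, and control the mass $z$ on the one non-optimal signal pair---but there is a genuine gap in step three that prevents the argument from reaching $0.995$.

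The gap is that you discard the revenue contributed by the $(\sigma_0,\sigma_0)$ signal, lower-bounding it by zero and therefore aiming to prove that the suboptimal mass fraction $2z$ is below $0.005$. That target is simply false for the uniform distribution. Concretely, with $l=0.6$ (the paper's own example) one has $x(l)=9/10$, $K=4$, $\sigma_0=0.6561$, and a direct computation gives $z\approx 0.016$, i.e.\ total suboptimal mass $2z\approx 0.033$. So your crude bound would yield only $\approx 0.967$, and the worst case over $l$ is slightly worse still (the paper's analysis gives $2z\le 0.04$). The missing ingredient is that the ratio-$1$ signal $(\sigma_0,\sigma_0)$ is \emph{not} worthless: a short computation of \eqref{expectedrevenuegivenrs} at $x=1$ versus $x=x(l)$ shows it already attains $8/9$ of $R^*(l)$ (this is Lemma~\ref{proposition46_lemma_2}). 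The paper then combines $2z\le 0.04$ with this $8/9$ floor to get $(1-0.04)\cdot 1 + 0.04\cdot\tfrac{8}{9}\ge 224/225\approx 0.995$.

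A second, smaller issue: you write that $x(l)$ ``is only implicitly defined by a polynomial equation'' and that you would need ``sharp two-sided bounds on $x(l)$''. For the uniform distribution the integral \eqref{expectedrevenuegivenrs} is an elementary cubic in $x$, and the first-order condition gives the closed form $x(l)=\tfrac{3+l}{4}$. This linear (hence convex) $x(l)$ is exactly what lets the paper invoke Proposition~\ref{theroemboudningapproximation} to obtain the $0.04$ bound on the suboptimal mass via the explicit quantity $S(K_0,l[K_0+1])$, rather than the ad hoc interval analysis you sketch.
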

Simple calculation leads to the optimal signal ratio $x(l)=\frac{3+l}{4}$ under the uniform distribution. The proof of Proposition \ref{theoremunformdistribution} {is a combination of} the following two lemmas. 
Recall that the constructed signaling scheme sends signal pairs of {the optimal} signal ratio $x(l)$ {with probability $1 - z$} and {with the remaining probability $z$ sends the suboptimal signal pair $(\sigma_0, \sigma_0)$}. Lemma \ref{proposition46_lemma_1} {bounds the probability $z$}, while Lemma \ref{proposition46_lemma_2}
{lower bounds the revenue approximation when sending out $(\sigma_0, \sigma_0)$}. We will discuss the proof of Lemma \ref{proposition46_lemma_1} {in Section \ref{knowndistributionapproximation}}, where a more general result is proved, and defer the proof of Lemma~\ref{proposition46_lemma_2} to Appendix \ref{proof_of_theoremunformdistribution_lemma}.
\begin{lemma}\label{proposition46_lemma_1}
For optimal signal ratio function $x(l)=\frac{3+l}{4}$, there exists
a signaling scheme whose worst-case approximation is $1-z^*$, where $z^* \leq 0.04$.
\end{lemma}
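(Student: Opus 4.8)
\textbf{Proof proposal for Lemma \ref{proposition46_lemma_1}.}

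The plan is to control the single ``bad'' probability mass $z$ produced by CONSTRUCTION \ref{alg_construction_sgscheme} when $x(l)=\frac{3+l}{4}$, uniformly over all $l\in[0,1)$. First I would recall from the construction that, after fixing $K=\lfloor \log_{x(l)} l\rfloor$ and the geometric list $\sigma_i = x(l)^{K-i}$, all probabilities $p_k$ are pinned down relative to $p_0$ via the recursion $p_k = p_{k-1}\cdot\frac{1-\sigma_k}{\sigma_k-l}$ and $z = p_0\cdot\frac{\sigma_0-l}{l+1-2\sigma_0}$, with the single normalization $z+\sum_{k=0}^{K-1}p_k = \tfrac12$. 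Dividing through by $\tfrac12$, the quantity of interest is the \emph{fraction} $z^\ast(l) := z/(z+\sum_k p_k) = \big(1 + \sum_{k=0}^{K-1}(p_k/z)\big)^{-1}$, so I would derive a clean closed form for each ratio $p_k/z$ purely in terms of the $\sigma_i$'s (hence in terms of $l$): from $z = p_0\frac{\sigma_0-l}{l+1-2\sigma_0}$ we get $p_0/z = \frac{l+1-2\sigma_0}{\sigma_0-l}$, and then $p_k/z = (p_0/z)\prod_{i=1}^k \frac{1-\sigma_i}{\sigma_i-l}$. The task reduces to showing $\sum_{k=0}^{K-1} p_k/z \ge 24$ for every $l\in[0,1)$, which gives $z^\ast(l)\le \frac{1}{25} = 0.04$.

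The key observation I would exploit is monotonicity of the factors: since $\sigma_i = x(l)^{K-i}$ is increasing in $i$ and $x(l)\in(l,1]$, each factor $\frac{1-\sigma_i}{\sigma_i-l}$ is a decreasing function of $\sigma_i$, hence decreasing in $i$; moreover near the ``top'' of the list, where $\sigma_i$ is close to $1$, the factor $\frac{1-\sigma_i}{\sigma_i-l}$ is small, while near the ``bottom'', where $\sigma_i\approx l$ (recall $x(l)^{K+1}\le l\le x(l)^K$, so $\sigma_0 = x(l)^K$ is within a factor $x(l)$ of $l$), the factor is large. I would therefore isolate the first few terms $p_0/z, p_1/z, p_2/z,\dots$, bound each $\sigma_i$ from below and above in terms of $l$ using $x(l)=\frac{3+l}{4}$ and the sandwich $l\le \sigma_0 x(l)^{-1}$... more precisely $l x(l)^{-1}\le \sigma_0\le l x(l)^{-1}\cdot x(l) $; actually $\sigma_0=x(l)^K$ with $x(l)^{K+1}\le l\le x(l)^K$ gives $l\le\sigma_0\le l/x(l)$, and then $\sigma_i = \sigma_0 x(l)^i\in[l x(l)^{i-1}, \, l x(l)^{i-1}]$ — I would pin these down and plug in to get explicit lower bounds on $p_k/z$ as functions of $l$ alone. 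Summing enough of them (and using that $\frac{1-\sigma_i}{\sigma_i-l}\ge 1$ for the low-index terms, since $\sigma_i$ is much closer to $l$ than to $1$ there) should already overshoot $24$ comfortably; I expect only a handful of terms are needed, with the remaining terms simply discarded as nonnegative.

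The main obstacle is the $l\to 1^-$ regime. There $x(l)\to 1$, $K=\lfloor\log_{x(l)} l\rfloor\to\infty$, and individually $\sigma_0-l\to 0$, $1+l-2\sigma_0\to 0$, so the closed forms for $p_0/z$ and the factors become $0/0$-type and must be handled by a careful asymptotic expansion: writing $x(l) = 1-\eta$ with $\eta = \frac{1-l}{4}\to 0$, one has $\log_{x(l)} l = \frac{\log l}{\log(1-\eta)}\sim \frac{-(1-l)}{-\eta} = 4$, so in fact $K$ stays \emph{bounded} as $l\to 1$ (it approaches $4$) — this is the subtle point I'd verify first, since if $K$ really stayed bounded the sum $\sum p_k/z$ could in principle be small. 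I would resolve this by a direct first-order expansion of each $\sigma_i - l$, $1-\sigma_i$, and $l+1-2\sigma_0$ in $\eta$, showing that the ratios $p_k/z$ tend to strictly positive constants whose sum exceeds $24$ in the limit, and combine this limiting estimate with a uniform bound on the compact complementary range of $l$ (say $l\in[0,1-\delta]$) where $K$ and all quantities vary continuously and the finitely-many-term lower bound from the previous paragraph applies. Piecing the two ranges together yields $z^\ast(l)\le 0.04$ for all $l\in[0,1)$, which is the claim. A short remark would note that in fact the bound is loose and numerically $z^\ast$ is far below $0.04$ (consistent with the worked example where $z^\ast\approx 0.06\%$), but $0.04$ suffices for Proposition \ref{theoremunformdistribution}.
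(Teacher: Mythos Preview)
Your setup is correct --- $z^*(l) = 1/S(K,l)$ with $S(K,l) = 1 + \sum_{k=0}^{K-1} p_k/z$ in your notation --- but the execution has a genuine gap. The bound $z^*\le 0.04$ is essentially \emph{tight}: the worst case occurs as $l\downarrow l[5]\approx 0.553$ (the unique $l<1$ solving $((3+l)/4)^5=l$), where one is forced to take $K=4$ and a direct computation gives $S(4,l[5])\approx 25.1$, hence $z^*\approx 0.0399$. Because the target is tight, your plan of ``isolate the first few terms, bound each $\sigma_i$ via the sandwich $\sigma_0\in[l,\,l/x(l)]$, and discard the rest as nonnegative'' cannot reach $24$: any slack in the $\sigma_i$-bounds loses the constant. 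You would have to locate the minimizer of $S(K,l)$ exactly, and your sketch never does this.

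The paper proves the lemma as the linear special case of the general Proposition~\ref{theroemboudningapproximation} for convex $x(l)$. The two structural facts it uses are: (i) on each interval $(l[k+1],l[k]]$ where $K=k$ is constant, $z$ is \emph{decreasing} in $l$ (Lemma~\ref{opensetmonotonictiy}), so the supremum on that piece is the limit $l\downarrow l[k+1]$; and (ii) across pieces, $S(k,l[k+1])$ is \emph{increasing} in $k$ (Lemma~\ref{lemmaskdecreasing}). Together these pin the global worst case at $k=K_0$, $l\downarrow l[K_0+1]$. For $x(l)=\frac{3+l}{4}$ one checks $K_0=4$, and then a single explicit evaluation of $S(4,l[5])$ finishes. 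Your approach never isolates this worst point; ``continuity on $[0,1-\delta]$'' alone does not deliver the numerical value $24$.

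Two smaller corrections. First, your formula $\sigma_i=\sigma_0\,x(l)^i$ has the exponent reversed; it should be $\sigma_i=\sigma_0/x(l)^i=x(l)^{K-i}$, which is increasing in $i$. Second, your treatment of $l\to 1^-$ is backwards. You correctly find $K\to 4$, but then claim the ratios $p_k/z$ tend to positive \emph{constants}. In fact, with $\eta=(1-l)/4$, one has $\sigma_0-l\sim 6\eta^2$ while $l+1-2\sigma_0\sim 4\eta$, so $p_0/z\sim \tfrac{2}{3\eta}\to\infty$ and $z^*(l)\to 0$. The regime $l\to 1$ is therefore the \emph{easy} one; the hard point is interior, at $l[5]$.
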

\begin{lemma}\label{proposition46_lemma_2}
  Sending signal $(\sigma_0, \sigma_0)$, i.e., the signal ratio is $1$,  is (multiplicatively) $\frac{8}{9}$-approximation.
\end{lemma}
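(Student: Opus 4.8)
The goal is to lower bound the revenue loss when the signaling scheme sends the single suboptimal pair $(\sigma_0,\sigma_0)$, and to show this loss is a $\tfrac89$-fraction of the optimum, uniformly over the relevant parameter range.

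\textbf{Setup and reduction.} Fix the CTR pair $r=(1,l)$ (equivalently $(l,1)$) with $l\in[0,1)$, and recall from \eqref{expectedrevenuegivenrs} that, under the uniform distribution on $[0,c]$, the expected revenue $R(r,s)$ depends on $s=(s_1,s_2)$ only through the signal ratio $x=s_2/s_1$; write $R(l,x)$ for this quantity (after factoring out the harmless scaling in $c$, which cancels in any ratio). The signal pair $(\sigma_0,\sigma_0)$ has ratio $x=1$, while the optimal signal ratio is $x(l)=\tfrac{3+l}{4}$ by the stated computation. So the claim is exactly
\[
  \frac{R(l,1)}{R(l,x(l))} \ \ge\ \frac89 \qquad \text{for all } l\in[0,1).
\]
First I would write down $R(l,x)$ in closed form. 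With $f$ uniform on $[0,c]$, the inner integrals $\int_{v_2 s_2/s_1}^{c} f(v_1)\,dv_1 = 1 - \tfrac{v_2 x}{c}$ (valid when $v_2 x \le c$) are linear, and the outer integrals over $v_2\in[0,c]$ (resp. $v_1$) are then polynomial; carrying out the two one-dimensional integrals in \eqref{expectedrevenuegivenrs} yields $R(l,x)$ as an explicit rational function of $l$ and $x$ (a low-degree polynomial in $x$ for fixed $l$, times factors of $h=1$, $l$). One must be mildly careful about whether $x\le 1$ or $x>1$ changes the domain of integration where $v_2 x/s_1 \cdot s_1 = v_2 x$ exceeds $c$ — but since $x(l)\le 1$ (Lemma \ref{lemmatemperary}) and we only evaluate at $x=1$ and $x=x(l)\le 1$, both cases stay in the regime $x\le 1$, so no truncation of the $v$-integrals occurs and the formula is a single polynomial expression.

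\textbf{Main estimate.} Having the closed forms $R(l,1)$ and $R(l,x(l))$, the plan is to prove the inequality $9R(l,1)\ge 8R(l,x(l))$ directly. Define $g(l) := 9R(l,1) - 8R(l,x(l))$; substituting $x(l)=\tfrac{3+l}{4}$ makes $g$ an explicit polynomial in $l$ on $[0,1)$. I would then show $g(l)\ge 0$ on $[0,1]$ by elementary means: either $g$ factors with an obvious nonnegative factor (e.g.\ a power of $l$ times a polynomial that is manifestly positive on $[0,1]$), or one checks $g(0)\ge0$, $g(1)\ge 0$ and bounds the derivative / uses that a degree-$\le 4$ polynomial with controlled coefficients is nonnegative on $[0,1]$ (e.g.\ by writing it in the Bernstein basis on $[0,1]$ and observing all Bernstein coefficients are $\ge 0$). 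Since this is a plan, I will not grind the algebra, but I expect the ratio $R(l,1)/R(l,x(l))$ to be monotone in $l$ with its worst value attained at one endpoint — plausibly as $l\to 1$, where $x(l)\to 1$ and the ratio tends to $1$, or at $l=0$, where the gap between ratio $1$ and ratio $x(0)=\tfrac34$ is largest; checking the endpoint values of the ratio pins down the constant $\tfrac89$ and confirms it is tight (or not) there.

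\textbf{Main obstacle.} The conceptual content is light; the real work — and the only place something can go wrong — is the explicit integration of \eqref{expectedrevenuegivenrs} and the subsequent verification that the resulting polynomial $g(l)$ is nonnegative on $[0,1]$, with the constant $\tfrac89$ being the exact (not merely sufficient) bound. The hazard is bookkeeping: keeping the $h=1$, $l$, and $c$ factors straight, not mis-handling the region where $v_i x$ could exceed $c$ (here avoided since $x\le 1$), and then getting a clean enough polynomial that nonnegativity is transparent. If the naive polynomial is not obviously nonnegative, the fallback is to bound $R(l,1)/R(l,x(l))$ from below by replacing $R(l,x(l))$ with $\max_x R(l,x)$ only on a coarse grid of $l$ together with a Lipschitz estimate in $l$ (both $R(l,1)$ and $x(l)$, hence $R(l,x(l))$, are smooth on $[0,1]$ with explicitly bounded derivatives), reducing the claim to finitely many numerical checks — but I expect the direct polynomial argument to go through and to be cleaner.
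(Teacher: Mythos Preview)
Your proposal is correct and follows essentially the same approach as the paper: compute the closed form $R(l,x)=c\bigl(\tfrac{x}{2}-\tfrac{x^2}{3}+\tfrac{lx}{6}\bigr)$ for $x\le 1$, evaluate at $x=1$ and at $x(l)=\tfrac{3+l}{4}$ to get $R(l,1)=c\cdot\tfrac{1+l}{6}$ and $R(l,x(l))=c\cdot\tfrac{(3+l)^2}{48}$, and verify $\tfrac{R(l,1)}{R(l,x(l))}=\tfrac{8(1+l)}{(3+l)^2}\ge\tfrac{8}{9}$ with equality at $l=0$. The algebra you defer is entirely routine and matches the paper's computation.
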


%\begin{proof}[Proof of Proposition~\ref{theoremunformdistribution}]
\begin{proofof}{Proposition~\ref{theoremunformdistribution}}
{Combining Lemma \ref{proposition46_lemma_1} and Lemma~\ref{proposition46_lemma_2}, the approximation of the constructed signaling scheme is at least} $(1-z^*) \cdot 1 + z^*\cdot \frac{8}{9} \geq 224/225 \approx 0.995$.
\end{proofof}

%The proof of Theorem \ref{distributionfreeappro} is mainly by establishing the connection between the expected revenue $R(r, s)$ under arbitrary (unknown) value distribution and that under uniform value distribution. Specifically, we derive the upper and lower bound of $R(r, s)$ by utilizing expected revenue under the uniform distribution. This allows us to derive an approximation to the maximum expected revenue when the seller reveals signals by assuming a uniform distribution, i.e., the revealed signal pair is either $(\sigma_i, \sigma_{i+1})$ with ratio $\frac{3+l}{4}$ or $(\sigma_0, \sigma_0)$. Finally, the $0.995 \cdot ( \underline{f}/ \overline{f})^2$ approximation is obtained similarly as the above analysis for Proposition \ref{theoremunformdistribution}.
}

{We prove Theorem~\ref{distributionfreeappro} by showing that the signaling scheme we constructed for the uniform distribution  is a $0.995 \cdot (\underline{f}/ \overline{f})^2$-approximation for any value distribution. In particular, sending signal pairs of signal ratio $\frac{3+l}{4}$ is $(\underline{f}/ \overline{f})^2$-approximate and sending $(\sigma_0, \sigma_0)$ is $\frac{8}{9} \cdot ( \underline{f}/ \overline{f})^2$-approximate. An analysis similar to Proposition \ref{theoremunformdistribution} finally leads to $0.995 \cdot (\underline{f}/ \overline{f})^2$ approximation.}

%\begin{proof}[Proof of Theorem \ref{distributionfreeappro}]
\begin{proofof}{Theorem \ref{distributionfreeappro}}
We first derive the upper and lower bound of the expected revenue $R(r, s)$ by connecting to the uniform-distribution case. Given a CTR $r = (1, l)$ and any signal pair $s = (s_1, s_2)$ with ratio $\frac{s_2}{s_1} \le 1$ (by Lemma \ref{lemmatemperary},  $\frac{s_2}{s_1} \le 1$ is without loss of generality), the upper bound by (\ref{expectedrevenuegivenrs}) is computed as
\begin{small}
\begin{equation*}
\begin{split}
R(r, s) 
=& \int_0^c \frac{v_2 s_2}{s_1}\int_{\frac{v_2s_2}{s_1}}^c f(v_1)dv_1 f(v_2) dv_2 + \int_0^{c\cdot\frac{s_2}{s_1}} (l \cdot \frac{v_1s_1}{s_2} )\int_{\frac{v_1s_1}{s_2}}^c f(v_2)dv_2 f(v_1) dv_1\\
\le& \int_0^c  \frac{v_2 s_2}{s_1}\int_{\frac{v_2s_2}{s_1}}^c \overline{f} dv_1 \overline{f} dv_2 + \int_0^{c\cdot\frac{s_2}{s_1}} (l\cdot \frac{v_1s_1}{s_2} )\int_{\frac{v_1s_1}{s_2}}^c \overline{f} dv_2 \overline{f} dv_1\\
=&(\overline{f}c)^2\cdot \Big( \int_0^c  \frac{v_2 s_2}{s_1}\int_{\frac{v_2s_2}{s_1}}^c \frac{1}{c} dv_1 \frac{1}{c}  dv_2 + \int_0^{c\frac{s_2}{s_1}} (l \cdot \frac{v_1s_1}{s_2} )\int_{\frac{v_1s_1}{s_2}}^c \frac{1}{c} dv_2 \frac{1}{c} dv_1 \Big)\\
=& (\overline{f}c)^2\cdot\overline{R}(r, s)
\end{split}
\end{equation*} 
\end{small}where $\overline{R}(r, s)$ computes the expected revenue under a uniform value distribution given CTR vector $r$ and signal pair $s$. Similarly, the lower bound is $R(r, s) \ge (\underline{f}c)^2\cdot\overline{R}(r, s)$.
Since $R(r, s)$ is only related to the signal ratio $x=\frac{s_2}{s_1}$, we rewrite $R(r, s)$ as $R(r, x)$. {Similarly,  rewrite $\overline{R}(r, s)$ as  $\overline{R}(r, x)$.} 
Let $x^* \le 1$ and $\overline{x} \le 1$ (by Lemma \ref{lemmatemperary}) be the optimal signal ratio for $R(r, x)$ and $\overline{R}(r, x)$, respectively. Then, 
\begin{gather*}
(\underline{f}c)^2\cdot\overline{R}(r, x^*) \le R(r, x^*) \le (\overline{f}c)^2\cdot\overline{R}(r, x^*) \\(\underline{f}c)^2\cdot\overline{R}(r, \overline{x}) \le R(r, \overline{x}) \le (\overline{f}c)^2\cdot\overline{R}(r, \overline{x})
\end{gather*} 
Simple calculation leads to $ (\underline{f} / \overline{f}  )^2\cdot R(r, x^*) \le (\underline{f} c)^2\cdot \overline{R}(r, x^*) \le (\underline{f}c )^2\cdot \overline{R}(r, \overline{x})  \le R(r, \overline{x})$. The second inequality is due to the optimality of $\overline{x}$ to $\overline{R}(r, x)$. %The whole inequality implies that without the calibration constraint, if we send signals whose ratio is optimal to a uniform-distribution case, then regardless of the unknown true value distribution, the approximation ratio is  at least $ (\underline{f} / \overline{f} )^2$.
The whole inequality implies that if the seller sends signal pair whose ratio is optimal under a uniform value distribution, the approximation ratio is at least $ (\underline{f} / \overline{f} )^2$.
%Notice that in the constructed signaling scheme, the seller sends at most one signal pair $(\sigma_0, \sigma_0)$.
Notice that Lemma  \ref{proposition46_lemma_2} shows that  under a uniform value distribution, sending a signal pair with signal ratio $\frac{s_2}{s_1}=1$ achieves $\frac{8}{9}$ approximation, i.e., $\frac{8}{9}\overline{R}(r, \overline{x}) \le \overline{R}(r, 1)$. Hence, we have $\frac{8}{9}(\underline{f} / \overline{f}  )^2\cdot R(r, x^*) \le \frac{8}{9}(\underline{f}c )^2\cdot \overline{R}(r, \overline{x})  \le (\underline{f}c )^2\cdot \overline{R}(r, 1) \le R(r, 1)$, implying that if the seller sends signal pair with ratio equal to $1$, then the approximation is $\frac{8}{9}\cdot(\underline{f} / \overline{f}  )^2$. 

Now, we present the $0.995 \cdot ( \underline{f}/ \overline{f})^2$ approximation. Lemma \ref{proposition46_lemma_1} shows that under a uniform distribution, the probability mass $z$ of sending signal $(\sigma_0, \sigma_0)$ is upper bounded as $z < 0.04$, while $1-z > 0.96$ probability mass is for sending signal pairs of the optimal signal ratio $\overline{x}$.   Hence, if we construct a signaling scheme by assuming the unknown value distribution to be a uniform distribution, it can achieve at least $0.995 \cdot ( \underline{f}/ \overline{f})^2$-approximation, where $0.995\approx 0.04\times \frac{8}{9} + 0.96$.
\end{proofof}

The above proof implies that the \emph{prior-free} signaling scheme is constructed with signal ratio $x(l)=\frac{3+l}{4}$ obtained by assuming the unknown value distribution to be a uniform distribution. %, which can be constructed efficiently. 
%One remaining problem is the calculation of optimal signal ratio. Fortunately, under the uniform value distribution, the optimal signal ratio is a simple linear function $x(l) = \frac{3+l}{4}$. 
One direct result from the proof is that an easier but loose scheme is revealing no information which gives $\frac{8}{9}\cdot(\underline{f} / \overline{f}  )^2$ approximation. The improved ratio in Theorem \ref{distributionfreeappro} demonstrates the benefits of strategic information revelation.

\subsection{{Completing the Last Piece ---  Approximation Guarantee with Known Distributions}}\label{knowndistributionapproximation}
{The only missing piece for completing the proof of Theorem \ref{distributionfreeappro} is the proof of Lemma \ref{proposition46_lemma_1}, which is for a special linear optimal signal ratio function $x(l)=\frac{3+l}{4}$. In this part, we {prove a more general result for any convex $x(l)$, which directly implies Lemma~\ref{proposition46_lemma_1} with linear $x(l)$.}%develop a more general framework, where we consider any convex $x(l)$. 
% Then, Lemma \ref{proposition46_lemma_1} with linear $x(l)$ is a special case of our  Proposition \ref{theroemboudningapproximation} and its proof directly follows. 
% The $z^*$ in Lemma \ref{proposition46_lemma_1} is directly computed as stated in Proposition \ref{theroemboudningapproximation} by knowing the initial number (defined later) $K_0 = 4$ for $x(l)=\frac{3+l}{4}$. 
%Next, we first define some key elements to conveniently describe our main results.
}
{To formally state the general proposition for convex $x(l)$, we need the following definitions.}

%Recall that the optimal signal ratio function  $x(l)=\frac{3+l}{4}$ under a uniform distribution is linear. It turns out that we can develop a more general result for convex $x(l)$ function.  Proposition \ref{theoremunformdistribution} is also a special case of this general result. For completeness, we regulate its proof to Appendix \ref{proof_of_theoremunformdistribution}. 
%\hf{Why? Can you briefly say here why Proposition \ref{theoremunformdistribution} can be implied by the following Lemma. If so, it will be much better since it justfies this long subsection, at which point the reader will already be complaining about too much technical details. If you can say the following lemma implies Proposition \ref{theoremunformdistribution}, that at least gives the reader a reason to read this subsection. Otherwhise, the reader will be disappointed here --- you told me that you will prove your Proposition \ref{theoremunformdistribution} here but here you then ask me to read your Appendix \ref{proof_of_theoremunformdistribution} for proof.  If this is the case, then why I should bother to read this subsection? }

\begin{definition}\label{definitioninitiaointersection}
 Given that the optimal signal ratio $x(l)$ is a convex function in $l \in [0, 1]$, define the following notations.
\begin{enumerate}
%[itemsep= 0.5 pt,topsep = 0.5 pt, leftmargin=20pt]
    \item Initial number $K_0$:  it satisfies that i) $x=x(l)^{K_0} \ge l$, and ii) $x=x(l)^{K_0+1}$ crosses the line $x=l$ and intersects at some point $(l, l)$ with $l< 1$.
    \item Define $S(k, l)$  with $\sigma_0 = x(l)^k$, $\sigma_i = x(l)^{k-i}$ as 
    \begin{equation}\label{equantitysk}
    S(k, l) = 1 + \frac{l+1-2\sigma_0}{\sigma_0 -l} + \frac{l+1-2\sigma_0}{\sigma_0 -l}\cdot \frac{1-\sigma_1}{\sigma_1 -l} +\cdots + \frac{l+1-2\sigma_0}{\sigma_0 -l} \cdot \prod_{i=1}^{k-1}\frac{1-\sigma_i}{\sigma_i -l}
\end{equation}
    \item Intersection point $l[k]$: given some $k > K_0$, $l[k] \neq 1$ is a solution to the equation $x(l)^{k} = l$. In another word, $x=x(l)^{k}$ crosses the line $x=l$ and intersects at the point $(l[k], l[k])$.
\end{enumerate}
\end{definition}

Recall in the construction of the signaling scheme that given a CTR vector $r = (l, 1)$, at most $K = \lfloor \log_{x(l)} l \rfloor$ signals (i.e., $\sigma_i$) are within the interval $[l, 1]$. In fact, the initial number $K_0$ specifies the minimum number of signals constructed within $[l, 1]$ for any $l < 1$. Also, the probability mass $z$ of sending $(\sigma_0, \sigma_0)$ can be computed with the defined  $S(k, l)$,
\begin{equation}\label{computationzwithssss}
     z + p_0 + p_1 + \cdots + p_{K-1} = \frac{1}{2} 
%    \Leftrightarrow \quad z\cdot \Big(1 + \frac{l+1-2s_0}{s_0 -l} + \frac{l+1-2s_0}{s_0 -l}\cdot \frac{1-s_1}{s_1 -l} +\cdots + \frac{l+1-2s_0}{s_0 -l} \cdot \prod_{i=1}^{K-1}\frac{1-s_i}{s_i -l}\Big) &=& \frac{1}{2}\\
    \quad \Longleftrightarrow \quad z\cdot S(K, l) = \frac{1}{2}
\end{equation}

%\begin{figure}[t]
% \centering 
%\scalebox{0.28}{\includegraphics{}}\caption{$K_0$ is initial number. $l[k]$ is the intersection point. $k_1 < K_0 < k_2<K <k_3$. $x(l)$ is convex.\label{figobservationaaaproof}}
%\end{figure}

With the above definition, we  present the general result. %\hf{Change this to a proposition,  Proposition 4.10, because you want to make a deal of it now, so you want it to have a different name from other lemmas.  Make sure to change all other places where you refer to it.}
\begin{proposition}\label{theroemboudningapproximation} 
Assume $x(l)$ is a convex function.  There exists
a signaling scheme whose worst-case approximation is $1-z^*$, where $z^* = 1/S(K_0,  l[K_0 + 1])$ and $S(K_0,  l[K_0 + 1])$ is computed as (\ref{equantitysk}). %The approximation ratio is independent of $l$. %$$\sigma_0 = x(l[K_0+1])^{K_0}, \sigma_1 = x(l[K_0+1])^{K_0-1}, \dots, \sigma_i = x(l[K_0+1])^{K_0-i}.$$
\end{proposition}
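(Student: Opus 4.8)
\textbf{Proof plan for Proposition \ref{theroemboudningapproximation}.}

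The plan is to analyze how the suboptimality probability $z$, characterized by the identity $z \cdot S(K,l) = \tfrac12$ from \eqref{computationzwithssss}, behaves as a function of the smaller CTR $l$, and to show that the worst case (largest $z$, hence smallest approximation factor $1-z$) is attained precisely at the ``boundary'' configuration where $l = l[K_0+1]$ and the number of signals drops to $K_0$. Concretely, for any fixed $l \in [0,1)$ the constructed scheme achieves multiplicative approximation exactly $1 - z(l)$ where $z(l) = 1/S(K(l), l)$ with $K(l) = \lfloor \log_{x(l)} l \rfloor$; so the worst-case approximation over all $l$ is $1 - \sup_l z(l) = 1 - 1/\inf_l S(K(l),l)$. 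It therefore suffices to prove $S(K(l), l) \ge S(K_0, l[K_0+1])$ for every admissible $l$.

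First I would establish the key monotonicity: for a fixed integer $k$, the quantity $S(k, l)$ defined in \eqref{equantitysk} is decreasing in $l$ on the range of $l$ where $x(l)^k \ge l$ (so that the scheme with $k$ levels is well-defined and all factors $\frac{1-\sigma_i}{\sigma_i - l}$ are positive). This should follow by differentiating or by a term-by-term comparison: each summand $\frac{l+1-2\sigma_0}{\sigma_0-l}\prod_{i=1}^{j}\frac{1-\sigma_i}{\sigma_i-l}$ — recalling $\sigma_0 = x(l)^k$, $\sigma_i = x(l)^{k-i}$ and that $x(l)$ itself is increasing in $l$ by the properties in Appendix \ref{app_propoertyofoptimalsignal} — is decreasing in $l$, because raising $l$ shrinks each gap $\sigma_i - l$ faster than it shrinks $1 - \sigma_i$. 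The convexity of $x(l)$ enters here to control the relative movement of consecutive $\sigma_i$'s and to guarantee the geometric-series structure of the $\sigma_i$ stays ``above the diagonal'' in the right way; it is exactly what pins down $K_0$ as the \emph{minimum} number of signal levels over all $l$, via Definition \ref{definitioninitiaointersection}: if $x(l)$ is convex and $x(l)^{K_0+1}$ first crosses $x = l$, then for every $l$ the level count $K(l) \ge K_0$, and as $l$ increases toward an intersection point $l[k]$ a level is lost and the count transitions $k \to k-1$.

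Next I would combine these two facts. Partition the admissible $l$-interval into sub-intervals on which $K(l)$ is constant, say $K(l) = k$ for $l$ in the window just below $l[k]$ (with $k$ ranging over $K_0, K_0+1, \dots$). On each such window, $S(k,l)$ is decreasing in $l$ by the monotonicity step, so its infimum over that window is approached as $l \uparrow l[k]$, i.e. $S(k, l[k])$. One then needs the ``gluing'' inequality at the breakpoints: $S(k, l[k]) \ge S(k-1, l[k])$, i.e. that when a signal level is dropped the quantity $S$ does not decrease — this is a direct algebraic comparison of the two finite sums at the common argument $l = l[k]$, using that at $l = l[k]$ we have $x(l)^k = l$ so $\sigma_0 = l$, which kills the problematic leading fraction. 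Iterating the gluing inequality down to $k = K_0$ gives $\inf_l S(K(l),l) = S(K_0, l[K_0+1])$, hence $\sup_l z(l) = 1/S(K_0, l[K_0+1]) =: z^*$ and the worst-case approximation is $1 - z^*$, as claimed.

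The main obstacle I anticipate is the monotonicity claim $\partial_l S(k,l) \le 0$ together with the gluing inequality $S(k, l[k]) \ge S(k-1, l[k])$: each summand of $S$ is a product of $j+1$ fractions whose numerators and denominators all depend on $l$ both directly and through $x(l)$, so the sign of the derivative is not transparent and a brute-force computation is messy. The cleanest route is probably to avoid differentiation entirely and argue combinatorially — show that increasing $l$ (which increases $x(l)$) replaces each $\sigma_i$ by a larger value while simultaneously shrinking every gap $\sigma_i - l$, and verify that the recursion $p_k = p_{k-1}\cdot\frac{1-\sigma_k}{\sigma_k - l}$ from \eqref{xkprobabilitymass} forces the partial sums defining $S$ to move monotonically; here the convexity of $x(l)$ is the structural input that makes the $\sigma_i$ behave like a well-separated geometric sequence so that no gap $\sigma_i - l$ can vanish prematurely. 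If a fully clean combinatorial argument proves elusive, a fallback is to bound $S(k,l)$ below by a truncated geometric series with an explicitly computable ratio and check the resulting weaker bound still certifies the worst case at $(K_0, l[K_0+1])$.
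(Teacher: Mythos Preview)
Your high-level strategy --- partition the range of $l$ into windows on which $K(l)$ is constant, establish monotonicity of $S(k,l)$ on each window, and glue at the breakpoints --- is exactly the paper's approach. However, you have the direction of monotonicity backwards, and this error cascades through the rest of the argument.

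The paper proves (Lemma \ref{opensetmonotonictiy}) that $z$ is \emph{decreasing} in $l$ on $(l[k+1], l[k]]$, equivalently that $S(k,l)$ is \emph{increasing} in $l$ there. Your own heuristic actually supports this direction: you note that raising $l$ shrinks the gap $\sigma_i - l$ faster than it shrinks $1 - \sigma_i$, but shrinking the \emph{denominator} faster makes each fraction $\frac{1-\sigma_i}{\sigma_i - l}$ \emph{larger}, not smaller. So the infimum of $S(k,\cdot)$ on the $k$-window is approached as $l \downarrow l[k+1]$, not as $l \uparrow l[k]$. Indeed, at $l = l[k]$ one has $\sigma_0 = x(l[k])^k = l[k]$, so $\sigma_0 - l = 0$ and the leading fraction $\frac{l+1-2\sigma_0}{\sigma_0 - l}$ diverges to $+\infty$; your remark that ``$\sigma_0 = l$ kills the problematic leading fraction'' is exactly inverted.

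Consequently, the correct gluing inequality is not $S(k, l[k]) \ge S(k-1, l[k])$ but rather $S(k, l[k+1]) \ge S(k-1, l[k])$: the infimum of $S$ on the $k$-window dominates the infimum on the $(k{-}1)$-window, and iterating down to $k=K_0$ yields $S(K_0, l[K_0+1])$ as the global minimum. This is the paper's Lemma \ref{lemmaskdecreasing}, and it is not a one-line algebraic check --- the proof reduces it term-by-term to the inequality $\frac{1-x(l[k])^i}{x(l[k])^i - l[k]} < \frac{1-x(l[k+1])^{i+1}}{x(l[k+1])^{i+1} - l[k+1]}$ for each $i \le k-1$, and ultimately to the monotonicity of $y \mapsto \frac{y \ln y}{y-1}$ on $(0,1)$. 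Finally, convexity of $x(l)$ is not only used to define $K_0$: it enters the monotonicity proof (Lemma \ref{opensetmonotonictiy}) directly through $x''(l) \ge 0$ in an explicit derivative computation.
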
 

Proposition \ref{theroemboudningapproximation} provides a \emph{worst-case} approximation bound for the constructed signaling scheme and shows how it depends on the intrinsic properties of the optimal signal ratio function. Intuitively, by Equation (\ref{equantitysk}), two situations lead to a small $z^*$: 
%\song{[Song: consider itemize, we can say 1) The initial number $K_0$ is large. In this case, the seller can ...]} \hf{Good idea! Echo this.} 
1) The initial number $K_0$ is large. In this case, it implies that the seller can send many signal pairs with the optimal signal ratio and thus  $S(K_0, l[K_0+1])$ grows exponentially; 
2) $l[K_0+1]$ is close to $\sigma_0$ in (\ref{equantitysk}).  {Later, we will show it is equivalent to that $l[K_0+1]$ is close to $x(l[K_0+1])^{K_0}$. This case will lead to a large $S(K_0, l[K_0+1])$  and thus small $z^*$.} 
%In other words, the function $x(l)^{K_0}$ (which must be convex) should be as close to the line $x=l$ as possible.
In fact, given the number of signals $k\ge K_0$ and the CTR $l$, the approximation of the constructed signaling scheme can be calculated similarly as $1-z$ with $z = 1/S(k, l)$. As we will prove later, the worst-case approximation is obtained only when $l$ is approaching $l[K_0 + 1]$ from above (see Figure \ref{figobservationaaaproofB} for an example). Hence, when $l$ varies, the actual approximation of the constructed scheme may be (much) better than $1-z^*$.

%Note that Theorem \ref{theroemboudningapproximation} only provides a worst-case approximation. In fact, given the number of signals $k\ge K_0$ and the low click-through rate $l$ fixed, the true approximation of the constructed simple signaling scheme can be computed similarly as $1-z$ with $z = 1/S(k,  l)$. Hence, by formula (\ref{equantitysk}), we have the following two observations:  i) The worst-case approximation is achieved only when $l\downarrow l[K_0 + 1]$, which rarely occurs. If $k=K_0$ is fixed and $l$ increases  (i.e., farther away from $l[K_0 + 1]$), $S(k, l)$ will increase and lead to a better approximation than $1-z^*$; ii) If $k$ increases, $S(k, l)$ will grow exponentially and lead to an approximation close to $1$. Two cases will result in large $k$: i) The initial number $K_0$ is large, and ii) If $l$ decreases and is close to $0$, the signaling scheme will send much more than $K_0$ signals, i.e., $k \gg K_0$. Though a worst-case approximation is provided, our example of uniform distribution in Section \ref{exampleexponentialuniformsection} shows that at least $0.995$ approximation is achieved for any $l$.

Proposition \ref{theroemboudningapproximation} may be of independent interest. If an estimate for the bidders' value distribution is available, the seller then can construct a signaling scheme as in Section \ref{sectionsconstructingsignalingscheme} that has an approximation guarantee indicated by Proposition \ref{theroemboudningapproximation}. Note that Proposition \ref{theroemboudningapproximation} only requires   $x(l)$ to be convex. {This condition applies to various classic distributions, e.g., (1) the uniform distribution admits a $0.995$-approximation, (2) the exponential distribution can have the exact optimal mechanism, (3) the standard Weibull distribution (truncated on $[0, 1]$) with parameter $\gamma \ge 2$ (specifically we let $\gamma=  10$) admits approximation ratio $\approx 0.75$.} In some cases, it may be difficult to write down the explicit formula of $x(l)$ and verify its convexity. We find it relatively easy to compute $l(x)$, the inverse of $x(l)$. Hence, to verify the convexity of $x(l)$, we only need to verify the concavity of $l(x)$. An example of this idea is in Appendix \ref{example_examplecomputexlandinitialnumber}.

Before presenting the proof of Proposition \ref{theroemboudningapproximation}, we show some properties for convex $x(l)$. 

\begin{observation}\label{intersectionpointobservation}
The function $x = x(l)^k$, for any integer $k >1$ and $l \in [0, 1]$, intersects with line $x = l$ at most twice: one point intersecting is $(1, 1)$, and the other one (if exists) is $(l[k], l[k])$ (called intersection point as in Definition \ref{definitioninitiaointersection}). Furthermore, when $k > K_0$ increases, $l[k]$ decreases and $\lim_{k\to \infty} l[k] = 0$.
\end{observation}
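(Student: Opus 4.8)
\textbf{Proof proposal for Observation~\ref{intersectionpointobservation}.}

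The plan is to analyze the auxiliary function $g_k(l) = x(l)^k - l$ on $[0,1]$ and extract the claimed intersection structure from convexity of $x(l)$ together with the facts established in Lemma~\ref{lemmatemperary}, namely $l < x(l) \le 1$ for $l \in [0,1)$ and $x(1) = 1$. First I would record the boundary behaviour: since $x(1) = 1$ we have $g_k(1) = 0$, so $(1,1)$ is always an intersection point. At $l = 0$ we have $g_k(0) = x(0)^k \ge 0$, and since $x(0) \le 1$, in fact $x(0)^k > 0$ unless $x(0) = 0$, which is excluded because $x(0) > 0$ by Lemma~\ref{lemmatemperary}; hence $g_k(0) > 0$. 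More generally $g_k(l) = x(l)^k - l < x(l)^k - x(l)$ for $l < 1$ (using $l < x(l)$), wait --- that bound alone is not enough, so instead I would argue directly from monotonicity/convexity of $x(l)^k$.

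The key structural step is to show that $h_k(l) := x(l)^k$ is convex on $[0,1]$ for every integer $k \ge 1$. Since $x(l) \ge 0$ and $x(l)$ is convex, and $t \mapsto t^k$ is convex and nondecreasing on $[0,\infty)$, the composition $h_k = (\cdot)^k \circ x$ is convex; this is a standard composition fact. (If $x(l)$ is not assumed differentiable, convexity of a composition of a convex nondecreasing outer function with a convex inner function still holds.) Now $g_k(l) = h_k(l) - l$ is the difference of a convex function and a linear function, hence $g_k$ is convex on $[0,1]$. A convex function on an interval has at most two zeros, so $x = x(l)^k$ meets the line $x = l$ at most twice. We already know one zero is $l = 1$; denote the other, if it exists, by $l[k]$, and it lies in $[0,1)$. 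This gives the "at most twice'' claim and identifies the two candidate points as $(1,1)$ and $(l[k], l[k])$, matching Definition~\ref{definitioninitiaointersection}.

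For the monotonicity of $l[k]$ in $k$: fix $l \in (0,1)$, so $0 < x(l) < 1$ by Lemma~\ref{lemmatemperary}, hence $k \mapsto x(l)^k$ is strictly decreasing in $k$. Thus for $k' > k$ we have $g_{k'}(l) = x(l)^{k'} - l < x(l)^k - l = g_k(l)$ pointwise on $(0,1)$, i.e.\ the graphs of $x = x(l)^{k}$ shift downward as $k$ grows. Combining this pointwise inequality with convexity of each $g_k$ and the shared zero at $l = 1$: at $l = l[k]$ we have $g_k(l[k]) = 0$ and therefore $g_{k'}(l[k]) < 0$; since $g_{k'}(0) = x(0)^{k'} > 0$ and $g_{k'}$ is continuous, there is a zero of $g_{k'}$ in $(0, l[k])$, which must be $l[k']$ (the unique zero other than $1$), so $l[k'] < l[k]$. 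Hence $l[k]$ is decreasing once it exists, i.e.\ for $k > K_0$. Finally, for the limit: for any fixed $l \in (0,1)$, $x(l) < 1$ forces $x(l)^k \to 0 < l$, so for $k$ large enough $g_k(l) < 0$, which (again with $g_k(0) > 0$) pushes $l[k]$ below $l$; since $l$ was arbitrary in $(0,1)$, $\lim_{k \to \infty} l[k] = 0$. The main obstacle I anticipate is handling the regularity of $x(l)$ cleanly --- the paper only assumes $x(l)$ convex, not smooth --- so I would be careful to phrase the composition-convexity argument and the zero-counting argument purely in terms of convexity (no derivatives), and to confirm $x(0) > 0$ and $0 < x(l) < 1$ on $(0,1)$ strictly from Lemma~\ref{lemmatemperary} so that the strict-decrease-in-$k$ step is valid.
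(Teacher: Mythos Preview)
Your proposal is correct and follows essentially the same approach as the paper: both argue that $x(l)^k$ is convex (you justify this via the composition rule, the paper simply asserts it), hence the convex function $x(l)^k - l$ has at most two zeros, one of which is $l=1$; and both use the pointwise strict decrease of $x(l)^k$ in $k$ on $(0,1)$ to conclude $l[k]$ is decreasing. Your treatment is more careful---you explicitly invoke the intermediate value theorem with $g_{k'}(0)>0$ and $g_{k'}(l[k])<0$ to locate $l[k']$, and you supply the limit argument that the paper omits---but the underlying ideas are identical.
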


The idea of Observation \ref{intersectionpointobservation} is depicted in Figure \ref{figobservationaaaproofA}. Given an initial number $K_0$ and $K = \lfloor \log_{x(l')} l' \rfloor$ for some $l'$, we can observe from the figure that: i) If $k \le K_0$, the curve $x(l)^k$ will be above line $x=l$; ii) If $K_0\le k \le K$,  then $l'\le l[K]$ and $x(l')^k$ will be above point $(l', l')$. If $l' = l[K]$, then there are exactly $K$ signals constructed within $[l', 1]$; and iii) If $k > K$, $x(l')^k$ will be below $(l', l')$. 

%The idea of Observation \ref{intersectionpointobservation} is depicted in Figure \ref{figobservationaaaproof} (A): i) If $k \le K_0$, the curve $x(l)^k$ will be above line $x=l$; ii) If $K_0\le k \le K$ where $K = \lfloor \log_{x(l')} l' \rfloor$ for some $l'$,  then $l'\le l[K]$ and $x(l')^k$ will be above point $(l', l')$. If $l' = l[K]$, then there are exactly $K$ signals within $[l', 1]$; and iii) If $k > K$, $x(l')^k$ will be below $(l', l')$. 

The following key lemma characterizes the monotonicity of convex $x(l)$.

\begin{lemma}\label{convexmontonefunction}
$x(l)$ is a monotone increasing function.
\end{lemma}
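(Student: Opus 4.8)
The plan is to argue by contradiction using the convexity of $x(l)$ together with the boundary/structural constraints we already know: $x(1) = 1$ (Lemma \ref{lemmatemperary}), $l < x(l) \le 1$ for $l \in [0,1)$ (Lemma \ref{lemmatemperary}), and the intersection behavior of the powers $x(l)^k$ with the line $x = l$ (Observation \ref{intersectionpointobservation}). Suppose $x(l)$ is not monotone increasing. Since $x$ is convex on $[0,1]$ and $x(1) = 1$, a convex function that fails to be increasing must be decreasing on some initial interval and can change slope at most once; concretely, convexity forces the existence of a point $l^\star \in [0,1)$ such that $x$ is non-increasing on $[0, l^\star]$ and non-decreasing on $[l^\star, 1]$ (if $x$ were non-increasing all the way to $1$ we would get $x(l) \ge x(1) = 1$ for all $l$, i.e. $x(l) \equiv 1$, which is increasing in the weak sense we can handle separately). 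So the interesting case is that $x$ strictly decreases somewhere near $0$.

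First I would pin down what a strict decrease near $0$ implies. If $x$ is strictly decreasing on $[0, l^\star]$, then for small $l$ the value $x(l)$ is close to its maximum $x(0)$, and in particular $x(l)$ stays bounded away from $l$ as $l \to 0$ — consistent with $l < x(l)$, but the key point is the \emph{rate}. I would then examine the intersection points $l[k]$ defined via $x(l)^k = l$. Taking logs, $l[k]$ solves $k \log x(l) = \log l$, i.e. $\log x(l)/\log l = 1/k$. As $k \to \infty$ we need $\log x(l[k]) \to 0^-$, i.e. $x(l[k]) \to 1$; but Observation \ref{intersectionpointobservation} also tells us $l[k] \to 0$. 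Combining these, $x(l) \to 1$ as $l \to 0$ along the sequence $l[k]$. This already says $x(0^+) = 1$ (using monotone behavior of $x$ near $0$ under the contradiction hypothesis, $x$ near $0$ equals its sup, so $x(0) = \lim_{l\to 0} x(l) = 1$). But then $x$ is a convex function with $x(0) = x(1) = 1$ and $x(l) \le 1$ everywhere (by Lemma \ref{lemmatemperary}); a convex function bounded above by the chord connecting its endpoints, with equal endpoint values equal to that bound, must be constant, so $x(l) \equiv 1$. That contradicts the running assumption $x(l) < 1$ used throughout the construction in Section \ref{sectionsconstructingsignalingscheme} (and in the non-degenerate regime of Proposition \ref{theroemboudningapproximation}), so no strict-decrease interval can exist.

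The main obstacle I anticipate is making the step "$l[k] \to 0$ and $x(l[k])\to 1$ force $x(0^+) = 1$" fully rigorous without circular reasoning — in particular, ruling out the possibility that $x$ oscillates or that the relevant intersection points $l[k]$ fail to exist for large $k$. Here I would lean hard on convexity: $x$ convex on a compact interval has one-sided derivatives everywhere and at most one local minimum, so "oscillation" is impossible, and Observation \ref{intersectionpointobservation} guarantees $l[k]$ exists and decreases to $0$ for all sufficiently large $k$ precisely because $x(l)^k$ eventually dips below the line $x = l$ near $0$ while still passing through $(1,1)$. A secondary subtlety is the degenerate case $x(l) \equiv 1$: there the signaling scheme reveals no information (as noted around Proposition \ref{propositionexponentialdistribution}), the lemma's conclusion holds vacuously/trivially since a constant function is weakly increasing, and Proposition \ref{theroemboudningapproximation} is only invoked when $x(l) < 1$, so I would simply dispatch this case with a one-line remark. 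If a cleaner route is available, an alternative is a direct argument: differentiate the identity $k \log x(l[k]) = \log l[k]$ implicitly to extract the sign of $x'$ at the intersection points and propagate via convexity — but I expect the contradiction-via-endpoints argument above to be the shortest path.
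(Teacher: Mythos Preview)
Your argument has a genuine gap at its central step. From $x(l[k])^{k}=l[k]$ you obtain $\log x(l[k])=\frac{\log l[k]}{k}$ and then assert this tends to $0^-$, forcing $x(l[k])\to 1$. But Observation~\ref{intersectionpointobservation} gives $l[k]\to 0$, so $\log l[k]\to -\infty$, and the ratio $\frac{\log l[k]}{k}$ is an indeterminate $-\infty/\infty$. Under your own contradiction hypothesis ($x$ convex, decreasing near $0$, with $x(0)=x_0\in(0,1)$), for small $l$ one has $x(l)\approx x_0$, hence $l[k]\approx x_0^{k}$ and $\frac{\log l[k]}{k}\to \log x_0\neq 0$. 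Thus $x(l[k])\to x_0$, not $1$, and no contradiction emerges.

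More fundamentally, the constraints you are using --- convexity, $x(1)=1$, and $l<x(l)\le 1$ on $[0,1)$ --- do \emph{not} by themselves force monotonicity. For instance, $x(l)=0.6(l-1)^{2}+0.8(l-1)+1$ is convex, satisfies $x(1)=1$, $l<x(l)\le 1$ for $l\in[0,1)$, yet is strictly decreasing on $[0,\tfrac{1}{3}]$. So a purely geometric argument from these facts cannot succeed; some property specific to the \emph{definition} of $x(l)$ as the revenue maximizer is required. The paper supplies exactly this: at the optimum the first-order condition $\partial_x g(x,l)\big|_{x=x(l)}=0$ holds, and this FOC takes the form $\int_0^{c} v f(v)f(vx)\big(l\,\phi(v)-\phi(vx)\big)\,dv=0$, which is \emph{affine in $l$} for fixed $x$. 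If two distinct values $l\neq l'$ shared the same optimizer $x$, subtracting the two FOCs would force both integrals $\int v f(v)f(vx)\phi(v)\,dv$ and $\int v f(v)f(vx)\phi(vx)\,dv$ to vanish, making $\partial_x g(x,l)=0$ for \emph{every} $l$ --- contradicting the strict positivity $\partial_x g(x,l)>0$ for $x\le l$ established in Lemma~\ref{propertyxlconstruction}~({\color{blue}P2}). That analytic input is what your approach is missing.
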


Now we are ready to show the proof of Proposition \ref{theroemboudningapproximation}. The high-level idea of the proof is by upper bounding the probability mass $z$ of sending (at most one) \emph{non-optimal} signal pair $(\sigma_0, \sigma_0)$, because when sending other signal pairs $(\sigma_i, \sigma_{i+1})$ as constructed in the signaling scheme, the signaling scheme achieves the maximum of expected revenue expressed in (\ref{expectedrevenuegivenrs}).

%\begin{proof}[Proof of Proposition \ref{theroemboudningapproximation}]
\begin{proofof}{ Proposition \ref{theroemboudningapproximation}}
Without loss of generality, we consider the case where there is only one pair of CTR vectors, $(l, 1)$ and $(1, l)$ such that $\lambda((l, 1)) = \lambda(1, l)) = \frac{1}{2}$. By the definition of symmetric environments, the analysis can be easily generalized to the case of more than two CTR vectors.
%If $x(l) = 1$ for all $l \in [0, 1]$, the seller can directly send the signal $s =(\frac{1+l}{2}, \frac{1+l}{2})$ with probability $1$, i.e., $\pi(s =(\frac{1+l}{2}, \frac{1+l}{2}) | r = (1, l)) = \pi(s =(\frac{1+l}{2}, \frac{1+l}{2}) | r = (l, 1)) = 1$ by which the auction will achieve the optimal revenue. In another word, this signaling scheme reveals no information to the bidders. It is easy to verify that this signaling scheme is calibrated. 
%The rest of the proof is devoted to the case where there exists some $l \in [0, 1]$ such that $x(l) < 1$. 
%We assume $\sigma_0\neq l$. Otherwise, the seller can send a list of signal pairs that have the optimal signal ratio (i.e., without sending $(\sigma_0, \sigma_0)$ and $z^*=0$), so that the auction achieves the maximum expected revenue.

The following lemma segments $[0, 1]$ by the intersection points $l[k]$'s, and shows how the probability mass $z$ changes within each segment. Note that starting from $k = K_0 + 1$, $x(l)^k$ crosses the line $x=l$ (see Figure \ref{figobservationaaaproofB}). Then, we define $l[K_0]=1$.

%\begin{figure}[t]
%\setlength{\abovecaptionskip}{0.cm}
%    \centering
%    \subfigure[]{
%        \includegraphics[width=2.3in]{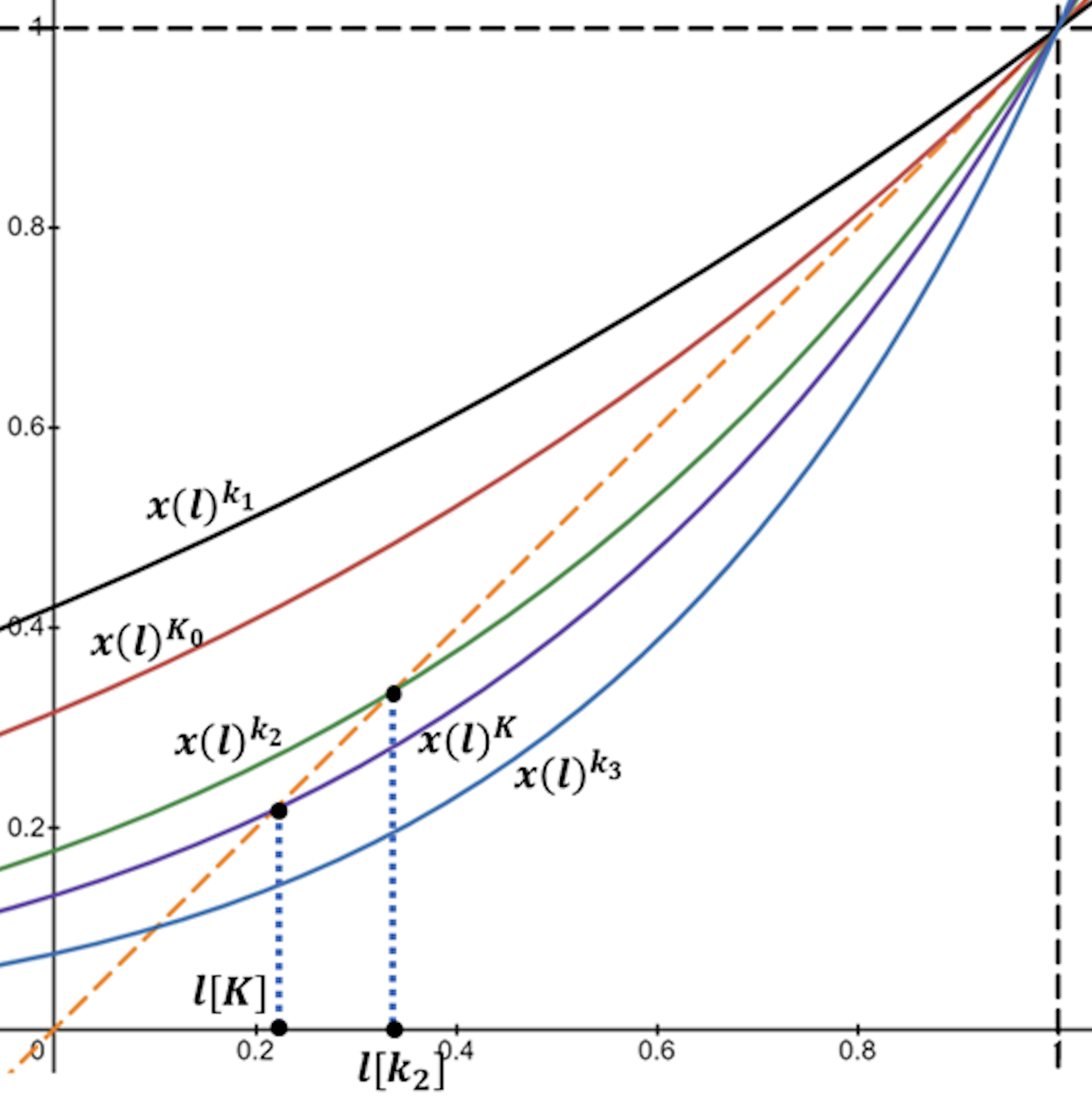}
%        \label{figobservationaaaproofA}}
%    \subfigure[]{
%	\includegraphics[width=2.25in]{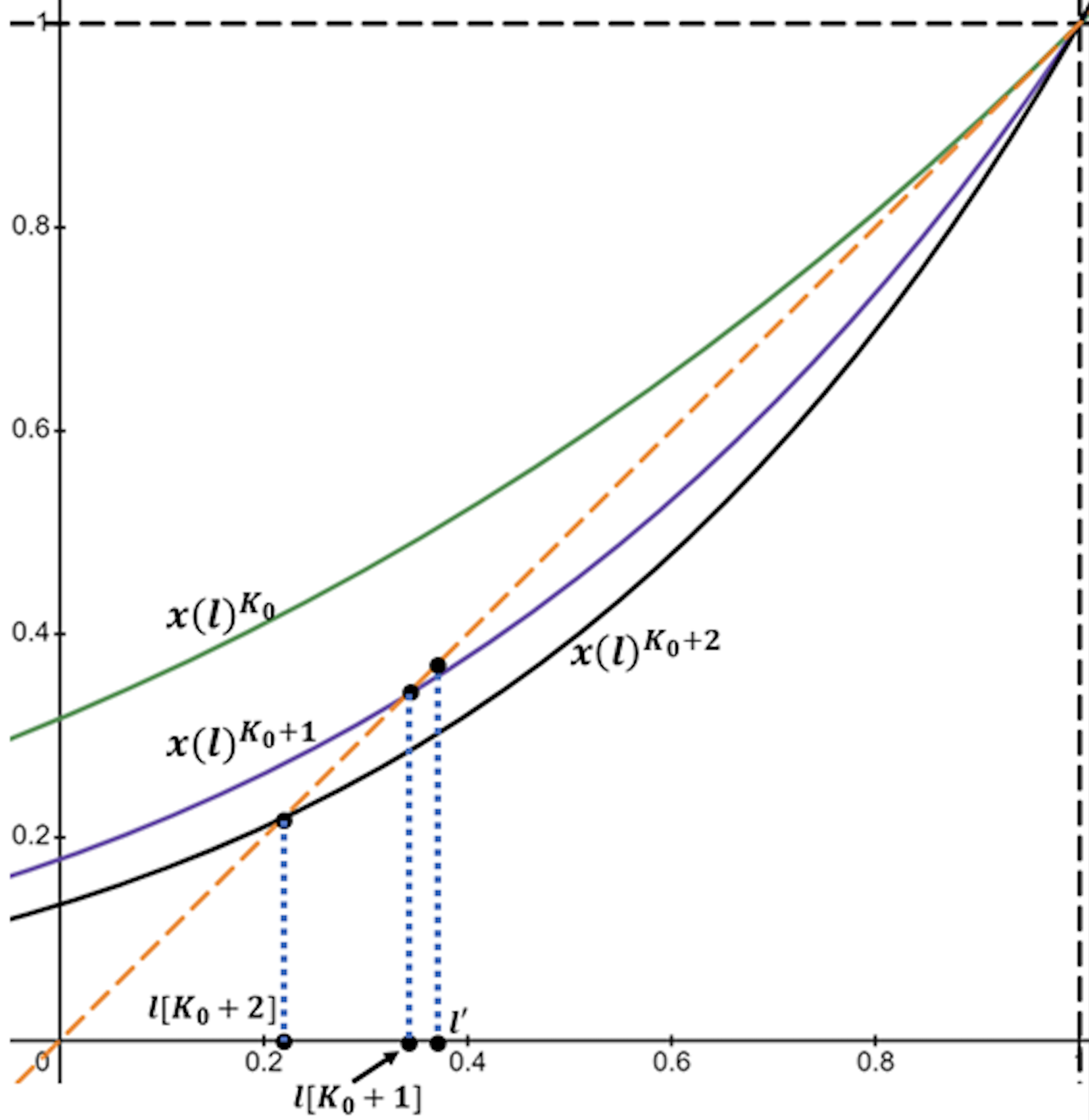}
% \label{figobservationaaaproofB}}
%\caption{$K_0$ is initial number. $l[k]$ is the intersection point. $k_1 < K_0 < k_2<K <k_3$. $x(l)$ is convex.}
%\end{figure}

\begin{figure}

 
\begin{subfigure}[t]{0.5\textwidth}
\centering
\includegraphics[width=0.795\textwidth]{A.png} \caption{}
\label{figobservationaaaproofA}
\end{subfigure}
\hfill 
\begin{subfigure}[t]{0.5\textwidth}
\centering
\includegraphics[width=0.78\textwidth]{B.png} \caption{}\label{figobservationaaaproofB}
\end{subfigure}
\caption{$K_0$ is initial number. $l[k]$ is the intersection point. $k_1 < K_0 < k_2<K <k_3$. $x(l)$ is convex.}
\end{figure}

%\begin{figure}[t]
%\setlength{\abovecaptionskip}{0.cm}
%    \centering
%    \subfigure[]{
%        \includegraphics[width=2in]{A.png}
%        \label{figobservationaaaproofA}}
%    \subfigure[]{
%	\includegraphics[width=2in]{B.png}
% \label{figobservationaaaproofB}}
%\caption{$K_0$ is initial number. $l[k]$ is the intersection point. $k_1 < K_0 < k_2<K <k_3$. $x(l)$ is convex.}
%\end{figure}

\begin{lemma}\label{opensetmonotonictiy}
Given $k \ge K_0 \ge 1$, the probability mass $z > 0$ is decreasing in $l \in (l[k+1], l[k]]$.
\end{lemma}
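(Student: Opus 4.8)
The plan is to fix $k \ge K_0$ and analyze the quantity $z = 1/S(k,l)$ as a function of $l$ on the interval $(l[k+1], l[k]]$, showing that $S(k,l)$ is increasing in $l$ there, which is equivalent to $z$ being decreasing. Recall from \eqref{equantitysk} that, with $\sigma_0 = x(l)^k$ and $\sigma_i = x(l)^{k-i}$,
\[
S(k,l) = 1 + \frac{l+1-2\sigma_0}{\sigma_0 - l}\left(1 + \frac{1-\sigma_1}{\sigma_1-l} + \cdots + \prod_{i=1}^{k-1}\frac{1-\sigma_i}{\sigma_i-l}\right).
\]
First I would verify that $S(k,l)$ is well-defined and positive on the whole interval: the condition $l \in (l[k+1], l[k]]$ ensures $x(l)^{k+1} \le l \le x(l)^k$ (using Observation~\ref{intersectionpointobservation} and the fact that $x(l)$ is increasing by Lemma~\ref{convexmontonefunction}), so $\sigma_0 = x(l)^k \ge l$, while $\sigma_i = x(l)^{k-i} > x(l)^k \ge l$ for $i \ge 1$ since $x(l) < 1$. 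This makes every factor $\frac{1-\sigma_i}{\sigma_i - l}$ positive, and $\frac{l+1-2\sigma_0}{\sigma_0-l}$ is also positive when $\sigma_0 < (l+1)/2$, which I would check holds in the relevant range (this is exactly the constraint that makes $z$ in \eqref{relationzx0} a valid probability mass).

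The core of the argument is a monotonicity computation: I would show each individual term in the sum defining $S(k,l)$ is nondecreasing (and at least one strictly increasing) in $l$. The key sub-claims are (i) the ratio $\frac{1-\sigma_i}{\sigma_i - l}$ is increasing in $l$ for each fixed $i$, since increasing $l$ both decreases the denominator directly and changes $\sigma_i = x(l)^{k-i}$ — here I need that $\sigma_i$ increases with $l$ (from Lemma~\ref{convexmontonefunction}), so the numerator $1-\sigma_i$ decreases and the denominator $\sigma_i - l$ could go either way, so I must bound the rate $\frac{d\sigma_i}{dl}$ versus $1$; and (ii) the leading factor $\frac{l+1-2\sigma_0}{\sigma_0-l}$ behaves similarly. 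The cleanest route is probably to substitute $l = l(x)$ using the inverse function (which the paper notes is often easier to work with and is concave) and reparametrize everything by the base ratio; then each $\sigma_i$ is just a fixed power of the parameter and the $l$-dependence is isolated in a single concave function, making the signs of derivatives transparent. Alternatively, an entirely elementary argument: show $S(k,l)$ satisfies a recursion in $k$ (peeling off the last product term), i.e. relate $S(k,l)$ to $S(k-1,l)$ evaluated at a shifted configuration, and induct on $k$ with base case $k = K_0$ handled directly.

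The main obstacle I expect is controlling $\frac{d\sigma_i}{dl} = (k-i)\, x(l)^{k-i-1} x'(l)$ relative to the direct $-1$ contribution from the denominator, i.e. ruling out that a steep $x'(l)$ makes $\sigma_i - l$ increase fast enough to overwhelm the numerator's decrease. Convexity of $x(l)$ plus the boundary behavior ($x(1) = 1$, $x(l) > l$ for $l < 1$) should pin down $x'(l)$ — in particular convexity gives $x'(l) \le \frac{1 - x(l)}{1 - l}$ is false in general, but the correct bound $x'(l) \ge \frac{x(l) - l}{\text{something}}$ type estimates, combined with $x(l) < 1$, should force $\frac{d}{dl}(\sigma_i - l) < 0$ or at least control the full ratio. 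I would isolate this as a standalone inequality lemma and prove it first, then the monotonicity of $S$ follows termwise. Once $S(k,l)$ is shown increasing on $(l[k+1], l[k]]$, the conclusion $z = 1/S(k,l)$ decreasing is immediate, completing the proof of Lemma~\ref{opensetmonotonictiy}.
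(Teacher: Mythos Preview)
Your overall plan is the same as the paper's: show $S(k,l)$ is increasing on $(l[k+1],l[k]]$ by proving that each factor $\frac{l+1-2\sigma_0}{\sigma_0-l}$ and $\frac{1-\sigma_i}{\sigma_i-l}$ is increasing in $l$. That decomposition is exactly what the paper does, and the recursion/inverse-function detours you mention are unnecessary.

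The gap is at the step you flagged yourself. You write that ``convexity gives $x'(l)\le\frac{1-x(l)}{1-l}$ is false in general'', but this bound is \emph{true}: since $x$ is convex and $x(1)=1$, the tangent at $l$ lies below the graph, so $1=x(1)\ge x(l)+x'(l)(1-l)$, i.e.\ $x'(l)\le\frac{1-x(l)}{1-l}$. And this is precisely the inequality that makes the termwise argument go through. Writing $p=k-i\ge 1$ and $u=x(l)$, the numerator of $\frac{d}{dl}\frac{1-x(l)^p}{x(l)^p-l}$ is
\[
1-u^p-p\,x'(l)(1-l)\,u^{p-1}\ \ge\ 1-u^p-p(1-u)u^{p-1}=1+(p-1)u^p-pu^{p-1},
\]
and the right-hand side is nonnegative on $(0,1)$ because it vanishes at $u=1$ and has derivative $p(p-1)u^{p-2}(u-1)\le 0$. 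The leading factor $\frac{l+1-2\sigma_0}{\sigma_0-l}$ is handled identically with $p=k$. The paper organizes this same computation slightly differently: it names the numerator $h(l)$, checks $h(1)=0$, and shows $h'(l)\le 0$ directly using $x''(l)\ge 0$, which is equivalent to (indeed slightly stronger than) the secant bound you discarded. So your proposal would succeed once you reinstate the convexity bound you rejected; without it, the ``standalone inequality lemma'' you allude to has no proof.
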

The proof of Lemma \ref{opensetmonotonictiy} is in Appendix \ref{app_opensetmonotonictiy}. The reason we only care about $(l[k+1], l[k]]$ instead of the closed interval $[l[k+1], l[k]]$ is that i) for $l[k+1] <l \le l[k]$, there will be  $k$ signals constructed in the interval $[l, 1]$, and ii) if $l = l[k+1]$, 
%the seller can exactly construct a signaling scheme that achieves the maximum expected revenue because the seller can construct a list of $k+1$ signals where all the signals sent have the optimal ratio $x(l[k+1])$ and the probability mass $z = 0$. 
the seller can construct a signaling scheme with a list of $k+1$ signals where all the signal pairs sent have the optimal signal ratio $x(l[k+1])$ and the probability mass $z = 0$.

One implication of Lemma \ref{opensetmonotonictiy} is that when there are at most $k$ signals constructed, the probability mass $z$ achieves its maximum when $l$ approaches $l[k+1]$ from above, i.e., $l \downarrow l[k+1]$. Therefore, at the limit $ l[k+1]$, the probability mass $z$ is computed with  $S(k, l[k+1])$ where 
$\sigma_0 = x(l[k+1])^k, \sigma_1 = x(l[k+1])^{k-1}, \cdots, \sigma_i = x(l[k+1])^{k-i}$. 
In another word,  when $l = l[k+1]$, instead of constructing a scheme with a list of $k+1$ signals where all the signal pairs sent have signal ratio $x(l[k+1])$,  the seller constructs a signaling scheme with a list of $k$ signals starting from $\sigma_0 = x(l[k+1])^k$, which sends signal pair $(\sigma_{i-1}, \sigma_i)$ of signal ratio $x(l[k+1])$ and one signal pair $(\sigma_0, \sigma_0)$. 

By the above analysis, to upper bound $z$, we only need to compare its values at these limit points $l[k]$'s. Alternatively, we compare $S(k, l[k+1])$ for different $k$.
The following lemma shows that $S(k, l[k+1])$ is increasing in $k$, whose proof is in Appendix \ref{proof_of_lemma11finalsection}.
\begin{lemma}\label{lemmaskdecreasing}
Given  $k-1 \ge K_0 \ge 1$, we have $S(k, l[k+1])\ge S(k-1, l[k])$.
\end{lemma}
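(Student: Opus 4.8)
The plan is to compare $S(k,l[k+1])$ and $S(k-1,l[k])$ by exploiting the recursive structure of the quantity $S(k,l)$ in~\eqref{equantitysk} together with the defining relations of the intersection points. First I would record the elementary recursion: if we write $\sigma_i = x(l)^{k-i}$ for the length-$k$ construction at a given $l$, then
\[
S(k,l) = 1 + \frac{l+1-2\sigma_0}{\sigma_0-l}\cdot T(k,l), \qquad T(k,l) = 1 + \frac{1-\sigma_1}{\sigma_1-l} + \cdots + \prod_{i=1}^{k-1}\frac{1-\sigma_i}{\sigma_i-l},
\]
and observe that $T(k,l)$ is itself essentially a ``shifted copy'' of an $S$-type sum: the signals $\sigma_1,\dots,\sigma_k$ of the length-$k$ construction are exactly $x(l)^{k-1}, x(l)^{k-2}, \dots, x(l)^0$, i.e. the length-$(k-1)$ construction scaled appropriately. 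So the heart of the matter is a one-step relation expressing $S(k,\cdot)$ in terms of $S(k-1,\cdot)$ evaluated at a nearby CTR. I would make this precise and then track how the extra leading factor $\frac{l+1-2\sigma_0}{\sigma_0-l}$ behaves.

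Next I would use the defining property of the intersection points from Observation~\ref{intersectionpointobservation} and Definition~\ref{definitioninitiaointersection}: at $l = l[k+1]$ we have $x(l[k+1])^{k+1} = l[k+1]$, hence $\sigma_0 = x(l[k+1])^k = l[k+1]/x(l[k+1])$; and at $l = l[k]$ we have $x(l[k])^k = l[k]$, so in the length-$(k-1)$ construction at $l[k]$ the smallest signal is $x(l[k])^{k-1} = l[k]/x(l[k])$. Plugging these identities into the closed forms collapses several terms — in particular the ratios $\frac{1-\sigma_i}{\sigma_i-l}$ telescope against the probability-mass recursion~\eqref{xkprobabilitymass}, since by construction $p_k = p_{k-1}\cdot\frac{1-\sigma_k}{\sigma_k-l}$ and $z\cdot S(k,l) = 1/2$ with $z = p_0\cdot\frac{\sigma_0-l}{l+1-2\sigma_0}$ by~\eqref{relationzx0} and~\eqref{computationzwithssss}. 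Equivalently, since $1/S(k,l) = 2z$ is (twice) the probability mass of the single non-optimal pair, the inequality $S(k,l[k+1]) \ge S(k-1,l[k])$ is the same as saying the reserved non-optimal mass $z$ at the ``hard'' endpoint does not increase when we move from the $(k-1)$-signal regime to the $k$-signal regime — which is intuitively forced by Lemma~\ref{opensetmonotonictiy} together with convexity of $x(l)$ (Lemma~\ref{convexmontonefunction}), because a larger $K_0$ / more signals can only help. I would try to turn this monotonicity-of-$z$ intuition into the algebraic inequality directly, using that $x(l)$ is increasing and convex to control the sign of the relevant difference.

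Concretely, the main computation I expect is to show that the extra factor introduced when passing from the length-$(k-1)$ sum at $l[k]$ to the length-$k$ sum at $l[k+1]$ is $\ge 1$. This amounts to an inequality of the form $\frac{l[k+1]+1-2\sigma_0}{\sigma_0-l[k+1]}\cdot\bigl(1 + \tfrac{1-\sigma_1}{\sigma_1-l[k+1]}\bigr)\cdot(\cdots) \ge \frac{l[k]+1-2\sigma_0'}{\sigma_0'-l[k]}\cdot(\cdots)$ once the common telescoping tail is cancelled, where $\sigma_0 = x(l[k+1])^k$ and $\sigma_0' = x(l[k])^{k-1}$. Using $l[k+1] < l[k]$, the monotonicity and convexity of $x$, and the two intersection identities, I would reduce this to a finite set of one-variable inequalities in $l[k+1]$ and $x(l[k+1])$ and verify them. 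The hard part will be precisely this last step: the sums $S$ are products of many factors $\frac{1-\sigma_i}{\sigma_i-l}$ that do not telescope cleanly unless one is careful about which construction ($k$ vs.\ $k-1$) each $\sigma_i$ belongs to, so bookkeeping the index shift correctly — and showing the leftover leading factor is $\ge 1$ using only convexity of $x(l)$ rather than its explicit form — is where the real work lies. If a fully general convex-$x(l)$ argument proves too delicate, a fallback is to establish the inequality termwise (each partial product for the length-$k$ sum dominates the corresponding one for the length-$(k-1)$ sum after the shift), which suffices since all terms are positive.
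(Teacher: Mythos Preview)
Your fallback plan—termwise domination—is exactly what the paper does; your primary recursive/telescoping approach is not how the argument runs. The obstacle you yourself flag is the real one: the $\sigma_i$ appearing in $S(k,l[k+1])$ are powers of $x(l[k+1])$ while those in $S(k-1,l[k])$ are powers of $x(l[k])$, so there is no clean cancellation and no single ``extra factor $\ge 1$'' to isolate after telescoping. The paper makes no attempt to unfold $S$ recursively.

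Instead, the paper proves the single termwise inequality
\[
\frac{1-x(l[k])^{i}}{x(l[k])^{i}-l[k]}\;<\;\frac{1-x(l[k+1])^{\,i+1}}{x(l[k+1])^{\,i+1}-l[k+1]}\qquad\text{for every }i\le k-1,
\]
so that each partial product in $S(k-1,l[k])$ is dominated by the matching one in $S(k,l[k+1])$ (which additionally carries one extra positive term). The step you are missing is how this inequality is actually established. After elementary rearrangement and plugging in the intersection identities $l[k]=x(l[k])^{k}$, $l[k+1]=x(l[k+1])^{k+1}$, it is equivalent to
\[
\frac{1-y_{k+1}^{\,k+1}}{1-y_k^{\,k}}\;<\;\frac{1-y_{k+1}^{\,i+1}}{1-y_k^{\,i}},\qquad y_k:=x(l[k]),\quad y_{k+1}:=x(l[k+1]),
\]
i.e.\ the assertion that $t\mapsto\dfrac{1-y_{k+1}^{\,t+1}}{1-y_k^{\,t}}$ is strictly decreasing on $t>0$. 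This is a one-variable calculus fact requiring only $0<y_{k+1}<y_k<1$; the paper finishes it by reducing to the monotonicity of $g(y)=\dfrac{y\ln y}{y-1}$ on $(0,1)$. Convexity of $x(\cdot)$ thus enters only through Lemma~\ref{convexmontonefunction} and Observation~\ref{intersectionpointobservation} (yielding $l[k+1]<l[k]$ and hence $y_{k+1}<y_k$); the full shape of $x$ is never used. Your instinct that ``only convexity rather than the explicit form'' should suffice is correct, but the mechanism is this elementary analytic inequality, not a structural recursion on $S$.
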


\medskip
Lemma \ref{lemmaskdecreasing} implies that  $S(K_0, l[K_0 + 1])$ is the minimum compared with other $S(k, l[k+1])$ for $k > K_0$. The quantity $S(K_0, l[K_0 + 1])$ is computed as (\ref{equantitysk}) with 
\[\sigma_0 = x(l[K_0+1])^{K_0}, \sigma_1 = x(l[K_0+1])^{K_0-1}, \dots, \sigma_i = x(l[K_0+1])^{K_0-i}\]
Since both CTR vectors $(1, l)$ and $(l, 1)$ will send signal $(\sigma_0, \sigma_0)$, the upper bound of the probability mass of sending signal pair $(\sigma_0, \sigma_0)$ in the constructed signaling scheme  is $z^* = 1/S(K_0, l[K_0 + 1])$. 

The above analysis generalizes to the case of more than two CTR vectors. Note that given any CTR vector $(h, l)$, in the symmetric environment, we can find one CTR vector $(l, h)$ with equal probability, i.e., $\lambda((h, l)) = \lambda((l, h))$. Hence, we can separately consider these pairs of CTR vectors when designing a signaling scheme and each of them achieves at least $(1-z^*)$ (multiplicatively) approximation. Therefore, the constructed signaling scheme can achieve at least $(1-z^*)$ (multiplicatively) approximation.
\end{proofof}

\section{Conclusions and Open Problems}
This paper studies the natural Bayesian variant of the calibrated click-through auction of \cite{bergemann2021calibrated}. We focus on the seller's information design problem to maximize the expected revenue. In general environments, we develop an FPTAS to compute an approximately optimal signaling scheme. 
In a symmetric environment, we give a simple and prior-free signaling scheme with a constant approximation guarantee for not-too-fluctuating value distributions. %Some open problems are left, which we believe are of interests to readers.

Our results raise many  interesting questions for future research. Below we discuss some of them. The first is to develop a simple signaling scheme for more than two bidders in symmetric environments. 
More generally, it is interesting to study more efficient algorithms for optimal signaling. Though our FPTAS for the general model enriches our understanding of the tractability of the problem,  such an algorithm may not be ideal from a practitioner's perspective. The second is to consider the worst-case approximation as in Proposition \ref{theroemboudningapproximation} but without the convexity assumption. In Appendix \ref{discussionconvexitymorethan2buyer}, we provide more discussions about these two open problems.

%%
%% Bibliography
%%

%% Please use bibtex, 

\bibliography{sample-base}

\appendix

\section{Additional Discussions on Related Works}\label{appendix_additional_related_works}

\noindent\textbf{Related Works on Information Design in Ad Auctions.} 
The most relevant literature to our work is information design in auctions. Bro Miltersen and Sheffet~\cite{bro2012send} and Emek et al.~\cite{emek2014signaling} study information design in a second price auction. They show that an optimal information revelation  policy can be obtained if the bidders' valuations are known, while the problem becomes \NP-hard if randomness is involved. Later,  Cheng et al.~\cite{cheng2015mixture} prove a PTAS for the Bayesian model of \cite{emek2014signaling}. Fu et al.~\cite{fu2012ad} study information design in a Myerson auction and prove that full revelation is optimal. Badanidiyuru et al.~\cite{badanidiyuru2018targeting} study signaling scheme design in a second price auction motivated by re-targeting feature in per-impression ad auctions.
Recently, Bergemann et al.~\cite{bergemann2021calibrated} provided characterizations for the optimal information disclosure policy in a symmetric calibrated click-through auction with complete information of bidder values.

\noindent\textbf{Related Works on Sale of Information.} Another line of related literature is the sale of information, which also selectively reveals information for revenue improvement but without a fixed mechanism format like ours. As far as we know, Babaioff et al. ~\cite{babaioff2012optimal} is the first to study the  information selling problem computationally, whose results are later improved in \cite{chen2020selling} by directly recommending action as a signal. Bergemann et al.~\cite{bergemann2018design} characterize the optimal information pricing in a binary environment, after which Cai and Velegkas~\cite{cai2020sell} study the problem computationally. Liu et al.~\cite{liu2021optimal} characterize the optimal scheme for selling information to a binary-decision buyer. A series of works  along this line studies   information selling in varies  setups \cite{li2021selling,zheng2021optimal,bergemann2022selling}.

\noindent\textbf{Related Works on Bayesian Persuasion.} Our work is related to the general theme of Bayesian persuasion, in which the sender sends signals to influence the receivers' decision. Information design in auctions can be viewed as one application of Bayesian persuasion. This problem was first studied by Kamenica and Gentzkow~\cite{kamenica2011bayesian}. The most relevant literature is the private Bayesian persuasion, as we need to design a private signaling scheme for multiple receivers (i.e., the bidders). To the best of our knowledge, the private Bayesian persuasion was proposed by Arieli and Babichenko~\cite{arieli2019private}, followed by a series of recent algorithmic studies   \cite{babichenko2022multi,dughmi2017algorithmicnoex,babichenko2017algorithmic}.
There are also other recent works on public persuasion \cite{xu2020tractability} and persuasion with limited resources \cite{gradwohl2021algorithms}. We also refer interested readers to a comprehensive survey by Dughmi~\cite{dughmi2017algorithmic}.

\jjr{\noindent\textbf{Related Works on Automated Bidding in Auctions.} While our work is focused on manual bidding, it is an interesting direction to pursue how the information design would influence those automated learning agents in auctions. We discuss some recent advances in this topic. Balseiro and Gur~\cite{balseiro2019learning} studied the budget-constrained agent’s bidding learning problem in repeated auctions. They prove the asymptotic optimality of proposed adaptive pacing strategies. Mehta and Perlroth~\cite{mehta2023auctions} studied the effects of the lack of bidding strategy commitments in auto-bidding worlds. While the automated learning agent can represent the users in auctions, there is a \emph{meta-game} between users. Kolumbus and Nisan~\cite{kolumbus2022and} and Kolumbus and Nisan~\cite{kolumbus2022auctions} studied the incentive problem of users in this meta-game. Feng et al.~\cite{feng2023strategic} further studied this problem in an auto-bidding world. 
}

\section{Discussions on Pay-Per-Click Policy, Commitment Power and Truthfulness}\label{append_payperclick}
\jjr{
{\textbf{Pay-per-click Policy.}} The pay-per-click (PPC) policy is a widely used auction format by advertising platforms such as Taobao~\cite{zhu2017optimized}, Microsoft Bing~\cite{wikimedia2024payperclick}, and TikTok~\cite{search2024whatisppc}. A systematic survey of more than $50$ papers on PPC regarding its various properties, practices, and use cases can be found in~\cite{kapoor2016pay}. There are generally three ad auction formats: pay-per-impression, pay-per-click, and pay-per-conversion. While it is difficult to get data about the exact fraction that each of the above auction formats is used on each platform (and these numbers also change over time), it is safe to claim that PPC consists of a significant portion, with many billions of dollars of revenue.

\noindent{\textbf{Commitment Power.}} We assume the auctioneer can commit to both the auction mechanism and the information revelation policy~\cite{emek2014signaling,badanidiyuru2018targeting,bergemann2021calibrated}. These are standard assumptions in auction design and information design. They are also justified in our domain of interest since auctioneers like Microsoft and Amazon are often regarded as authority parties and thus have the commitment power. Both the auction mechanism and information revelation policy are implemented as software and explained in the user manual, which is also one format of commitment.

\noindent{\textbf{Truthfulness.}} While truthful bidding is a dominant strategy in the single-item second-price auction in our case, it has been shown that humans do not typically bid truthfully even when advised to do so \cite{noti2014experimental}, and there may exist other, non-truthful, Nash equilibria. To be more precise, we clarify that the buyer's truthful bidding strategy is a behavior assumption but typically adopted in literature~\cite{emek2014signaling,bergemann2021calibrated,badanidiyuru2018targeting}.
}

\section{Missing Proofs in Section \ref{sectiongeneralenvironmentdiscrete}}\label{appendixproofoftheorem31claim}

\subsection{Proof of Lemma \ref{continousfunctionRrs}} \label{Proof_of_Corollary_32constant}

%%%%%%%%%%%%%%%%%%%%%%%%%%%%%%
%%%%%%%%Junjie: notice ********************
%In fact, we don't need to split the integral into [a, a/(s2/s1)] and [a/(s2/s1), b]. We can directly write the $\int_a^b (r_1 \cdot \frac{v_2 s_2}{s_1})\cdot \int_{\max\{v_2\frac{s_2}{s_1}, a\}}^b f_1(v_1) dv_1f_2(v_2) dv_2 = \int_a^b (r_1 \cdot \frac{v_2 s_2}{s_1})\cdot (1- F(v_2\frac{s_2}{s_1}))f_2(v_2) dv_2$. The analysis remains.
%%%%%%%%%%%%%%%%%%%%%%%%%%%%%%%%%%%%%%%
%%%%%%%%%%%%%%%%%%%%%%%%%%%%

%\begin{lemma}\label{nlargerthan2Lipschitz}
%Suppose constant $n>2$. 
%$R(r,s)$ is  Lipschitz continuous in $s \in [\underline{r}, 1]^n$ with some constant $C_n$, given any $r$.
%\end{lemma}
\begin{proofof}{Lemma \ref{continousfunctionRrs}}
First notice that for any $s = (s_1, s_2)$ and $s' = (s_1', s_2')$ within $[\underline{r}, 1]$, we know that 
\begin{equation}\label{equationarraycontinuity}
    \vert\frac{s_2}{s_1} - \frac{s_2' }{s_1'}\vert \le |\frac{s_2}{s_1} - \frac{s_2}{s_1'}| + | \frac{s_2}{s_1'} - \frac{s_2' }{s_1'}| \le \frac{1}{\underline{r}^2} |s_1 -s_1'| + \frac{1}{\underline{r}} |s_2 -s_2'|\le C_s \cdot\Vert s -s' \Vert
\end{equation}
where $C_s = \frac{\sqrt{2}}{\underline{r}} + \frac{\sqrt{2}}{\underline{r}^2}$ is a constant.

%Note that we reuse the value $C_s$ defined in (\ref{equationarraycontinuity}) in our proof.
Given a signal vector $s = [s_1, s_2, \dots, s_n] \in [\underline{r}, 1]^n$, we can pick any two bidders $i$ and $j$  with $v_is_i$ and $v_j s_j$ as two largest  values. Then, the expected revenue of $R(r, s)$ in (\ref{expectedrevenuegivenrsgeneral}) is generalized to
\begin{small}
\begin{align*}
\begin{split}
R(r, s) 
&= \int_v R(r, v, s) f(v)dv \\
&= \sum_{(s_i, s_j) \in s} \int_a^b (r_i \cdot \frac{v_j s_j}{s_i})\cdot \prob{v_is_i \ge v_js_j| v_j} \prod_{k \neq i, j} \prob{v_k s_k \le v_j s_j|v_j} f_j(v_j) dv_j \\
&\qquad + \int_a^b (r_j \cdot \frac{v_is_i}{s_j} )\prob{v_js_j \ge v_is_i|v_i} \prod_{k\neq i, j} \prob{v_ks_k \le v_is_i|v_i} f_i(v_i) dv_i
\end{split}
\end{align*} 
\end{small}where $\prob{v_is_i \ge v_js_j| v_j}\prod_{k \neq i, j} \prob{v_k s_k \le v_j s_j|v_j}$ is the probability that bidder $i$ wins the auction and bidder $j$ has the second largest product given $v_j$. Hence, the first integration within the summation computes the expected revenue for the case that bidder $i$ wins the auction while bidder $j$ gives the second largest product value $v_j s_j$. The second integration has a similar interpretation.

Since the sum of Lipschitz continuous functions is still Lipschitz continuous, we only need to show the Lipschitz continuity of
\[\int_a^b (r_i \cdot \frac{v_j s_j}{s_i})\cdot \prob{v_is_i \ge v_js_j| v_j} \prod_{k \neq i, j} \prob{v_k s_k \le v_j s_j|v_j} f_j(v_j) dv_j.\]
Without loss of generality, we assume the two bidders are $i = 1$ and $j = 2$. 
%W.l.o.g., assume $s_1 \le s_2$. 
We can see that 
\begin{equation*}
\begin{split}
&\int_a^b(r_2 \cdot \frac{v_1 s_1}{s_2})\cdot \prob{v_2s_2 \ge v_1s_1| v_1} \prod_{k \neq 1, 2} \prob{v_k s_k \le v_1 s_1|v_1} f_1(v_1) dv_1\\
&= \int_a^{a\frac{s_2}{s_1}} (r_2 \cdot \frac{v_1 s_1}{s_2})\cdot \int_a^b f(v_2) dv_2 \prod_{k \neq 1, 2} \prob{v_k s_k \le v_1 s_1|v_1} f_1(v_1) dv_1 \\ 
&\quad \quad  + \int_{a\frac{s_2}{s_1}}^b (r_2 \cdot \frac{v_1 s_1}{s_2})\cdot \int_{v_1\frac{s_1}{s_2}}^b f(v_2) dv_2 \prod_{k \neq 1, 2} \prob{v_k s_k \le v_1 s_1|v_1} f_1(v_1) dv_1
\end{split}
\end{equation*}
We will only show Lipschitz continuity of 
\[\int_{a\frac{s_2}{s_1}}^b (r_2 \cdot \frac{v_1 s_1}{s_2})\cdot \int_{v_1\frac{s_1}{s_2}}^b f(v_2) dv_2 \prod_{k \neq 1, 2} \prob{v_k s_k \le v_1 s_1|v_1} f_1(v_1) dv_1\]
A similar analysis applies to other parts of $R(r, s)$.
Given any $s$ and $s'$, we have 
\begin{small}
\begin{align}
&g(s, s') \\
&= \Big| \int_{a\frac{s_2}{s_1}}^b (r_2 \cdot \frac{v_1 s_1}{s_2})\cdot \int_{v_1\frac{s_1}{s_2}}^b f(v_2) dv_2 \prod_{k \neq 1, 2} \prob{v_k s_k \le v_1 s_1|v_1} f_1(v_1) dv_1 \notag\\
&\qquad - \int_{a\frac{s_2'}{s_1'}}^b (r_2 \cdot \frac{v_1 s_1'}{s_2'})\cdot \int_{v_1\frac{s_1'}{s_2'}}^b f(v_2) dv_2 \prod_{k \neq 1, 2} \prob{v_k s_k' \le v_1 s_1'|v_1} f_1(v_1) dv_1 \Big|\notag\\ 
&= \Big|\int_{a\frac{s_2}{s_1}}^{a\frac{s_2'}{s_1'}} (r_2 \cdot \frac{v_1 s_1}{s_2})\cdot \int_{v_1\frac{s_1}{s_2}}^b f(v_2) dv_2 \prod_{k \neq 1, 2} \prob{v_k s_k \le v_1 s_1|v_1} f_1(v_1) dv_1 \Big| \notag\\
& \quad+ \Big| \int_{a\frac{s_2'}{s_1'}}^b r_2\cdot v_1 f_1(v_1) \Big( \frac{s_1}{s_2} (1- F(v_1\frac{s_1}{s_2})) \prod_{k\neq 1, 2} F_k(\frac{v_1s_1}{s_k})  - \frac{s_1'}{s_2'} (1- F(v_1\frac{s_1'}{s_2'})) \prod_{k\neq 1, 2} F_k(\frac{v_1s_1'}{s_k'}) \Big)  dv_1 \Big| \notag\\
&\le \frac{abf_1^{max}}{\underline{r}} C_s \Vert s- s'\Vert \\
& \quad \quad +  \int_{a\frac{s_2'}{s_1'}}^b r_2\cdot v_1 f_1(v_1) \Big| \frac{s_1}{s_2} (1- F_2(v_1\frac{s_1}{s_2})) \prod_{k\neq 1, 2} F_k(\frac{v_1s_1}{s_k})   - \frac{s_1'}{s_2'} (1- F_2(v_1\frac{s_1'}{s_2'})) \prod_{k\neq 1, 2} F_k(\frac{v_1s_1'}{s_k'}) \Big|  dv_1 \notag\\
&\le   \frac{abf_1^{max}}{\underline{r}} C_s \Vert s- s'\Vert   \notag\\
& \quad+\int_a^b r_2 v_1 f_1(v_1)\cdot \Big| \frac{s_1}{s_2} \prod_{k \neq 1, 2} F_k(\frac{v_1 s_1}{s_k})- \frac{s_1'}{s_2'} \prod_{k \neq 1, 2} F_k(\frac{v_1 s_1'}{s_k'}) \Big| dv_1 \label{nlipschitezcond1}\\
& \quad + \int_a^b r_2 v_1 f_1(v_1)\cdot \Big| \frac{s_1}{s_2} \prod_{k \neq 1} F_k(\frac{v_1 s_1}{s_k}) - \frac{s_1'}{s_2'} \prod_{k \neq 1} F_k(\frac{v_1 s_1'}{s_k'}) \Big| dv_1 \label{nlipschitezcond12}
\end{align}
\end{small}Since the sum of Lipschitz continuous functions is Lipschitz continuous, we only need to prove that there exists some constant $C_1$ such that  (\ref{nlipschitezcond1}) is upper bounded (similar analysis can be applied to (\ref{nlipschitezcond12})),
\begin{equation}\label{constantnbuyerslipschitz}
\int_a^b r_2 v_1 f_1(v_1)\cdot|\frac{s_1}{s_2} \prod_{k \neq 1, 2} F_k(\frac{v_1 s_1}{s_k}) - \frac{s_1'}{s_2'} \prod_{k \neq 1, 2} F_k(\frac{v_1 s_1'}{s_k'})| dv_1 \le C_1\cdot \Vert s-s' \Vert 
\end{equation}
 This can be done as follows. We firstly note that
\begin{small}
\begin{equation*}
\begin{split}
& \Big|\frac{s_1}{s_2} \prod_{k \neq 1, 2} F_k(\frac{v_1 s_1}{s_k}) - \frac{s_1'}{s_2'} \prod_{k \neq 1, 2} F_k(\frac{v_1 s_1'}{s_k'})\Big| \\
&\le \Big|\frac{s_1}{s_2} \prod_{k \neq 1, 2} F_k(\frac{v_1 s_1}{s_k}) - \frac{s_1'}{s_2'} \prod_{k \neq 1, 2} F_k(\frac{v_1 s_1}{s_k}) + \frac{s_1'}{s_2'} F_3(\frac{v_1 s_1}{s_3}) \prod_{k \neq 1, 2, 3} F_k(\frac{v_1 s_1}{s_k}) - \frac{s_1'}{s_2'} F_3(\frac{v_1 s_1'}{s_3'}) \prod_{k \neq 1, 2,3} F_k(\frac{v_1 s_1}{s_k})\\
& \quad + \frac{s_1'}{s_2'} F_3(\frac{v_1 s_1'}{s_3'}) F_4(\frac{v_1 s_1}{s_4}) \prod_{k \neq 1, 2,3, 4} F_k(\frac{v_1 s_1}{s_k})  - \frac{s_1'}{s_2'} F_3(\frac{v_1 s_1'}{s_3'}) F_4(\frac{v_1 s_1'}{s_4'}) \prod_{k \neq 1, 2,3, 4} F_k(\frac{v_1 s_1}{s_k}) + \cdots - \frac{s_1'}{s_2'} \prod_{k \neq 1, 2} F_k(\frac{v_1 s_1'}{s_k'})\Big| \\
&\le \Big|\frac{s_1}{s_2} \prod_{k \neq 1, 2} F_k(\frac{v_1 s_1}{s_k}) - \frac{s_1'}{s_2'} \prod_{k \neq 1, 2} F_k(\frac{v_1 s_1}{s_k})\Big| +\Big| \frac{s_1'}{s_2'} F_3(\frac{v_1 s_1}{s_3}) \prod_{k \neq 1, 2, 3} F_k(\frac{v_1 s_1}{s_k}) - \frac{s_1'}{s_2'} F_3(\frac{v_1 s_1'}{s_3'}) \prod_{k \neq 1, 2,3} F_k(\frac{v_1 s_1}{s_k})\Big| + \cdots
\end{split}
\end{equation*}
\end{small}Without loss of generality, we consider the first two terms in the above. We can see that 
\begin{align*}
|\frac{s_1}{s_2} \prod_{k \neq 1, 2} F_k(\frac{v_1 s_1}{s_k}) - \frac{s_1'}{s_2'} \prod_{k \neq 1, 2} F_k(\frac{v_1 s_1}{s_k})| &= |\frac{s_1}{s_2}  - \frac{s_1'}{s_2'} | \prod_{k \neq 1, 2} F_k(\frac{v_1 s_1}{s_k}) \\
&\le |\frac{s_1}{s_2}  - \frac{s_1'}{s_2'} |\\
&\le C_s \cdot \Vert s -s'\Vert
\end{align*}
where the last inequality is by (\ref{equationarraycontinuity}). Also, we can see
\begin{align*}
&| \frac{s_1'}{s_2'} F_3(\frac{v_1 s_1}{s_3}) \prod_{k \neq 1, 2, 3} F_k(\frac{v_1 s_1}{s_k}) - \frac{s_1'}{s_2'} F_3(\frac{v_1 s_1'}{s_3'}) \prod_{k \neq 1, 2,3} F_k(\frac{v_1 s_1}{s_k})| \\
&= | F_3(\frac{v_1 s_1}{s_3}) - F_3(\frac{v_1 s_1'}{s_3'}) | \frac{s_1'}{s_2'} \prod_{k \neq 1, 2,3} F_k(\frac{v_1 s_1}{s_k}) \\
&\le | F_3(\frac{v_1 s_1}{s_3}) - F_3(\frac{v_1 s_1'}{s_3'}) |\cdot \frac{1}{\underline{r}} \\
&\le f_3^{\max} \frac{v_1}{\underline{r}}  |\frac{s_1}{s_3} - \frac{s_1'}{s_3'}|\\
&\le f_3^{\max} \frac{v_1}{\underline{r}}  C_s \cdot \Vert s -s'\Vert
\end{align*}
$f^{\max}_i$ is the maximum of density function $f_i$ and $f^{\max}_i < \infty$ exists by the no-point-mass assumption. By similarly applying the inequality to other terms and $\E[v_1^2] < \infty$, we finally will have some constant $C_1 = E[v_1] C_s + \frac{E[v_1^2] C_s}{\underline{r}} \sum_{i=3}^n f_i^{\max}$ such that  (\ref{constantnbuyerslipschitz}) is satisfied. By assumption, we know $\E[v_i^2] < \infty$ and thus $\E [v_i] < \infty$ exist. Similar analysis can show that (\ref{nlipschitezcond12}) is also upper bounded with some constant. Therefore, there must exist some constant $C_n$ such that $|R(r, s) - R(r, s')| \le C_n \cdot \Vert s- s' \Vert$. %for constant $n$.

The above proof assumes $v_i\in [a, b]$ with $b <\infty$. For the case of $b =\infty$, the proof is similar and thus is omitted.
\end{proofof}

\subsection{Proof of Theorem \ref{claimriunderline}}\label{appendix_proofof_theorem_section3main}

\begin{proofof}{Theorem \ref{claimriunderline}}
%\begin{proof}[Proof of Theorem \ref{claimriunderline}] 
{We start by   overviewing the procedure  of our proof}. Given an optimal signaling scheme $\pi$ in the continuous signal space,  we first obtain a new signaling scheme by proportionally scaling down $\pi$ by $1-O(\epsilon)$. This step will save probability mass $O(\eps)$. Then, we round the signals in $\pi$ to some nearby discretized signals, which  may no longer be calibrated. Finally, we use the reserved probability mass $O(\eps)$ to adjust the obtained signaling scheme so that it becomes calibrated again. We show that the above discretization only results in (multiplicatively) $O(\eps)$ loss of the optimal revenue.
%The proof is divided into three steps. 
In the following, we use $2$-bidder case for easy exposition then generalize the results to $n$-bidder case. The probability mass is defined as $x(r, s) = \lambda(r)\pi(s|r)$ so that $\sum_{r, s} x(r, s) = 1$.

\medskip

\textbf{Step 1: Na\"ive Discretization of Signal Space.} \quad We show the discretization of the signal space for bidder $1$. A similar discretization is applied to bidder $2$. Let the smallest CTR of bidder $1$ be $\underline{r}_1$ and the largest one be $\overline{r}_1$. Since the signal $s_1$ sent by bidder $1$ must be within $[\underline{r}_1, \overline{r}_1]$, we discretize the signal space  $[\underline{r}_1, \overline{r}_1]$ by $\frac{1}{\eps}$ steps, where $\eps$ is some small positive number and the length of each step is $\rho = \eps(\overline{r}_1 - \underline{r}_1)$. Hence, the discrete signal space is $D = \{\underline{r}_1, \underline{r}_1+\rho, \underline{r}_1+2\rho, \dots, \underline{r}_1+k\rho, \frac{\overline{r}_1 + \underline{r}_1}{2}, \underline{r}_1+(k+1)\rho, \dots, \overline{r}_1 \}$. It is important to note that we include the middle point $\frac{\overline{r}_1 + \underline{r}_1}{2}$ in the discretized space.  To ease exposition, we assume the middle point $\frac{\overline{r}_1 + \underline{r}_1}{2}$ coincides with $\underline{r}_1 + m\rho$ for some $m$. 

\textbf{Step 2: Rounding while Preserving Calibration Constraints.} \quad In this step, we address the key challenge of proving our algorithm: rounding signals when maintaining the calibration constraints. We divide this step into three small steps.

\textbf{Step 2a: Proportionally Scaling Down $\pi$.} \quad Given the optimal signaling scheme $\pi$ in the original continuous signal space and the optimal revenue $OPT$, we first construct a  new signaling scheme $\pi'$ by proportionally scaling down the conditional probability $\pi'(s|r) = (1-O(\eps))\pi(s|r)$. Then, at most $O(\eps)$ probability mass is reserved. Also, $O(\eps)\cdot OPT$ revenue is lost by comparing $\pi'$ with $\pi$ because the objective function is linear in $\pi$. The calibration constraint is still maintained in $\pi'$.

\textbf{Step 2b: Rounding Continuous Signals To   Nearby Discrete Signals.} \quad We round the continuous signals to a discretized signal space, by which the calibration constraint may be violated. We retain the calibration constraint by carefully adjusting the signaling scheme.  Denote the constructed signaling scheme obtained in this step as $\overline{\pi}$. For ease of notation, we reuse $\pi$ to denote $\pi'$ obtained in Step 2a. 

Without loss of generality, we consider bidder $1$ in the following discussion. 
The rounding process works as follows: If $s_1 > \frac{\overline{r}_1 + \underline{r}_1}{2}$ and $\underline{r}_1 + k \rho \le s_1 < \underline{r}_1 + (k+1)\rho$ for some $k$, then we round it to $\overline{s}_1 = \underline{r}_1 + k \rho$. If $s_1 \le \frac{\overline{r}_1 + \underline{r}_1}{2}$ and $\underline{r}_1+ k \rho < s_1 \le \underline{r}_1 + (k+1)\rho$ for some $k$, then we round it to $\overline{s}_1 = \underline{r}_1 + (k +1)\rho$. A similar rounding process is used for signal $s_2$. Let $\overline{\pi} (\overline{s}|r) = \pi(s|r)$.
By Lemma \ref{continousfunctionRrs}, we know that the rounding process results in loss $|R(r, s) - R(r, \overline{s})| \le O(\rho) \le O(\eps)$, due to the  Lipschitz continuity of $R(r, s)$. Hence, the rounding process causes at most $O(\eps)$ revenue loss. {Note that the signals within intervals $(\underline{r}_1+k\rho, \frac{\overline{r}_1 + \underline{r}_1}{2}]$ and $[\frac{\overline{r}_1 + \underline{r}_1}{2}, \underline{r}_1+(k+1)\rho)$ will be rounded to $\frac{\overline{r}_1 + \underline{r}_1}{2}$. We can first replace them with a new signal $s'$ such that the calibration constraint holds, then apply the same adjusting process as described below.}

%The signal $s_1$ of buyer $1$ is rounded to its some near discrete signal $\overline{s}_1 = \underline{r}_1 + k\rho$ for some $k$ or the middle point $\frac{\overline{r}_1 + \underline{r}_1}{2}$.  By Lemma \ref{continousfunctionRrs}, we know that $|R(r, s) - R(r, \overline{s})| \le O(\rho) \le O(\eps)$.  The set of signals used in the new signaling scheme for buyer $1$ will be $D \union R$. 

%The rounding process works as follows: If $s_1 > \frac{\overline{r}_1 + \underline{r}_1}{2}$ and $\underline{r}_1 + k \rho \le s_1 < \underline{r}_1 + (k+1)\rho$ for some $k$, then we round it to $\overline{s}_1 = \underline{r}_1 + k \rho$. If $s_1 \le \frac{\overline{r}_1 + \underline{r}_1}{2}$ and $\underline{r}_1+ k \rho < s_1 \le \underline{r}_1 + (k+1)\rho$ for some $k$, then we round it to $\overline{s}_1 = \underline{r}_1 + (k +1)\rho$. A similar rounding process is used for signal $s_2$. Let $\overline{\pi} (\overline{s}|r) = \pi(s|r)$. Since $|R(r, s) - R(r, \overline{s})| \le O(\rho) \le O(\eps)$ after rounding, the process causes at most $O(\eps)$ revenue loss. 

\textbf{Step 2c: Adjust Signaling Scheme To Restore Calibration.} \quad Recall that in Step 2a, $O(\eps)$ probability mass is reserved. After the above rounding process, the calibration constraint may no longer hold. In the following, the reserved $O(\eps)$ probability mass will be used to make the signals calibrated again.

Suppose that the original signal $s_1 > \frac{\overline{r}_1 + \underline{r}_1}{2}$. When observing its rounded signal $\overline{s}_1 = \underline{r}_1 + k\rho$ for some $k$,  bidder $1$ computes the estimation $\mu = \E[r_1 | \overline{s}_1 = \underline{r}_1 + k\rho]$. We know that $\underline{r}_1 + k \rho \le \mu < \underline{r}_1 + (k+1)\rho$, whose argument is as follows. Suppose there are only two different signals $s_1^a$ and $s_1^b$ which will be rounded to $\overline{s}_1=\underline{r}_1 + k \rho$. With the signaling scheme $\pi$ (obtained after Step 2a), we have (the equality is due to that signals $s_1^a$ and $s_1^b$ are calibrated in $\pi$)
\[\underline{r}_1 + k \rho \le s_1^a =  \frac{\sum_r r_1 \lambda(r) \sum_{s: s_1 = s_1^a} \pi(s|r)}{\sum_r \lambda(r) \sum_{s: s_1 = s_1^a} \pi(s|r)} < \underline{r}_1 + (k+1)\rho \]
\[\underline{r}_1 + k \rho \le s_1^b =  \frac{\sum_r r_1 \lambda(r) \sum_{s: s_1 = s_1^b} \pi(s|r)}{\sum_r \lambda(r) \sum_{s: s_1 = s_1^a} \pi(s|r)} < \underline{r}_1 + (k+1)\rho \]
After the above rounding process, we have (note that by now, $\overline{\pi} (\overline{s}|r) = \pi(s|r)$ still holds.) 
\begin{equation*}
\begin{split}
&\underline{r}_1 + k \rho \le \mu = \E[r_1 | \overline{s}_1 = \underline{r}_1 + k\rho] \\
&=    \frac{\sum_r r_1 \lambda(r) \sum_{s: s_1 = s_1^b} \pi(s|r) + \sum_r r_1 \lambda(r) \sum_{s: s_1 = s_1^a} \pi(s|r) }{\sum_r \lambda(r) \sum_{s: s_1 = s_1^a} \pi(s|r) + \sum_r \lambda(r) \sum_{s: s_1 = s_1^a} \pi(s|r) } < \underline{r}_1 + (k+1)\rho.
\end{split}
\end{equation*}
The above argument can be extended to the case that multiple signals $s_1'$ are rounded to $\overline{s}_1 = \underline{r}_1 + k\rho$. 
For simplicity, we denote the posterior estimation as $\mu =\frac{b_k}{a_k}$, where $b_k = \E[r_1, \overline{s}_1 = \underline{r}_1 + k\rho] = \sum_r r_1 \lambda(r) \sum_{\overline{s}: \overline{s}_1 = \underline{r}_1+ k \rho} \overline{\pi}(\overline{s}|r)$ is the expected value of click-through-rate $r_1$ when the observed  signal is $\overline{s}_1$ and $ a_k =  \prob{\overline{s}_1 = \underline{r}_1 + k\rho} = \sum_{r} \lambda(r) \sum_{\overline{s}: \overline{s}_1 = \underline{r}_1 + k\rho} \overline{\pi}(\overline{s}|r)$ is the total probability mass of observing $\overline{s}_1$.  

If the calibration constraint does not hold, i.e., $\mu \neq \underline{r}_1 + k\rho$, then we can adjust the signaling scheme by picking some CTR state $(\underline{r}_1, r_2)$ for some arbitrary $r_2$, and map it to a new signal $(\underline{r}_1 + k \rho, r_2)$ with carefully tuned probability mass $\delta_k = \lambda(r = (\underline{r}_1, r_2)) \cdot \overline{\pi}(s = (\underline{r}_1 + k \rho, r_2)| r = (\underline{r}_1, r_2))$, i.e., {we add a new column to $\overline{\pi}$ with entry $\overline{\pi}(s = (\underline{r}_1 + k \rho, r_2)| r = (\underline{r}_1, r_2)) > 0$ and other entries %$\overline{\pi}(s = (\underline{r}_1 + k \rho, r_2)|r) = 0$ for all $r \neq (\underline{r}_1, r_2)$) 
in this column being $0$
to make the calibration constraint satisfied again}. Next, we argue that this step only needs $O(\eps)$ probability mass in total to make all the rounded signals $\overline{s}_1$ sent to bidder $1$ calibrated again, i.e., $\sum_k \delta_k = O(\eps)$. Note that we can send $(\underline{r}_1, r_2)$ for multiple $r_2$, and the analysis remains the same. 

Suppose that after mapping $(\underline{r}_1, r_2)$ to a new signal with probability mass $\delta_k$, the calibration constraint is satisfied, i.e., $\mu = \underline{r}_1 + k\rho$. Then, the posterior estimation can be computed as  $\mu = \frac{b_k +\underline{r}_1 \delta_k}{a_k + \delta_k} = \underline{r}_1 + k\rho$. Simple calculation leads to $\delta_k = \frac{b_k - a_k(\underline{r}_1 + k\rho)}{k\rho}$. 
Note that from the above discussion $\underline{r}_1 + k \rho \le  \frac{b_k}{a_k} < \underline{r}_1 + (k+1)\rho$, by which we know
\[b_k \ge a_k(\underline{r}_1 + k\rho), \quad b_k < a_k(\underline{r}_1 + (k+1)\rho),\]
Hence, we have $0<\delta_k < \frac{a_k\rho}{k\rho}$, since $b_k < a_k (\underline{r}_1 + (k+1)\rho)$.  Our choice of $s_1$ satisfies $ s_1 \ge  \underline{r}_1 + k \rho \ge  \frac{\overline{r}_1 + \underline{r}_1}{2}$. This implies $k\rho >  \frac{\overline{r}_1 - \underline{r}_1}{2}$.   Consequently,  $\delta_k < \frac{a_k\rho}{k\rho} \le 2a_k\eps$, because $\rho = \eps(\overline{r}_1 - \underline{r}_1)$. The set of signals used in the new signaling scheme for bidder $1$ will be $D \union \mathcal{R}$. 

Similar analysis as above is also applied to the case $s_1 \le \frac{\overline{r}_1 + \underline{r}_1}{2}$ with $\underline{r}_1 + k \rho < s_1 \le \underline{r}_1 + (k+1)\rho$. It will be rounded to some $\underline{r}_1 + (k+1)\rho$, and we similarly have $\underline{r}_1 + k\rho <\E[r_1 | \overline{s}_1 = \underline{r}_1 + (k+1)\rho] = \frac{b_{k+1}}{a_{k+1}} \le \underline{r}_1 + (k+1)\rho$. The $a_{k+1}$ and $b_{k+1}$  are defined similarly as in the case $s_1 > \frac{\overline{r}_1 + \underline{r}_1}{2}$.  To retain the calibration constraint, we pick the CTR state $r = (\overline{r}_1, r_2)$ for some arbitrary $r_2$ and map it to signal $(\underline{r}_1 + (k+1) \rho, r_2)$ with probability mass
\[\delta_{k+1} = \frac{a_{k+1}(\underline{r}_1 + (k+1)\rho) -b_{k+1}}{\overline{r}_1 - (\underline{r}_1 + (k+1)\rho)} \le \frac{a_{k+1}\rho}{\overline{r}_1 - (\underline{r}_1 + (k+1)\rho)} \le 2a_{k+1}\eps,\]
where the first inequality is due to $b_{k+1} > a_{k+1}(\underline{r}_1 + k\rho)$ while the second one is due to $s_1 \le \underline{r}_1 + (k+1)\rho \le \frac{\overline{r}_1 + \underline{r}_1}{2}$ in this case and thus $\overline{r}_1 - (\underline{r}_1 + (k+1)\rho) \ge \frac{\overline{r}_1-\underline{r}_1}{2}$. 

 We sum up the probability mass needed to adjust all the signals $\sum_k \delta_k \le 2\sum_k a_k \eps$ where $\sum_k a_k = \sum_k \prob{\overline{s}_1 = \underline{r}_1 + k\rho}$. Note that  $\sum_k a_k \le 1$ because of the probability constraint of a signaling scheme. Hence, $\sum_k \delta_k \le O(\eps)$. Similar rounding and adjustment process can be applied to  signal $s_2$ sent to bidder $2$ so that the calibration constraint is satisfied again. Hence, the adjustment needs total probability mass $O(\eps)$.

\textbf{Step 3: Preserve Probability Distribution Constraints.} \quad To make the new signaling scheme satisfy the probability distribution constraint, directly map the CTR $r$ to some new signal $s = r$ with the remaining probability mass after the adjustment of the signaling scheme.

\medskip
Finally, as discussed  above, the revenue loss is at most $O(\eps) OPT$ which is caused by the reservation of probability mass in Step 2a and the rounding in Step 2b. Step 2c and Step 3 will only increase revenue since we add additional signals to the scheme. The proof for the existence of an approximately optimal signaling scheme in the discrete signal space is then finished.

\textbf{Generalize to $n$ Bidders.}\quad Notice that from the above analysis, rounding and adjusting signal $s_i$ for bidder $i$ does not affect the calibration of signals of other bidders $j\neq i$. Hence, we can do the rounding and adjustments for each bidder separately. With a  finer discretization, i.e., the length of step $\rho_i=\frac{\eps}{n}(\overline{r}_i - \underline{r}_i)$, we can obtain an approximate scheme by applying the above analysis. The total probability mass used for adjustment is still $O(\eps)$. %$2O(\eps) = O(\eps)$. 

\textbf{The Time Complexity.}\quad  The number of discrete signals used is $O((\frac{n}{\eps})^n+|\mathcal{R}|^n)$.  There are at most $|\mathcal{R}|^n$  CTR states. Hence, there are $O((\frac{n|\mathcal{R}|}{\eps})^n+|\mathcal{R}|^{2n})$ unknown variables in total. With the set of discrete signals provided,  (\ref{ctrsecondproblem}) becomes a linear program. The number of constraints is then  also polynomial $poly(|R|^n, (\frac{n}{\eps})^n)$. Hence, we can compute the approximately optimal signaling scheme $\overline{\pi}$ in $poly(|\mathcal{R}|^n, (\frac{n}{\eps})^n)$ time by solving a linear program of (\ref{ctrsecondproblem}) with discretized signals. Note that if $n$ is some constant, our algorithm has poly-time complexity.
%\end{proof}
\end{proofof}

%Lemma \ref{nlargerthan2Lipschitz} implies that applying $\eps$-discretization to the signal space causes at most $O(\eps)$ revenue loss. Hence, we use the technique in Theorem \ref{claimriunderline} to derive an approximate signaling scheme. We first apply rounding to signal $s$ and then make the calibration constraints satisfied again by adjusting the signaling scheme with some reserved probability mass, which will finally achieve $1-O(\eps)$ approximation to the optimal expected revenue. From the proof of Theorem \ref{claimriunderline}, we know that  retaining the calibration for all rounded signals $\overline{s}_i$ sent to bidder $i$ will need probability mass $O(\eps)$. Since we need to retain calibration for constant $n$ bidders, we will need $nO(\eps) = O(\eps)$ probability mass in total. 

\subsection{Proof of Corollary \ref{corollaryclickthroughreservedprice}}\label{Proof_of_Corollary_corollaryclickthroughreservedprice}
\begin{proofof}{Corollary \ref{corollaryclickthroughreservedprice}}
%\begin{proofof}{Corollary \ref{corollaryclickthroughreservedprice}}
Assume that there is some reserved price $p$ for the mechanism.  We assume that $p\in [a, b)$, which is usually a small value close to $a$. For the case of $p \le a$, the discussion is similar. Consider the case of $n=2$ bidders for easy exposition. The seller's expected revenue is reformulated as %(w.l.o.g., assuming $s_2 \le s_1$)
\begin{small}
\begin{align*}
   R(r, s) &= \int_a^b (r_1 \cdot \max\{\frac{v_2 s_2}{s_1}, p\})\prob{v_1s_1 \ge v_2s_2 \wedge v_1 \ge p| v_2} f_2(v_2) dv_2 \\
   &\quad +\int_a^b (r_2 \cdot \max\{\frac{v_1s_1}{s_2}, p \})\prob{v_2s_2 \ge v_1s_1 \wedge v_2 \ge p|v_1} f_1(v_1) dv_1\\
   &= \int_a^b (r_1 \cdot \max\{\frac{v_2 s_2}{s_1}, p\}) \int_{\max\{p, \frac{v_2s_2}{s_1}\}}^b f_1(v_1) dv_1  f_2(v_2) dv_2 \\
   &\quad + \int_a^{b} (r_2 \cdot \max\{\frac{v_1s_1}{s_2}, p \}) \int_{\max\{\frac{v_1s_1}{s_2}, p\}}^b f_2(v_2) dv_2   f_1(v_1) dv_1
\end{align*}   
\end{small}Note that in the proof of Theorem \ref{claimriunderline}, we only require $R(r, s)$ to be Lipschitz continuous. Therefore, if we can prove that $R(r, s)$ is Lipschitz continuous, then we can still have FPTAS to compute a signaling scheme achieving $1-O(\epsilon)$ approximation.

  In this proof, we only show Lipschitz continuity of 
  \[\int_a^{b} (r_2 \cdot \max\{\frac{v_1s_1}{s_2}, p \}) \int_{\max\{\frac{v_1s_1}{s_2}, p\}}^b f_2(v_2) dv_2   f_1(v_1) dv_1.\] 
  Similar proof applies to the Lipschitz continuity of 
  \[\int_a^b (r_1 \cdot \max\{\frac{v_2 s_2}{s_1}, p\}) \int_{\max\{p, \frac{v_2s_2}{v_1}\}}^b f_1(v_1) dv_1  f_2(v_2) dv_2.\] 
  We have 
  \begin{align*}
  %\begin{split}
      g(s, s') 
      &= |\int_a^{b} (r_2 \cdot \max\{\frac{v_1s_1}{s_2}, p \}) \int_{\max\{\frac{v_1s_1}{s_2}, p\}}^b f_2(v_2) dv_2   f_1(v_1) dv_1\\
      & \quad- \int_a^{b} (r_2 \cdot \max\{\frac{v_1s_1'}{s_2'}, p \}) \int_{\max\{\frac{v_1s_1'}{s_2'}, p\}}^b f_2(v_2) dv_2   f_1(v_1) dv_1|\\
      &\le |\int_a^{b} (r_2 \cdot \max\{\frac{v_1s_1}{s_2}, p \})  \Big(\int_{\max\{\frac{v_1s_1}{s_2}, p\}}^b f_2(v_2) dv_2 - \int_{\max\{\frac{v_1s_1'}{s_2'}, p\}}^b f_2(v_2) dv_2 \Big)  f_1(v_1) dv_1|\\
      &\quad + |\int_a^{b} r_2 \cdot \Big( \max\{\frac{v_1s_1}{s_2}, p \} - \max\{\frac{v_1s_1'}{s_2'}, p \} \Big)  \cdot \int_{\max\{\frac{v_1s_1'}{s_2'}, p\}}^b f_2(v_2) dv_2  f_1(v_1)dv_1|\\
      &\le\int_a^{b} (r_2 \cdot \max\{\frac{v_1s_1}{s_2}, p \}) \Big(f_2^{\max} |\frac{v_1s_1'}{s_2'} - \frac{v_1s_1}{s_2}| \Big)  f_1(v_1) dv_1\\
      &\qquad  + \int_a^{b} r_2 \cdot | \frac{v_1s_1}{s_2} - \frac{v_1s_1'}{s_2'} | \int_{\max\{\frac{v_1s_1'}{s_2'}, p\}}^b f_2(v_2) dv_2  f_1(v_1)
    %\end{split}
  \end{align*}
By simple calculation as in Lemma \ref{continousfunctionRrs}, we have 
\[g(s, s') \le \frac{E[v_1]bf_2^{\max}}{\underline{r}}C_s \Vert s - s'\Vert + E[v_1] C_s \Vert s- s' \Vert. \]
Hence, the constant is $C_2 =  \frac{E[v_1]bf_2^{\max}}{\underline{r}}C_s + E[v_1] C_s$.  Similarly, we can derive the constant $C_1$ for 
\[\int_a^b (r_1 \cdot \max\{\frac{v_2 s_2}{s_1}, p\}) \int_{\max\{p, \frac{v_2s_2}{s_1}\}}^b f_1(v_1) dv_1  f_2(v_2) dv_2.\] 
Therefore, $R(r, s)$ is Lipschitz continuous with constant $C = C_1 + C_2$. The above analysis easily generalizes to any constant number of bidders by generalizing $R(r, s)$ as Lemma \ref{continousfunctionRrs}.
\end{proofof}

\section{Discussion on No-Point-Mass Assumption}\label{discussiononllipscontinityassumption}

The \emph{no-point-mass  assumption} is needed because i) the revenue for any fixed values as a function of signals is discontinuous around the tie-breaking points, and ii) the rounding construction does not preserve ties between the original scheme and the discretized scheme. Without the assumption, the optimal scheme may have a non-vanishing probability mass leading to ties, and our rounding may result in a revenue loss not vanishing as $\epsilon \rightarrow0$. 
We construct an example to explain as follows.
\begin{example}
    Let CTR $r=(1.0, 0)$ and signal $s=(0.81, 0.8)$. The tie-breaking rule: mechanism favors bidder $2$ when $v_1s_1=v_2s_2$. The value distribution is a point-mass distribution in $v_1=v_2=1$, i.e., $f_1(v_1=1)=f_2(v_2=1)=\infty$ and $f_1(v_1\neq 1)=f_2(v_2\neq 1)=0$, which is not Lipschitz continuous. With signal $s$, the winner is bidder $1$ with expected revenue $R(r, s)\approx 1$. If discretizing with step=0.1, $s$ may be rounded to $s’=(0.8, 0.8)$ or $s’=(0.9, 0.9)$, where the winner becomes bidder $2$ due to tie-breaking rule and $R(r,s’)=0$. The change of winner results in a severe loss of revenue and finally will lead to a large gap between continuous and discrete solutions.
\end{example}
Our rounding construction does not preserve ties due to the need to preserve the calibration constraints. As the revenue of fixed values is continuous in signals away from the tie-breaking points, the no-point-mass assumption is in fact implied by the probability mass of ties being zero for any signal profile, %. The latter can be satisfied when the value distributions have no point mass, 
which is a mild assumption.

\section{Properties of the Optimal Signal Ratio} \label{app_propoertyofoptimalsignal}
The following lemma shows that given that the density function $f(v)$ is MHR over some bounded interval $[0, c]$, the optimal signal ratio $x^* = \frac{s_2}{s_1} \le 1$, which helps simplify the construction of the signaling scheme.
\begin{lemma}\label{lemmas2s1lessthan1}
Given a CTR vector $r = (1, l)$ for two bidders and the signal $(s_1, s_2)$, the expected revenue  (\ref{expectedrevenuegivenrs}) is maximized at the optimal signal ratio $x^*$, where $0 \le x^* = \frac{s_2^*}{s_1^*} \le 1$.
\end{lemma}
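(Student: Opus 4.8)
The plan is to analyze the revenue function $R(r,s)$ in \eqref{expectedrevenuegivenrs} purely as a function of the signal ratio $x = s_2/s_1$, write $R(x)$ for the revenue as a function of $x$ given the fixed CTR vector $r=(1,l)$, and show that $R(x) \le R(1/x)$ whenever $x > 1$; equivalently, the maximizer can always be taken in $(0,1]$. To this end I would first substitute into \eqref{expectedrevenuegivenrs} and perform the change of variables $u_1 = v_1 s_1/s_2$ (resp.\ $u_2 = v_2 s_2/s_1$) to normalize the integration so the dependence on $s_1,s_2$ appears only through $x$. Concretely, the first term becomes $h \cdot x \int_0^c v_2 \big(\int_{v_2 x}^c f(v_1)\,dv_1\big) f(v_2)\,dv_2$ restricted appropriately (with the upper limit of the inner integral truncated at $c$, i.e.\ the term vanishes for $v_2 > c/x$), and the second term becomes $l \cdot \tfrac1x \int_0^c v_1\big(\int_{v_1/x}^c f(v_2)\,dv_2\big) f(v_1)\,dv_1$. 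This makes explicit that $R$ is a function of $x$ alone, justifying the phrase "optimal signal ratio."

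Next I would compare $R(x)$ and $R(1/x)$ for $x>1$. The key structural fact is that swapping $x \leftrightarrow 1/x$ essentially interchanges the roles of the two terms, but with the asymmetry that the high-CTR bidder carries weight $h=1$ while the low-CTR bidder carries weight $l \le 1$. When $x < 1$ (so $s_1 > s_2$), the larger signal $s_1$ goes to the bidder with the higher product-score weight — which by Lemma~\ref{lemmatemperary}'s informal discussion is what one wants. I would make this precise by writing $R(x) - R(1/x)$ as a single integral expression and showing term-by-term (using $h \ge l$, $f \ge 0$, and the monotonicity afforded by the MHR/regularity assumption to control the "winning probability" factors $\int_{v_2 x}^c f(v_1)\,dv_1$) that each contribution is nonnegative. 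The MHR assumption is invoked here to guarantee that the relevant virtual-value-type or tail expressions behave monotonically, ensuring the comparison does not flip; this is likely where the bulk of the careful estimation lies.

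The main obstacle I anticipate is handling the truncation at the support endpoint $c$: when $x$ is large the inner integrals get clipped (the integrand is supported only on $v_2 \in [0, c/x]$), so the naive symmetry between $R(x)$ and $R(1/x)$ is broken and one cannot simply appeal to a clean algebraic identity. I would address this by splitting the domain of integration at the point where clipping begins and bounding the "clipped" region separately, showing its contribution only helps the inequality (since clipping reduces the revenue contribution of the term with the unfavorable signal assignment). A secondary technical point is that $R$ need not be differentiable or unimodal a priori, so rather than a first-order-condition argument I would keep the proof at the level of the global comparison $R(x) \le R(\min\{x, 1/x\})$, which immediately yields existence of a maximizer with $x^\ast \le 1$ together with $x^\ast \ge 0$ from the constraint $s_1, s_2 \in [\underline r, 1] \subseteq [0,1]$. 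Finally I would note $x^\ast \ge l$ follows from Lemma~\ref{propertyxlconstruction} (cited in the paper as part of the proof of Lemma~\ref{lemmatemperary}), so here it suffices to establish the upper bound $x^\ast \le 1$.
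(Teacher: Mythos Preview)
Your plan is essentially the paper's own proof: write the revenue as a function of $x=s_2/s_1$ alone, compare $R(x)$ with $R(1/x)$, exploit the i.i.d.\ symmetry of $v_1,v_2$ so that swapping $x\leftrightarrow 1/x$ merely interchanges the two integral terms (transferring the coefficient $1$ to the coefficient $l\le 1$), and then use MHR to establish the residual one-term inequality $\int_0^c v_2 x\bigl(1-F(v_2x)\bigr)f(v_2)\,dv_2\ge \int_0^{cx}\tfrac{v_2}{x}\bigl(1-F(v_2/x)\bigr)f(v_2)\,dv_2$.

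One refinement: your worry about the support truncation at $c$ is unnecessary. After the change of variables $t=v_2x$ on the left-hand side, both integrals live on the same domain $[0,cx]$, and the comparison becomes the pointwise inequality $\bigl(1-F(t)\bigr)f(t/x)\ge \bigl(1-F(t/x)\bigr)f(t)$, which is exactly the MHR condition (since $t/x\ge t$ for $x\le 1$). No domain-splitting or separate handling of a ``clipped'' region is needed.
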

\begin{proof}
If the density function $f(v)$ is  MHR over some bounded interval $[0, c]$, then by letting $x=\frac{s_2}{s_1} \le 1$, we can explicitly write (\ref{expectedrevenuegivenrs}) as
\begin{align*}
    g(x)& = 1\cdot \int_0^c v_2 x\int_{v_2x}^c f(v_1)dv_1 f(v_2) dv_2  + l \cdot \int_0^{cx}  \frac{v_1}{x} \int_{\frac{v_1}{x}}^c f(v_2)dv_2 f(v_1) dv_1;
\end{align*}
while if $\frac{s_2}{s_1} \ge 1$, then by letting $y = \frac{s_1}{s_2} \le 1$, we can write (\ref{expectedrevenuegivenrs}) as
\begin{align*}
    h(y)& = 1 \cdot \int_0^c  \frac{v_2}{y} \int_{\frac{v_2}{y}}^c f(v_1)dv_1 f(v_2) dv_2   + l \cdot \int_0^{\frac{c}{y}}  v_1 y \int_{v_1y}^c f(v_2)dv_2 f(v_1) dv_1.\\
    &= 1 \cdot \int_0^{cy}  \frac{v_2}{y} \int_{\frac{v_2}{y}}^c f(v_1)dv_1 f(v_2) dv_2  + l \cdot \int_0^{c}  v_1 y \int_{v_1y}^c f(v_2)dv_2 f(v_1) dv_1.
\end{align*}
To prove expected revenue  (\ref{expectedrevenuegivenrs}) is maximized at $\frac{s_2^*}{s_1^*} \le 1$, it is equivalent to showing that $\max_{x\in[0, 1]} g(x) \ge \max_{y \in[0, 1]} h(y)$. Note that $v_1$ and $v_2$ are symmetric. Therefore, if we can prove that for any $x\in[0, 1]$, 
\begin{equation}\label{eq14lhsrhsgxhy}
\int_0^c v_2 x\int_{v_2x}^c f(v_1)dv_1 f(v_2) dv_2 \ge \int_0^{cx}  \frac{v_2}{x} \int_{\frac{v_2}{x}}^c f(v_1)dv_1 f(v_2) dv_2,
\end{equation}
then due to $l \le 1$, it implies that for any $x \in [0, 1]$, we have $g(x) \ge h(x)$, which further implies that $\max_{x\in[0, 1]} g(x) \ge \max_{y \in[0, 1]} h(y)$. 

Let $t = v_2 x$. Then
$0 \le t \le cx$ and $dt = x dv_2$. The left hand side of (\ref{eq14lhsrhsgxhy}) equals 
\begin{equation} \label{symbol14to15}
\int_0^c v_2 x\int_{v_2x}^c f(v_1)dv_1 f(v_2) dv_2
=\int_{0}^{cx} t \int_{t}^{c} f(v_1) dv_1 f(\frac{t}{x}) \frac{1}{x} dt
\end{equation}
For easy exposition, we replace $t$ in (\ref{symbol14to15}) with $v_2$, and (\ref{eq14lhsrhsgxhy}) can be rewritten as
\[\int_{0}^{cx} \frac{v_2}{x} \int_{v_2}^{c} f(v_1) dv_1 f(\frac{v_2}{x}) dv_2\ge \int_0^{cx}  \frac{v_2}{x} \int_{\frac{v_2}{x}}^c f(v_1)dv_1 f(v_2) dv_2\]
To show the above inequality holds for any $x$, 
%due to $x\le 1$ and $ax \le a$, it is sufficient to prove that 
%$$\int_{a}^{bx} \frac{v_2}{x} \int_{\max\{v_2, a\}}^{b} f(v_1) dv_1 f(\frac{v_2}{x}) dv_2\ge \int_a^{bx}  \frac{v_2}{x} \int_{\max\{\frac{v_2}{x}, a\}}^b f(v_1)dv_1 f(v_2) dv_2$$
%Because $x \le 1$, $\frac{v_2}{x} \ge a$ for $v_2 \ge a$. Further, 
notice that given any $v_2 \ge 0$, we have

\[ \frac{v_2}{x} \int_{v_2}^c f(v_1) dv_1 f(\frac{v_2}{x})   \ge   \frac{v_2}{x} \int_{\frac{v_2}{x}}^c f(v_1)dv_1 f(v_2), \]
which is equivalent to 
\[\int_{v_2}^c f(v_1) dv_1 f(\frac{v_2}{x})   \ge   \int_{\frac{v_2}{x}}^c f(v_1)dv_1 f(v_2) \]
and then
\[ (1 - F(v_2)) f(\frac{v_2}{x})   \ge   (1 - F(\frac{v_2}{x})) f(v_2) \]
Hence, we need to show 
\[ \frac{(1 - F(v_2))}{f(v_2)}    \ge  \frac{(1 - F(\frac{v_2}{x}))}{f(\frac{v_2}{x})} \]
which is equivalent to showing
\begin{align}  
\frac{f(v_2)}{(1 - F(v_2))}    \le  \frac{f(\frac{v_2}{x})}{(1 - F(\frac{v_2}{x}))} \label{lasteqinlhsrhsx}
\end{align}
(\ref{lasteqinlhsrhsx}) holds because $x \le 1$ and $f(v_2)$ is an MHR function. Hence, the optimal signal ratio $x^* \le 1$ must hold. It is easy to see the signal ratio is non-negative. Thus, the lemma is proved.

%Now we prove the left-hand side that $x^* \ge \frac{a}{b}$. For any signal ratio $x < \frac{a}{b}$, we have
%$$g(x) = \int_a^b v_2 x\int_{a}^b f(v_1)dv_1 f(v_2) dv_2 = \int_a^b v_2 x f(v_2) dv_2 $$
%Observe that $\frac{dg}{dx} =  \int_a^b v_2f(v_2) dv_2 > 0$ which implies that $g(x)$ is increasing over $[0, \frac{a}{b}]$. This further implies that the optimal signal ratio $x^* \ge \frac{a}{b}$ must hold. 

\end{proof}

With Lemma \ref{lemmas2s1lessthan1}, the following properties are useful in the construction of our simple signaling scheme.
\begin{lemma}\label{propertyxlconstruction}
Given a CTR vector $r = (1, l)$,  the optimal signal ratio $x = \frac{s_2}{s_1}$ of  (\ref{expectedrevenuegivenrs}) depends on $l$, denoted as $x(l)$. We have the following properties.
\begin{properties}[0]{P}
\item $x(0) > 0$.
\item The optimal signal ratio $x(l) > l$ for $l \in [0, 1)$ and $x(1) = 1$.
\item The optimal expected revenue (\ref{expectedrevenuegivenrs}) is increasing in $l$ for $l < 1$.
\end{properties}
\end{lemma}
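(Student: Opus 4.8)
The plan is to reduce everything to a one–variable analysis. By Lemma~\ref{lemmas2s1lessthan1} the optimal ratio lies in $[0,1]$, so I parametrize by $x=s_2/s_1\in[0,1]$ and write the revenue \eqref{expectedrevenuegivenrs} for $r=(1,l)$ as $g(x;l)=A(x)+l\,B(x)$, where
\[
A(x)=\int_0^{c} v x\bigl(1-F(vx)\bigr)f(v)\,dv,\qquad B(x)=\int_0^{cx}\tfrac{v}{x}\bigl(1-F(v/x)\bigr)f(v)\,dv ,
\]
both non-negative and independent of $l$ (here I use that for $x\le 1$ the upper truncation of the inner integrals is automatic). Since $f$ is continuous, $g(\cdot\,;l)$ is $C^1$ on $[0,1]$. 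Property~P1 is then immediate: at $l=0$ the objective is $A(x)$, and since $A(0)=0$ while $A(x)>0$ for every $x>0$, the maximizer $x(0)$ cannot be $0$.

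For P2 I would compute $g'(x;l)$ and read off its sign. Differentiating under the integral gives $A'(x)=\int_0^{c} v\,\Psi(vx)\,f(v)\,dv$ with $\Psi(y):=(1-F(y))-y f(y)$, using only that $F$ has density $f$. To differentiate $B$ without assuming $f$ differentiable, I would first integrate by parts --- the boundary term vanishes because $v(1-F(v))\to0$ as $v\to c$ --- obtaining the clean form $B(x)=-\int_0^{c}F(vx)\,\Psi(v)\,dv$, hence $B'(x)=-\int_0^{c} v f(vx)\,\Psi(v)\,dv$. Combining,
\[
g'(x;l)=\int_0^{c} v\bigl[\Psi(vx)f(v)-l\,\Psi(v)f(vx)\bigr]\,dv=\int_0^{c} v\,f(v)f(vx)\bigl[\,l\,\varphi(v)-\varphi(vx)\,\bigr]\,dv ,
\]
where $\varphi(y)=y-\frac{1-F(y)}{f(y)}=-\Psi(y)/f(y)$ is Myerson's virtual value (where $f$ vanishes the corresponding integrand is non-negative for free, so only the second form matters on the support of $F$).

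Now the MHR hypothesis enters. Since $F$ is MHR it is regular, so $\varphi$ is strictly increasing; equivalently $H(y):=\frac{1-F(y)}{f(y)}\ge0$ is non-increasing. This gives the identity $l\varphi(v)-\varphi(vx)=v(l-x)+\bigl(H(vx)-lH(v)\bigr)$. For $l<1$ and $0\le x\le l$ both terms are non-negative --- $v(l-x)\ge0$, and $H(vx)\ge H(v)\ge lH(v)$ because $vx\le v$ --- with strict positivity on a set of positive measure, so $g'(\cdot\,;l)>0$ on $[0,l]$; in particular $x=l$ is not a local maximizer, hence not the global one, so $x(l)>l$, and combining with $x(l)\le1$ from Lemma~\ref{lemmas2s1lessthan1} gives $l<x(l)\le1$. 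For $l=1$ the same formula gives $g'(x;1)=\int_0^{c} v f(v)f(vx)\bigl[\varphi(v)-\varphi(vx)\bigr]\,dv\ge0$ for all $x\le1$ since $vx\le v$ and $\varphi$ is increasing; thus $g(\cdot\,;1)$ is non-decreasing on $[0,1]$ and, since its optimum is $\le1$, it is attained at $x(1)=1$.

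For P3, let $V(l)=\max_{x\in[0,1]}g(x;l)=\max_{x}\bigl(A(x)+lB(x)\bigr)$. As a pointwise maximum of affine functions of $l$ with non-negative slopes $B(x)\ge0$, $V$ is convex and non-decreasing; and for $0\le l_1<l_2<1$, taking $x_1=x(l_1)$,
\[
V(l_2)\ge g(x_1;l_2)=V(l_1)+(l_2-l_1)\,B(x_1),
\]
which is $>V(l_1)$ because $x_1=x(l_1)>0$ (by P1/P2) forces $B(x_1)>0$. Hence $V$ is strictly increasing on $[0,1)$, i.e.\ P3. The step I expect to be the main obstacle is the sign analysis in P2: getting the $f'$-free expression for $g'(x;l)$ (the integration by parts for $B$) and then recognizing that MHR/regularity of $F$ is exactly the hypothesis making $l\varphi(v)-\varphi(vx)=v(l-x)+(H(vx)-lH(v))\ge0$ throughout $0\le x\le l$; everything else, including the routine edge cases where $f$ vanishes near the ends of $[0,c]$, is bookkeeping.
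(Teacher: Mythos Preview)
Your proof is correct and follows the same overall strategy as the paper: write the revenue as $g(x;l)=A(x)+lB(x)$, show $g'(x;l)>0$ on $[0,l]$ for $l<1$ using MHR, deduce P2, and handle P1 and P3 by the obvious envelope arguments. The organization differs and is in places cleaner. The paper computes $\partial g/\partial x$ in a raw four-term form and establishes positivity by pairing the terms into two separate integral inequalities---one trivial (from $l\ge x$) and one requiring a substitution $t=v_2x$ together with the MHR condition; only later, in the proof of a different lemma, does the paper simplify the derivative to the virtual-value form $\int_0^c vf(v)f(vx)\bigl[l\varphi(v)-\varphi(vx)\bigr]\,dv$. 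You reach that form directly via the integration-by-parts identity $B(x)=-\int_0^c F(vx)\Psi(v)\,dv$ (which also avoids differentiating $f$), and then your decomposition $l\varphi(v)-\varphi(vx)=v(l-x)+\bigl(H(vx)-lH(v)\bigr)$ makes the sign analysis transparent: the two non-negative summands here correspond exactly to the paper's two paired inequalities. Your P3 is the same envelope comparison as the paper's, with the added observation that $B(x(l_1))>0$ upgrades the paper's weak monotonicity to strict.
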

\begin{proof}
We prove the properties as follows.

{\color{blue}P1}: If $l = 0$, then (\ref{expectedrevenuegivenrs}) equals $\int_0^c v_2 x\int_{v_2x}^c f(v_1)dv_1 f(v_2) dv_2 \ge 0$. The revenue (\ref{expectedrevenuegivenrs}) would be $0$ if $x = 0$, which is not optimal. Thus, $x(0) > 0$.

{\color{blue}P2}: We know that the expected revenue (\ref{expectedrevenuegivenrs}) can be viewed as a function of $l$ and $x$,
\begin{align*}
g(x, l) 
&= \int_0^c  v_2 x\int_{v_2x}^c f(v_1)dv_1 f(v_2) dv_2 +l\cdot \int_0^{cx}  \frac{v_1}{x} \int_{\frac{v_1}{x}}^c f(v_2)dv_2 f(v_1) dv_1\\
&= \int_0^c v_2 x\int_{v_2x}^c f(v_1)dv_1 f(v_2) dv_2 + l \cdot \int_0^{c}  \int_{0}^{v_2 x} \frac{v_1}{x} f(v_1)dv_1 f(v_2) dv_2
\end{align*}
%If $x < \frac{a}{b}$, then $g(x, l) = \int_a^b v_2 x\int_{ a}^b f(v_1)dv_1 f(v_2) dv_2 = x \int_a^b v_2 f(v_2) dv_2$. We can observe that $\frac{\partial g(x, l)}{\partial x} > 0$ for all $x < \frac{a}{b}$. Hence, if $l < \frac{a}{b}$, then the optimal signal ratio $x(l) > l$ must hold. It also implies that given any $l \ge \frac{a}{b}$, we only need to consider $x\ge \frac{a}{b}$.
%Now we discuss the case of $x\ge \frac{a}{b}$. 
We obtain the partial derivative w.r.t. $x$ as
\begin{align}
\frac{\partial g(x, l)}{\partial x}
&=\frac{\partial}{\partial x} \Big(  \int_0^c v_2 x\int_{v_2x}^c f(v_1)dv_1 f(v_2) dv_2  + l \cdot \int_0^{c}  \int_{0}^{v_2 x} \frac{v_1}{x} f(v_1)dv_1 f(v_2) dv_2 \Big) \notag\\
&= \int_{0}^c v_2 f(v_2) \Big[ \int_{v_2 x}^c f(v_1) dv_1 - xf(v_2x)v_2\Big] dv_2 \notag\\
&\quad + l\cdot \Big( \int_{0}^c f(v_2)\cdot \Big( -\frac{1}{x^2} \int_0^{v_2x} v_1f(v_1) dv_1 + \frac{v_2x}{x} f(v_2x)v_2 \Big)  dv_2 \Big) \notag\\
&= \int_{0}^{c} v_2 f(v_2) \int_{v_2x}^c f(v_1)dv_1 dv_2 -\int_{0}^c v_2^2 f(v_2)f(v_2x)xdv_2 \notag\\
&\quad -\frac{l}{x^2} \int_{0}^c f(v_2) \int_0^{v_2 x} v_1 f(v_1) dv_1 dv_2 +l \int_{0}^c f(v_2) v_2 f(v_2 x)v_2 dv_2 \label{dgx_dx}
\end{align}
Let $h(x, l)$ denote the partial derivative $\frac{\partial g(x, l)}{\partial x}$. We want to show that given $0 \le l <1$, the derivative $h(x, l) > 0$ holds in the range $x \in [0, l]$.
%First, we want to show that in the range $x \in [\frac{a}{b}, l]$, the derivative $h(x, l) > 0$. Note that for $x = \frac{a}{b}$, $h(x, l) >0$ must hold for any $l \in [0, 1]$ by calculating (\ref{dgx_dx}) and $\int_a^{\frac{a}{x}} v_2 f(v_2) dv_2 > 0$. Now, consider the  case of $\frac{a}{b} < l <1$ and $x \in (\frac{a}{b}, l]$, 
When $x=0$, we have $h(0, l) > 0$ by simple calculation. Thus, we assume $x> 0$ and firstly have
\[l \int_{0}^c f(v_2) v_2 f(v_2 x)v_2 dv_2 \ge \int_{0}^c v_2^2 f(v_2)f(v_2x)xdv_2 ,\]
which holds because $l \ge x$; and secondly we have
\begin{small}
\begin{align}
 &\int_{0}^{c} v_2 f(v_2) \int_{v_2x}^c f(v_1)dv_1 dv_2 > \frac{l}{x^2} \int_{0}^c f(v_2) \int_0^{v_2 x} v_1 f(v_1) dv_1 dv_2\notag\\
 \Longleftrightarrow& \int_{0}^{c} v_2 x f(v_2) \int_{v_2x}^c f(v_1)dv_1 dv_2  > l \int_0^{cx} \frac{v_1}{x} f(v_1) \int_{\frac{v_1}{x}}^{c} f(v_2) dv_2  dv_1 \label{lasteqproperty_lasttoend} \\
 \Longleftrightarrow& \int_{0}^{cx} \frac{t}{x} f(\frac{t}{x}) \int_{t}^c f(v_1)dv_1 dt > l \int_0^{cx} \frac{v_1}{x} f(v_1) \int_{\frac{v_1}{x}}^{c} f(v_2) dv_2  dv_1 \label{lasteqproperty}
\end{align}
\end{small}where (\ref{lasteqproperty_lasttoend}) to (\ref{lasteqproperty}) is by letting $t=v_2 x$. The strict inequality (\ref{lasteqproperty}) holds because 1) $l < 1$, and 2) an  analysis similar to  (\ref{eq14lhsrhsgxhy}), which holds with weak inequality. 
From the analysis, we know that for $0 \le l <1$, $h(x, l) > 0$  holds for $x \in [0, l]$ which implies that given $l<1$, $g(x, l)$ is monotone increasing in $x \in [0, l]$. Also, we have $g(x, l)$ is continuous by the fact that $f(v)$ is continuous. Both together imply that we must have the optimal signal ratio  $x(l) > l$ for $l \in [0, 1)$.

If $l =1$, due to the fact that the optimal signal ratio $x(l) \le 1$ (Lemma \ref{lemmas2s1lessthan1}) and $h(x, 1) > 0$ for any $x <1$, we  have $h( 1,  1) = 0$, and thus $x(1) = 1$.

{\color{blue}P3}: Given a click-through-rate $l$, we denote the maximum expected revenue achieved as $g(l) = \max_x g(x, l)$. Given any $l_1 < l_2$, by substituting $x(l_1)$ into $g(x, l_2)$, we have  $g(l_2) \ge g(x(l_1), l_2)$. Because the second part $  \int_0^{cx} \frac{v_1}{x} \int_{v_1/x}^c f(v_2)dv_2 f(v_1) dv_1\ge 0$  in (\ref{expectedrevenuegivenrs}) and $l_2 > l_1$, we must also have $g(x(l_1), l_2) \ge g(x(l_1), l_1) = \max_x g(x, l_1)$. Hence, the 
optimal expected revenue (\ref{expectedrevenuegivenrs}) is increasing in $l$.

\end{proof}

\section{Discussions on Implications of Lemma \ref{lemmatemperary}}\label{appen_discuss_implic_osigratio}
\jjr{
The seller gains more revenue by creating more intensive competition between bidders by revealing partial information. Recall that in our mechanism, the buyers are ranked by the products of values and CTRs. When applying full-information revelation, it may happen that a low-value (on average) bidder with a higher CTR always wins the auction over a high-value (on average) bidder with a lower CTR. This prohibits the seller from gaining revenue from the high-value bidders. Such undesired cases are because the environment lacks competition. By manipulating the auction through information design, the seller can collect revenue from the high-value bidders. 

We illustrate this observation for our problem using the symmetric environment with two bidders. Suppose the CTR is $r=(h=1, l)$. The signal ratio under full-information revelation is exactly $l$. However, Lemma \ref{lemmatemperary} shows that the optimal signal ratio $s=\frac{s_2}{s_1}>l$. This actually implies a more intensive competition between buyers. By E.q. (\ref{expectedrevenuegivenrs}), buyer $1$ needs to pay a higher price (since $\frac{v_2s_2}{s_1}$ increases in signal ratio $s=\frac{s_2}{s_1}$) but gets a lower chance ($\int^c_{\frac{v_2s_2}{s_1}} f(v_1)dv_1$ decreases) for winning. In comparison, buyer $2$ increases the chance of winning but pays a lower price. Hence, we can see that the competition is more intensive than a full-information environment. The optimal signal ratio achieves some trade-off between the two buyers’ payment changes so that the seller's total revenue is increased by partially revealing information.
}

\section{Missing Proofs in Section \ref{section4symmetricenvironments}} \label{missingproof_sectionsconstructingsignalingscheme}

\subsection{Verifying Signaling Scheme and Choosing $p_0$} \label{app_verifyingsignalingscheme}
\begin{lemma}\label{verifyingsignalingscheme}
By carefully choosing $p_0$, the constructed signaling scheme is valid and calibrated.
\end{lemma}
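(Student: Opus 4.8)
The plan is to check the two defining properties directly from CONSTRUCTION~\ref{alg_construction_sgscheme}: that $\pi$ is a genuine conditional distribution (all masses nonnegative, summing to $\lambda(r)=\tfrac12$ under each CTR vector) and that every transmitted signal satisfies the calibration identity \eqref{calibratedconstraint}; the role of $p_0$ is only to enforce the normalization.

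First I would fix notation. For the pair $(1,l),(l,1)$ with $\lambda((1,l))=\lambda((l,1))=\tfrac12$ and $x(l)<1$, the construction produces signals $\sigma_i=x(l)^{K-i}$, $i=0,\dots,K$, with $K=\lfloor\log_{x(l)}l\rfloor$, so $x(l)^{K+1}\le l\le x(l)^{K}=\sigma_0$ and $\sigma_K=1$; by Lemma~\ref{lemmatemperary} we have $l<x(l)<1$, which already forces $K\ge 1$. I then establish positivity of all the masses. For $1\le i\le K-1$ the exponent $K-i$ is strictly between $0$ and $K$, so $l\le\sigma_0<\sigma_i\le x(l)<1$; hence every factor $\tfrac{1-\sigma_i}{\sigma_i-l}$ is a finite positive number and $p_k=p_0\prod_{i=1}^k\tfrac{1-\sigma_i}{\sigma_i-l}$ is a positive multiple of $p_0$. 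For $z=p_0\tfrac{\sigma_0-l}{\,l+1-2\sigma_0\,}$ I need $l+1-2\sigma_0>0$; I would in fact prove $\sigma_0\le\sqrt l$ by cases on $K$: if $K=1$ then $\log_{x(l)}l<2$ gives $x(l)^2<l$, so $\sigma_0=x(l)<\sqrt l$; if $K\ge 2$ then $\log_{x(l)}l\ge 2$ gives $x(l)^2\ge l$, i.e. $x(l)\ge\sqrt l$, and combined with $x(l)^{K+1}\le l$ this yields $\sigma_0=x(l)^K\le l/x(l)\le l/\sqrt l=\sqrt l$. Since $\sqrt l\le\tfrac{1+l}{2}$ (AM--GM) and $\sigma_0\ge l$, we conclude $z\ge 0$.

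Next I would verify calibration signal by signal; a bidder can only receive a value in $\{\sigma_0,\dots,\sigma_K\}$, and I would enumerate, for each such value, exactly which $(r,s)$ contribute. For an interior value $\sigma_j$ with $1\le j\le K-1$, the contributing pairs are $\big((1,l),(\sigma_j,\sigma_{j-1})\big)$ carrying mass $p_{j-1}$ (with $r_1=1$) and $\big((l,1),(\sigma_j,\sigma_{j+1})\big)$ carrying mass $p_j$ (with $r_1=l$), so the posterior mean of $r_1$ is $\tfrac{p_{j-1}+l\,p_j}{p_{j-1}+p_j}$, which equals $\sigma_j$ exactly when $p_j=p_{j-1}\tfrac{1-\sigma_j}{\sigma_j-l}$ --- precisely \eqref{xkprobabilitymass}. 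For $\sigma_0$ the contributing pairs are $\big((1,l),(\sigma_0,\sigma_0)\big)$ and $\big((l,1),(\sigma_0,\sigma_0)\big)$, each with mass $z$, together with $\big((l,1),(\sigma_0,\sigma_1)\big)$ with mass $p_0$, giving posterior mean $\tfrac{z+l(p_0+z)}{p_0+2z}$, which equals $\sigma_0$ exactly when $z=p_0\tfrac{\sigma_0-l}{l+1-2\sigma_0}$ --- precisely \eqref{relationzx0}. For $\sigma_K=1$ the only contributing pair is $\big((1,l),(1,\sigma_{K-1})\big)$ with $r_1=1$, so the posterior mean is $1$ automatically. The same enumeration with the two bidders interchanged gives calibration for bidder $2$.

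Finally, since $z$ and every $p_k$ is $p_0$ times a fixed nonnegative constant while $p_0$ itself appears with coefficient $1$, the total mass $z+\sum_{k=0}^{K-1}p_k$ is a strictly increasing linear function of $p_0$ vanishing at $p_0=0$ (equivalently $z\cdot S(K,l)$ via \eqref{computationzwithssss}); hence there is a unique $p_0>0$ making it equal $\lambda(r)=\tfrac12$, and with that choice all masses are nonnegative and at most $\tfrac12\le 1$, so $\sum_s\pi(s\mid r)=1$ for both $r$. This is exactly ``valid and calibrated''. I expect the only step that is not pure bookkeeping to be the positivity of $z$ --- i.e.\ the bound $\sigma_0<\tfrac{1+l}{2}$ --- since it does not follow from $l\le\sigma_0<l/x(l)$ alone and genuinely needs the split into $K=1$ and $K\ge 2$ to push $\sigma_0$ below $\sqrt l$; the rest is careful incidence counting, whose only subtlety is that the boundary signals $\sigma_0$ and $\sigma_K$ have incidence patterns different from the generic interior one.
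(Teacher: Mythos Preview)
Your proposal is correct and follows the same overall plan as the paper: verify calibration signal by signal (interior $\sigma_j$, the boundary $\sigma_0$, and $\sigma_K=1$), establish nonnegativity of all masses, and then pick the unique $p_0>0$ that normalizes. The only place you diverge is in showing $l+1-2\sigma_0>0$: the paper (its Lemma~\ref{lemmalp1m2s}) uses the one-line chain $1-x(l)^K>x(l)^K-x(l)^{K+1}\ge x(l)^K-l$, whereas you obtain the sharper bound $\sigma_0\le\sqrt{l}$ via a case split on $K=1$ versus $K\ge 2$ and then invoke AM--GM. Both are valid; the paper's route avoids the case distinction, while yours gives a tighter quantitative control on $\sigma_0$.
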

\begin{proof}
By the construction, we can see that the signals used are valid, i.e., $\sigma_i \in [l, 1], \forall i$. Let $\Sigma = \{\sigma_0, \sigma_1, \cdots, \sigma_{K-1}, \sigma_K\}$. The next part of the proof is to verify that the calibration constraint holds for any signal $s_i$. Without loss of generality, we consider the bidder $1$. The signals sent to the  bidder $1$ are calibrated if  (\ref{calibrationconstraintdiscrete}) holds. When sending signal $\sigma_i, i \in \{1, 2, \cdots, K-1\}$ to bidder $1$, we have (\ref{calibrationconstraintdiscrete}) as
\begin{align*}
&\sum_{\sigma \in \Sigma} \lambda\big((l, 1)\big)\pi\big((\sigma_i, \sigma)|(l, 1)\big)(l - \sigma_i) +  \lambda\big((1, l)\big)\pi\big((\sigma_i, \sigma)|(1, l)\big)(1 - \sigma_i) \\
&= p_i(l-\sigma_i) + p_{i-1}(1-\sigma_i)\\
&= p_{i-1}\frac{1-\sigma_i}{\sigma_i -l} \cdot (l-\sigma_i) + p_{i-1}(1-\sigma_i) = 0
\end{align*}
%$$\sum_{(r, s): s_1 = s_i} \lambda(r)\pi(s|r)(r_1 - s_i) = p_i(l-s_i) + p_{i-1}(1-s_i) = p_{i-1}\frac{1-s_i}{s_i -l} \cdot (l-s_i) + p_{i-1}(1-s_i) = 0.$$
Similarly, we can verify the calibration constraint for signals sent to bidder $2$. By the construction, we know that when bidder $1$ or bidder $2$ observes signal $\sigma_K = 1$, the click-through-rate can only be $h = 1$, so the calibration constraint also holds. 

Next, we verify the calibration constraint for the signal $\sigma_0$. Without loss of generality, we only need to consider bidder $1$. There are only three signals containing $\sigma_0$ sent to bidder $1$, which are $\pi((\sigma_0, \sigma_0) | (l, 1))$, $\pi((\sigma_0, \sigma_1) | (l, 1))$ and $\pi((\sigma_0, \sigma_0) | (1, l))$. Hence, the calibration constraint (\ref{calibrationconstraintdiscrete}) holds since
\begin{align*}
&\sum_{\sigma \in \Sigma} \lambda\big((l, 1)\big)\pi\big((\sigma_0, \sigma)|(l, 1)\big)(l - \sigma_0) +  \lambda\big((1, l)\big)\pi\big((\sigma_0, \sigma)|(1, l)\big)(1 - \sigma_0)\\
&= p_0 (l-\sigma_0) + z(l-\sigma_0) + z(1-\sigma_0)\\
&= p_0 (l-\sigma_0)  + p_0 \frac{\sigma_0 -l}{l+1-2\sigma_0} (l+1 -2\sigma_0)\\
&= 0
\end{align*}

%$$\sum_{(r, s): s_1 = s_0} \lambda(r)\pi(s|r)(r_i - s_0) = p_0 (l-s_0) + z(l-s_0) + z(1-s_0) = 0.$$

Next, we need to show the probability mass $z \ge 0$. If $\sigma_0 = l$, obviously $z = 0$ by (\ref{relationzx0}). If $\sigma_0 \neq l$, the following lemma is sufficient to show that $z >0$, which indicates that the signaling scheme is valid.

\begin{lemma}\label{lemmalp1m2s}
If $\sigma_0 \neq l$, then $\frac{\sigma_0 - l}{l+1 - 2\sigma_0} >0$.
\end{lemma}
\begin{proof}
By the construction and $0<x(l) < 1$, we have $\sigma_0 = x(l)^K > l$ for some $K \ge K_0 \ge 1$ and $\sigma_0 \cdot x(l) = x(l)^{K+1} \le l$. Thus, we only need to prove $l+1 -2\sigma_0 > 0$, which is equivalent to proving 
\begin{equation*}
  1 - \sigma_0 > \sigma_0 - l \quad
    \Leftrightarrow  \quad 1 - x(l)^K > x(l)^K - l
\end{equation*}
Then it is sufficient to show $1 - x(l)^K > x(l)^K - x(l)^{K+1} \ge x(l)^K - l$. The first inequality $1 - x(l)^K > x(l)\cdot (x(l)^{K-1} - x(l)^{K})$ holds because $x(l) < 1$ and $x(l)^{K-1} \le 1$. The second inequality holds because $x(l)^{K+1}\le l$. 
\end{proof}

Finally, we need to choose $p_0$ so that $\sum_s p((l, h), s) = \sum_{s} p((h, l), s) = \frac{1}{2}$, i.e., the probability constraint is satisfied. Without loss of generality, by considering the state $(l, 1)$, we have 
\begin{equation}\label{12equationsolution}
    z + p_0 + p_1 + \cdots + p_{K-1} = \frac{1}{2}
\end{equation}
Thus, we choose $p_0 = \frac{1}{2}\frac{1}{(z/p_0 + 1 + \sum_{i=1}^{K-1}p_i/p_0)}$.
Note that by (\ref{xkprobabilitymass}) and (\ref{relationzx0}), we know $\frac{z}{p_0}$ and $\frac{p_i}{p_0}$ are some constants given $l$ and the optimal signal ratio $x(l)$.

\end{proof}

\subsection{Proof of Observation \ref{intersectionpointobservation}}

\begin{proofof}{Observation \ref{intersectionpointobservation}}
In Lemma \ref{propertyxlconstruction}, we know $x(1) = 1$. Thus, the optimal signal ratio function $x(l)$ and line $x=l$ must intersect at $(1, 1)$. Since $x = x(l)^k$ is convex, there are at most two points intersecting with line $x=l$ (by the definition of convexity). Hence, the other point intersecting is $(l[k] , l[k])$ where $l[k]$ is one solution to equation $x(l)^k = l$ with $l[k] \neq 1$. Because $x(l) < 1$ for $l <1$, $x(l)^k$ decreases as $k$ increases. The curve $x = x(l)^k$ is  below the curve $x = x(l)^{k-1}$. Hence, $l[k]$ decreases.
\end{proofof}

\subsection{Proof of Lemma \ref{convexmontonefunction}}
\label{derivationoflxwithvirtualval}

\begin{proofof}{Lemma \ref{convexmontonefunction}}
%\begin{proofof}{Lemma \ref{convexmontonefunction}}
%By Lemma \ref{lemmas2s1lessthan1}, we know the optimal signal ratio $x(l)\ge \frac{a}{b}$. Hence, we only need to focus on the discussion for the case of $x\ge \frac{a}{b}$. 
We first notice from (\ref{dgx_dx}) that the partial derivative $\frac{\partial g(x, l)}{\partial x}$ is a continuous function in $x$. It implies that for $l \in [0, 1)$ with $x(l) < 1$, it must be $\frac{\partial g(x, l)}{\partial x}|_{x = x(l)} = 0$. We first simplify the derivative $\frac{\partial g(x, l)}{\partial x}$ in (\ref{dgx_dx}).
\begin{align*}
&\int_{0}^{c} v_2 f(v_2) \int_{v_2x}^c f(v_1)dv_1 dv_2 -\int_{0}^c v_2^2 f(v_2)f(v_2x)xdv_2 \\
&=\int_{0}^{c} v_2 f(v_2)( (1-F(v_2x)) - v_2 x f(v_2 x)) dv_2 \\
&=\int_{0}^{c} v_2 f(v_2)f(v_2 x)( \frac{ 1-F(v_2x)}{f(v_2x)} - v_2 x ) dv_2
\end{align*}
Also, we have (we use $t = v_2 x$ and $s = \frac{t}{x}$)
\begin{align*}
& \int_{0}^c f(v_2) v_2 f(v_2 x)v_2 dv_2 -\frac{1}{x^2} \int_{0}^c f(v_2) \int_0^{v_2 x} v_1f(v_1)  dv_1 dv_2\\
&=  \int_{0}^{cx}  f(t) f(\frac{t}{x}) (\frac{t}{x})^2 \frac{1}{x} dt - \frac{1}{x}\int_{0}^{cx} \frac{v_1}{x} f(v_1) \int_{\frac{v_1}{x}}^c  f(v_2) dv_2 dv_1 \\
&= \int_{0}^{cx}  f(t) f(\frac{t}{x}) (\frac{t}{x})^2 \frac{1}{x} dt - \frac{1}{x}\int_{0}^{cx} (1-F(\frac{t}{x}))  f(t) \frac{t}{x} dt \\
&= \int_{0}^{cx} f(t) \frac{t}{x^2}  (f(\frac{t}{x})\frac{t}{x} - (1-F(\frac{t}{x}))) dt \\
&= \int_{0}^{cx} f(t) \frac{t}{x^2} f(\frac{t}{x}) (\frac{t}{x} - \frac{(1-F(\frac{t}{x}))}{f(\frac{t}{x})}) dt \\
&=  \int_{0}^{c}  f(sx) s f(s) (s - \frac{(1-F(s))}{f(s)}) ds 
\end{align*}
Combining these two equations, we have (replace $s$ and $v_2$ with $v$)
\begin{equation}\label{simplifieddgdx}
\begin{split}
\frac{\partial g(x, l)}{\partial x}=& \int_0^{c} v f(v)f(v x)( \frac{ (1-F(vx))}{f(vx)} -l\frac{ (1-F(v))}{f(v)} +  lv -  v x ) dv\\
=& \int_0^{c} v f(v)f(v x)( l\phi(v) - \phi(vx) ) dv
\end{split}
\end{equation}
where $\phi(x) = x - \frac{1-F(x)}{f(x)}$ is a regular function because $f(x)$ is an MHR function. Since $x(l)$ is continuous and convex, if $x(l)$ is not a strictly increasing function, then there must exist some $x \in [0, 1)$ such that for some $l \neq l'$,  $\frac{\partial g(x, l)}{\partial x} = 0$ and  $\frac{\partial g(x, l')}{\partial x} = 0$ hold simultaneously. Hence, by subtracting $\frac{\partial g(x, l')}{\partial x}$ from $\frac{\partial g(x, l)}{\partial x}$, we have
\begin{equation*}
\int_0^{c} v f(v)f(v x)(l - l')\phi(v) dv = 0
\end{equation*}
which implies that (since $l - l' \neq 0$)
\[\int_0^{c}v f(v)f(v x)\phi(v) dv = 0\]
This further implies that
\[\int_0^{c} v f(v)f(v x) \phi(vx) dv = 0\]
Hence, there must exist some $x \in [0, 1)$, $\frac{\partial g(x, l)}{\partial x}= 0$ for any $l$. However, this contradicts the result that given any $x \in [0, 1)$ and any $l \in [x, 1]$, we should have $\frac{\partial g(x, l)}{\partial x} > 0$ (shown in the proof of   {\color{blue}P2} of Lemma \ref{propertyxlconstruction}). Thus, $x(l)$ must be an increasing function if $x(l)$ is convex.

%One additional observation is that if there does not exist $l$ such that $x(l)<1$, it must be $x(l) = 1, \forall l \in [0, 1]$.
\end{proofof}

\subsection{Proof of Lemma \ref{opensetmonotonictiy}}\label{app_opensetmonotonictiy}

\begin{proofof}{Lemma \ref{opensetmonotonictiy}}
%\begin{proofof}{Lemma \ref{opensetmonotonictiy}}
Since $z\cdot S(k, l) = \frac{1}{2}$, we prove the lemma by showing that $\frac{l+1-2\sigma_0}{\sigma_0 -l}$ and $\frac{1-\sigma_i}{\sigma_i -l}$ are increasing in $l$. Since $\sigma_0 = x(l)^k$, we take the derivative with respect to $l$ and show that it is greater than $0$.
\begin{align*}
\frac{d}{dl}\frac{l+1-2x(l)^k}{x(l)^k -l} 
&=\frac{(l+1-2x(l)^k)'(x(l)^k -l) - (x(l)^k -l)'(l+1-2x(l)^k)}{(x(l)^k -l)^2}\\
&=\frac{(1-2kx(l)^{k-1}x'(l))(x(l)^k -l) - (kx(l)^{k-1}x'(l) -1)(l+1-2x(l)^k)}{(x(l)^k -l)^2}
\end{align*}
Therefore, we need to show $(1-2kx(l)^{k-1}x'(l))(x(l)^k -l) - (kx(l)^{k-1}x'(l) -1)(l+1-2x(l)^k) \ge 0$. We further know that
\begin{align*}
 h(l)
 &= (1-2kx(l)^{k-1}x'(l))(x(l)^k -l) - (kx(l)^{k-1}x'(l) -1)(l+1-2x(l)^k)\\
 &= x(l)^k -l - 2kx(l)^{k-1}x'(l)x(l)^k + 2kx(l)^{k-1}x'(l)l \\
 &\quad - kx(l)^{k-1}x'(l)l - kx(l)^{k-1}x'(l) + 2kx(l)^{k-1}x'(l)x(l)^k + l+1-2x(l)^k \\
 &= -x(l)^k + 1 + kx(l)^{k-1}x'(l)l - kx(l)^{k-1}x'(l) \\
 &=x(l)^{k-1}\Big(lkx'(l) - x(l)- kx'(l)\Big) + 1
\end{align*}
We can observe that $h(l=1) = 0$. If we can show that $h(l)$ is decreasing in $l$, then we have $h(l) \ge 0, \forall l \in (l[k+1], l[k]]$. This is done by 
\begin{align*}
 \frac{dh(l)}{dl} &= (k-1)x'(l)x(l)^{k-2}\Big(lkx'(l) - x(l)- kx'(l)\Big)  + x(l)^{k-1}\Big(kx'(l) + lkx''(l) - x'(l)- kx''(l)\Big)\\
 &= (k-1)x'(l)x(l)^{k-2}lkx'(l) - (k-1)x'(l)x(l)^{k-2}kx'(l)  + x(l)^{k-1}\Big(lkx''(l) - kx''(l)\Big) \\
 &= x(l)^{k-2}\Big(k(k-1)(l-1)x'(l)^2 + x(l)(l-1)kx''(l)\Big)\\
 &\le 0
\end{align*}
where the last inequality is due to $l < 1$, $x(l)$ is convex and $k \ge 1$. Hence, $h(l)$ is decreasing in $l$. Thus, $\frac{l+1-2\sigma_0}{\sigma_0 -l}$ is increasing in $l$.

We similarly prove $\frac{1-\sigma_i}{\sigma_i -l}$ is increasing in $l$. Since $\sigma_i = x(l)^{k-i}$, by letting $p=k-i \ge 1$, we take the derivative with respect to $l$ as
\begin{align*}
 \frac{d}{dl} \frac{1-x(l)^p}{x(l)^p - l}
 &=\frac{(1-x(l)^p)'(x(l)^p - l) -(x(l)^p - l)'(1-x(l)^p) }{(x(l)^p - l)^2} \\
 &=\frac{-px(l)^{p-1}x'(l)(x(l)^p - l) -(px(l)^{p-1}x'(l) - 1)(1-x(l)^p) }{(x(l)^p - l)^2} \\
% &= \frac{-px(l)^{p-1}x'(l)x(l)^p + px(l)^{p-1}x'(l)l -(px(l)^{p-1}x'(l) - px(l)^{p-1}x'(l) x(l)^p - 1+x(l)^p) }{(x(l)^p - l)^2}\\
  &= \frac{\splitfrac{-px(l)^{p-1}x'(l)x(l)^p + px(l)^{p-1}x'(l)l}{ -(px(l)^{p-1}x'(l)- px(l)^{p-1}x'(l) x(l)^p - 1+x(l)^p)} }{(x(l)^p - l)^2}\\
  &= \frac{px(l)^{p-1}x'(l)l -px(l)^{p-1}x'(l)+ 1-x(l)^p) }{(x(l)^p - l)^2}
\end{align*}
Thus, we only need to show that
$px(l)^{p-1}x'(l)l -px(l)^{p-1}x'(l)+ 1-x(l)^p)  =  x(l)^{p-1}\Big(plx'(l) - x(l) -px'(l) \Big) +1$  is greater than $0$. Notice that this is analogous to $h(l)$ defined above. Thus,  $\frac{1-\sigma_i}{\sigma_i -l}$ is increasing in $l$. Therefore, we know that $z$ is decreasing in $l \in (l[k+1], l[k]].$

Finally, by Lemma \ref{lemmalp1m2s},  we  can show $z>0$ when $l\neq \sigma_0$.
\end{proofof}

%\section{Missing Proofs in Section \ref{exampleexponentialuniformsection}}\label{appendix_missingproof_exampleexponentialuniformsection}

\subsection{Proof of Proposition \ref{propositionexponentialdistribution}}\label{proof_of_propositionexponentialdistribution}

\begin{proofof}{Proposition \ref{propositionexponentialdistribution}}
%\begin{proofof}{Proposition \ref{propositionexponentialdistribution}}
Since the CTRs of bidders are exchangeable, we can separately consider the signaling scheme for each pair of CTR vectors. It is without loss of generality to consider a pair of CTR vectors $(l, 1)$ and $(1, l)$ such that $\lambda(r = (l, 1)) = \lambda(r = (1, l)) = \frac{1}{2}$.  The expected revenue by (\ref{expectedrevenuegivenrs}) is realized as (for the click-through rate $(1, l)$)
\begin{align*}
R(r, s) &= \int_0^\infty \frac{v_2 s_2}{s_1} \cdot \int_{\frac{v_2s_2}{s_1}}^\infty \lambda \cdot e^{-\lambda v_1} dv_1  \cdot \lambda \cdot e^{-\lambda v_2} dv_2 \\
&\quad \quad \quad + l \cdot \int_0^\infty  \frac{v_1 s_1}{s_2}\cdot \int_{\frac{v_1s_1}{s_2}}^\infty \lambda \cdot e^{-\lambda v_2} dv_2  \cdot \lambda \cdot e^{-\lambda v_1} dv_1
\end{align*} 
We further simplify the formula for the first term (simplification for the second term is similar)
\begin{align*}
 \int_0^\infty \frac{v_2 s_2}{s_1}\cdot \int_{\frac{v_2s_2}{s_1}}^\infty \lambda \cdot e^{-\lambda v_1} dv_1  \cdot \lambda \cdot e^{-\lambda v_2} dv_2 
&= \int_0^\infty \frac{v_2 s_2}{s_1}\cdot  e^{-\lambda \frac{v_2s_2}{s_1}}   \cdot \lambda \cdot e^{-\lambda v_2} dv_2\\
&=  \frac{s_2}{s_1} \int_0^\infty (\lambda v_2 )\cdot  e^{-\lambda v_2 (\frac{s_2}{s_1} + 1)}  dv_2 \\
&= \frac{s_2}{s_1} \cdot \frac{1}{(\frac{s_2}{s_1} + 1)} \int_0^\infty \lambda v_2 (\frac{s_2}{s_1} + 1) \cdot  e^{-\lambda v_2 (\frac{s_2}{s_1} + 1)}  dv_2\\
&= \frac{ s_2}{s_1} \frac{1}{(\frac{s_2}{s_1} + 1)} \cdot \frac{1}{\lambda(\frac{s_2}{s_1} + 1)}  
\end{align*} 
Finally, we have
\begin{align*}
R(r, s) = \frac{1+l}{\lambda} \cdot \frac{s_1s_2}{(s_1+s_2)^2}
\end{align*} 
which is maximized at the optimal signal ratio $\frac{s_1}{s_2} = 1$. That is, $x(l) = 1$ for all $l \in [0, 1]$. By the calibration constraint, {we know that the seller should send $(\frac{1+l}{2}, \frac{1+l}{2})$ with probability $1$ for any pair of CTR vectors}, which implies revealing no information is optimal.
\end{proofof}

\subsection{Proof of Lemma \ref{proposition46_lemma_2}}\label{proof_of_theoremunformdistribution_lemma}
\begin{proofof}{Lemma \ref{proposition46_lemma_2}}
By Lemma \ref{lemmas2s1lessthan1}, the optimal signal ratio $\frac{s_2}{s_1} \le 1$. Therefore,  the expected revenue  (\ref{expectedrevenuegivenrs}) is realized as (for the click-through rate $(1, l)$)
\begin{align}
R(r, s) &=  \int_0^c \frac{v_2s_2}{s_1} \int_{\frac{v_2s_2}{s_1}}^c \frac{1}{c}dv_1 \frac{1}{c} dv_2 + l\cdot \int_0^{c\frac{s_2}{s_1}} \frac{v_1s_1}{s_2}\int_{\frac{v_1s_1}{s_2}}^c \frac{1}{c} dv_2 \frac{1}{c} dv_1 \nonumber\\
&= c\cdot \Big(\frac{1}{2}\frac{s_2}{s_1} - \frac{1}{3}\frac{s_2^2}{s_1^2} + \frac{1}{6}l\frac{s_2}{s_1} \Big)
\label{uniformdistributionexpected}
\end{align}
By solving (\ref{uniformdistributionexpected}), we know that the optimal signal ratio is $x(l) = \frac{s_2}{s_1} = \frac{3+l}{4}$, which is a linear. In (\ref{uniformdistributionexpected}), the expected revenue is only determined by the signal ratio. Hence, the approximation ratio is obtained by comparing the expected revenue obtained from $\frac{\sigma_0}{\sigma_0} = 1$ and that from the optimal signal ratio $\frac{3+l}{4}$. The maximum expected revenue achieved at $\frac{3+l}{4}$ is  $c\cdot\frac{(3+l)^2}{48}$, while $\frac{\sigma_0}{\sigma_0} = 1$ leads to expected revenue $c\cdot\Big(\frac{1}{6} + \frac{1}{6}l\Big)$.  The ratio between the revenue of sending signal $(\sigma_0, \sigma_0)$ and the maximum possible revenue is at least $\frac{8}{9}$.
\end{proofof}

\subsection{Proof of Lemma \ref{lemmaskdecreasing}}\label{proof_of_lemma11finalsection}

\begin{proofof}{Lemma \ref{lemmaskdecreasing}}
By Definition \ref{definitioninitiaointersection},  we first write down the explicit formula for $S(k, l[k+1])$ and $S(k-1, l[k])$.
\begin{small}
\begin{align*}
   S(k-1, l[k])
   &= 1+  \Big(\frac{1-x(l[k])^{k-1}}{x(l[k])^{k-1} -l[k]} - 1\Big) + \Big(\frac{1-x(l[k])^{k-1}}{x(l[k])^{k-1} -l[k]} - 1\Big)\cdot \frac{1-x(l[k])^{k-2}}{x(l[k])^{k-2} -l[k]} \\
    &\quad +\cdots + \Big(\frac{1-x(l[k])^{k-1}}{x(l[k])^{k-1} -l[k]} - 1\Big) \cdot \prod_{i=1}^{k-2}\frac{1-x(l[k])^{k-1-i}}{x(l[k])^{k-1-i} -l[k]}
\end{align*}
\end{small}
\begin{small}
\begin{align*}
   S(k, l[k+1])
   &= 1+ \Big(\frac{1-x(l[k+1])^{k}}{x(l[k+1])^{k} -l[k+1]} - 1 \Big)  + \Big(\frac{1-x(l[k+1])^{k}}{x(l[k+1])^{k} -l[k+1]} - 1 \Big)\cdot \frac{1-x(l[k+1])^{k-1}}{x(l[k+1])^{k-1} -l[k+1]} \\
    &\quad +\cdots + \Big(\frac{1-x(l[k+1])^{k}}{x(l[k+1])^{k} -l[k+1]} - 1 \Big) \cdot  \prod_{i=1}^{k-2}\frac{1-x(l[k+1])^{k-i}}{x(l[k+1])^{k-i} -l[k+1]} \\
    &\quad \quad\quad \quad+  \Big(\frac{1-x(l[k+1])^{k}}{x(l[k+1])^{k} -l[k+1]} - 1 \Big) \cdot  \prod_{i=1}^{k-1}\frac{1-x(l[k+1])^{k-i}}{x(l[k+1])^{k-i} -l[k+1]}
\end{align*}
\end{small}Hence, to prove the lemma, it is sufficient to show that  for any integer $i \le k-1$, we have 
\begin{equation}\label{inequalitytoprove}
\frac{1-x^{i}(l[k])}{x^{i}(l[k]) - l[k]} <\frac{1-x^{i+1}(l[k+1])}{x^{i+1}(l[k+1]) - l[k+1]}
\end{equation}
where $x^{i}(l[k]) > l[k]$ and $x^{i+1}(l[k+1]) > l[k+1]$. The inequality can be equivalently rewritten as
\begin{small}
\begin{align*}
&\frac{1-x^{i}(l[k])}{x^{i}(l[k]) - l[k]} <\frac{1-x^{i+1}(l[k+1])}{x^{i+1}(l[k+1]) - l[k+1]} \\
\Longleftrightarrow\quad & (1-x^{i}(l[k]))(x^{i+1}(l[k+1]) - l[k+1])  < (1-x^{i+1}(l[k+1])) (x^{i}(l[k]) - l[k]) \\
\Longleftrightarrow\quad & x^{i+1}(l[k+1]) - l[k+1] + x^{i}(l[k]) l[k+1]  < x^{i}(l[k]) - l[k] + x^{i+1}(l[k+1]) l[k] \\
\Longleftrightarrow\quad&1 - x^{i}(l[k]) - l[k+1] +  x^{i}(l[k]) l[k+1]  < 1 - x^{i+1}(l[k+1]) - l[k] +  x^{i+1}(l[k+1]) l[k]\\
\Longleftrightarrow \quad & (1-l[k+1])(1-x^i(l[k])) < (1-l[k])(1- x^{i+1}(l[k+1]))\\
\Longleftrightarrow \quad & \frac{1-l[k+1]}{1-l[k]} <  \frac{1- x^{i+1}(l[k+1])}{1-x^i(l[k])}
\end{align*}
\end{small}Recall that $l[k] = x(l[k])^k$ and $l[k+1] = x(l[k+1])^{k+1}$.  Then, the inequality becomes
\begin{equation}\label{inequalityconverted}
    \frac{1-l[k+1]}{1-l[k]} = \frac{1-x^{k+1}(l[k+1])}{1 - x^k(l[k])} < \frac{1 - x^{i+1}(l[k+1])}{1 - x^i(l[k])}
\end{equation}
%We can see that if $i = k$, then the above becomes equality. 
Let $y_k = x(l[k])$ and note that $y_{k+1} < y_k < 1$ by the monotonicity in Lemma \ref{convexmontonefunction} and $l[k+1] < l[k]$ in Observation \ref{intersectionpointobservation}. Define 
\[f(x) = \frac{1 - y_{k+1}^{x+1}}{1 - y_{k}^{x}}.\]
Take the derivative with respect to $x$ as 
\[\frac{d f(x)}{d x} = \frac{y_{k+1}^{x+1} \ln (y_{k+1}) \big( y_k^{x} - 1\big) - \big( y_{k+1}^{x+1} - 1\big)  y_k^{x}  \ln (y_k) }{\big(1 - y_k^{x} \big)^2}\]
Recall that in (\ref{inequalityconverted}), $i<k$. If  $\frac{d f(x)}{d x} < 0$, then  the inequality (\ref{inequalityconverted}) is satisfied. In other words, we want $\frac{d f(x)}{d x} <0$ to hold for any $x$. Since $y_{k+1} < y_k$, we let $a$ represent $y_{k+1}$ and $b$ represent $y_k$, and prove the following claim.

\textbf{Claim 1:} {\it Given any $1 > b> a>0$,  the following inequality holds for any $x > 0$.
\[a^{x+1} (\ln a) (b^x - 1) - (a^{x+1} - 1) b^x \ln b < 0\]}

Note that \textbf{Claim 1} has an equivalent alternative form (by replacing $x$ with $k$ and $b$ with $x$),

\textbf{Claim 2:} {\it Given any $1 >  a>0$ and any $k >0$,  the following inequality holds for any $1 >  x  > a$.
\[a^{k+1} (\ln a) (x^k - 1) - (a^{k+1} - 1) x^k \ln x < 0\]}

If \textbf{Claim 2} is true,  \textbf{Claim 1} is immediately   true.

The inequality in \textbf{Claim 2} can be further simplified to
\[\frac{a^{k+1} \ln a}{a^{k+1} - 1} \le \frac{x^k \ln x}{x^k - 1}\]
We further note that (since $a < 1$, we know $\frac{\ln a}{a^{k+1} - 1} > 0$ and $a^k > a^{k+1}$)
\[\frac{a^{k+1} \ln a}{a^{k+1} - 1} < \frac{a^{k} \ln a}{a^{k+1} - 1} < \frac{a^{k} \ln a}{a^{k} - 1}\]
Then, it is sufficient to show that 
\begin{align}
  &\frac{a^{k+1} \ln a}{a^{k+1} - 1} < \frac{a^{k} \ln a}{a^{k} - 1} = \frac{1}{k} \frac{a^{k} \ln a^k}{a^{k} - 1} = \frac{1}{k} \frac{a^{k} \cdot k \cdot \ln a}{a^{k} - 1}  \le \frac{x^k \ln x}{x^k - 1} =  \frac{1}{k} \frac{x^k \cdot k \cdot \ln x}{x^k - 1} = \frac{1}{k} \frac{x^k  \ln x^k}{x^k - 1}
\end{align}
Let $g(y) = \frac{1}{k} \frac{y\ln y}{y - 1}$. Since $1> x > a$ and $x^k > a^k$ for $k >0$, we only need to show $g(y)$ is increasing in $y \in (0, 1)$, which is true because the derivative $\frac{d g(y)}{dy}$ is larger than $0$, i.e.,
\[\frac{d g(y)}{dy} = \frac{y - \ln y - 1}{(y - 1)^2} > 0\]
Thus, \textbf{Claim 2} is true, and so is \textbf{Claim 1}. Therefore, the derivative $\frac{d f(x)}{d x} < 0$ which immediately implies (\ref{inequalityconverted}) holds. Finally, the inequality (\ref{inequalitytoprove}) holds for any $i$. Hence, we have $S(k-1, l[k]) \le S(k, l[k+1])$. The lemma is then proved.
\end{proofof}

\section{Example for Verifying the Convexity of Optimal Signal Ratio Function}\label{example_examplecomputexlandinitialnumber}
We provide one more example to explain how to computationally verify the convexity of $x(l)$ and find the initial number $K_0$. 

    Assume that the density function is $f(v) = 12v(1-v)$, supported in the bounded interval $[0, \frac{1}{2}]$. It is not hard to verify that the distribution is MHR. The expected revenue (\ref{expectedrevenuegivenrs}) is rewritten as 
\begin{align*}
R(l, x)
&=  \int_0^{\frac{1}{2}} v_2 x \int_{v_2 x}^\frac{1}{2} f(v_1) dv_1 f(v_2) dv_2 + \int_0^\frac{x}{2} (l \cdot \frac{v_1}{x} )\int_{\frac{v_1}{x}}^\frac{1}{2} f(v_2) dv_2 f(v_1) dv_1 
\end{align*}
where $x = \frac{s_2}{s_1}$. By some calculation, we have $R(l, x) = l(-\frac{3 x^3}{56} + \frac{7 x^2}{40}) + \frac{x^4}{14} - \frac{21 x^3}{80} + \frac{5x}{16}$. However, it would be hard to have an explicit formula for $x(l), l\in [0, 1]$ by optimizing $R(l, x)$. Instead, we found that it is  relatively easy to find its inverse function $l(x)$. Observe that $\frac{\partial R(l, x)}{\partial x}$ is continuous in $x$. Hence, at the optimal signal ratio $x^*$, one must have $\frac{\partial R(l, x)}{\partial x}|_{x=x^*} = 0$. We then obtain an explicit formula for $l(x)$, (a detailed derivation is in Appendix \ref{derivationoflxwithvirtualval})
\[ l(x) = \frac{\int_0^{\frac{1}{2}} v f(v)f(v x)\phi(vx) dv }{\int_0^\frac{1}{2} v f(v)f(v x)\phi(v) dv}\]
where $\varphi(x) = x - \frac{1-F(x)}{f(x)}$ is the virtual valuation. We plot $l(x)$ in Figure \ref{figobservationaaaproofinverse} (A). 

Notice that $l(x)$ is clearly continuous in $x$. Since $x(l)$ is strictly increasing for convex $x(l)$ by Lemma \ref{convexmontonefunction}, $l(x)$ (for $l \in [0, 1]$ and $x \in [x_0, 1]$, where $x_0 \in [0, 1]$ and $l(x_0) = 0$) must be a concave function if $x(l)$ is convex. We can verify that $\frac{\partial^2 l(x)}{\partial x^2} \le 0$ for $x\in [0, 1]$, which implies that $x(l)$ is a convex function (also see Figure \ref{figobservationaaaproofinverse} (A)).

To compute $K_0$, we need to determine whether $x(l)^k$ for some integer $k$ intersects with line $x=l$. In the context of $l(x)$, if $x(l)$ intersects with line $x=l$, then  it is equivalent to the inverse function $l(x^{\frac{1}{k}})$ intersecting with line $x=l$. Hence, we can find $K_0$ by enumerating $k$. The intersection point $l[K_0+1]$ then can be found by binary search, with which we can compute $x(l[K_0+1])$. See Figure \ref{figobservationaaaproofinverse} (B) for an example.

In this example, we have $K_0 = 3$, $l[K_0 + 1]\approx 0.7792$ and $x(l[K_0 + 1]) \approx 0.9395$. By Proposition \ref{theroemboudningapproximation},  we obtain $z^* \approx 0.14$. Hence, the constructed signaling scheme can achieve approximately at least $0.86$ approximation.
\begin{figure}[t]
 \centering 
\scalebox{0.26}{\includegraphics{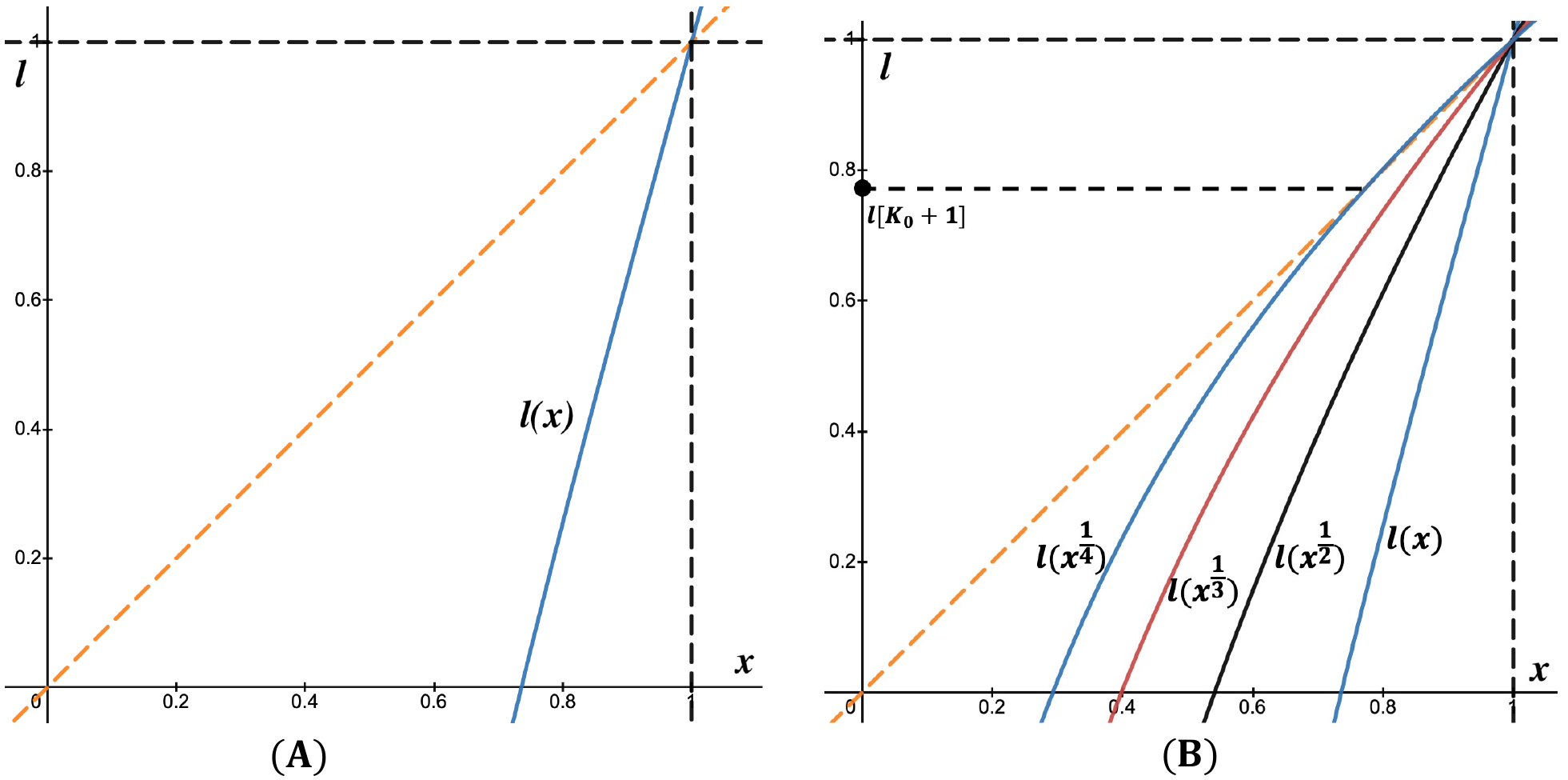}}\caption{$l(x)$ is the inverse function of  $x(l)$. $l(x)$ is concave. $K_0$ is initial number. \label{figobservationaaaproofinverse}}
\end{figure}

\section{Additional Discussions}\label{discussionconvexitymorethan2buyer}

\begin{figure}[h]
 \centering 
\scalebox{0.35}{\includegraphics{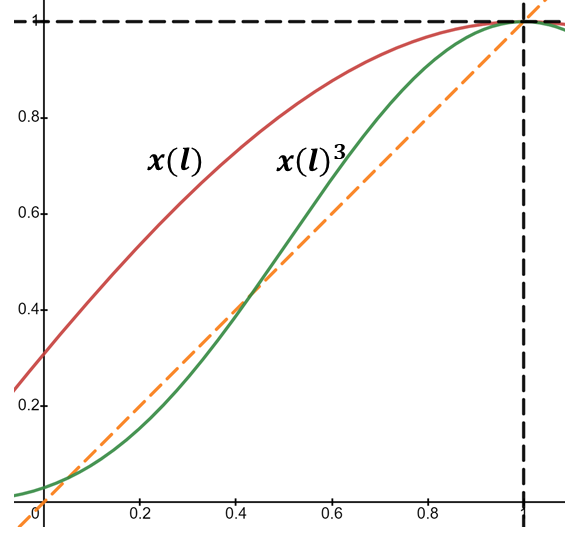}}\caption{Let $x(l)= sin \Big(\frac{4\pi}{10}x + \frac{\pi}{10} \Big)$ be a concave function. As shown, $x(l)^3$ intersects with line $x=l$ more than twice. Also, when an intersection occurs, $x(l)^3$ is not a convex function. \label{concaveexamples}}
\end{figure}

\noindent\textbf{Convexity assumption of $x(l)$.}  %In fact, Theorem \ref{theroemboudningapproximation} also holds for some concave $x(l)$ as long as the following two conditions hold: i) $x(l)^{K_0} \ge l$ and ii) $x(l)^{K_0}$ is a convex function, where $K_0$ is the initial number. 
For concave or arbitrary function $x(l)$, Proposition \ref{theroemboudningapproximation} cannot be directly generalized. One requirement of Proposition \ref{theroemboudningapproximation} is that given a convex $x(l)$, $x(l)^k$ will only intersect with line $x=l$ at most twice, and the intersection point $l[k]$ decreases in $k$. However, there exist some concave function $x(l)$ such that $x(l)^k$ may intersect with line $x=l$ more than twice ({see Figure \ref{concaveexamples}} for an example), which makes our analysis not applicable. The problem of constructing an approximate signaling scheme given a general $x(l)$ remains open.

\noindent\textbf{More Than Two Bidders in Symmetric Environments.}  
This problem is of interest and remains open. It shows a very different structure from the case of two bidders. For example, suppose $f(v)$ is an exponential density function over $[0, \infty)$. When there are only two bidders, the optimal signal ratio is $x = 1$, leading to an optimal signaling scheme revealing no information. But it is no longer the case for $3$ bidders. The expected revenue for $3$ bidders is computed as
\begin{align*}
R(r, s)
&= \frac{r_1+r_2}{\lambda}xy^2\Big(\frac{1}{(xy+y)^2} -\frac{1}{(xy+1+y)^2}\Big)\\
&\quad +\frac{r_2+r_3}{\lambda}y\Big(\frac{1}{(y+1)^2} -\frac{1}{(xy+1+y)^2}\Big)\\
&\quad +\frac{r_1+r_3}{\lambda}xy\Big(\frac{1}{(xy+1)^2} -\frac{1}{(xy+1+y)^2}\Big)
\end{align*}
where $x = \frac{s_2}{s_1}$ and $y = \frac{s_3}{s_2}$. By simple calculation, we find that $x = y = 1$ is not the optimal signal ratio. We find it is challenging to generalize the idea of the signaling scheme construction in Section \ref{sectionsconstructingsignalingscheme} to this case. A more complicated signaling scheme is expected.

\end{document}